  \providecommand\BibTeX{{%
    \normalfont B\kern-0.5em{\scshape i\kern-0.25em b}\kern-0.8em\TeX}}}
\newcommand\citein[1]{\citeauthor{#1} (\citeyear{#1})}
\setlist[enumerate]{leftmargin=*}  % Normalize left margin in enumerate.
\setlist[itemize]{leftmargin=*}        % Normalize left margin in itemize.
\newtheorem{theorem}{Theorem}
\newtheorem*{theorem*}{Theorem}
\newtheorem{lemma}{Lemma}
\newtheorem*{lemma*}{Lemma}
\newcommand{\p}[1]{\left( #1 \right)}
\newcommand{\br}[1]{\left[ #1 \right]}
\newcommand{\ceil}[1]{\left\lceil #1 \right\rceil}
\newcommand{\floor}[1]{\left\lfloor #1 \right\rfloor}
\newcommand{\ngroup}[0]{N}
\newcommand{\res}[0]{v}
\newcommand{\bigres}[0]{V}
\newcommand{\resdist}[0]{\mathcal{V}}
\newcommand{\can}[0]{c}
\newcommand{\bigcan}[0]{C}
\newcommand{\candist}[0]{\mathcal{C}}
\newcommand{\fair}[0]{\alpha}
\newcommand{\pof}[0]{\rho}
\newcommand{\budget}[0]{R}
\newcommand{\npower}[0]{h}
\newcommand{\powexp}[0]{b}
\newcommand{\cd}[0]{\cdot}
\newcolumntype{C}[1]{>{\centering\let\newline\\\arraybackslash\hspace{0pt}}m{#1}}
\begin{document}

%%
%% The "title" command has an optional parameter,
%% allowing the author to define a "short title" to be used in page headers.
\title[Fairness \&  Utilization in Allocating Resources]{Fairness and Utilization in Allocating Resources \\ with Uncertain Demand}

%%
%% The "author" command and its associated commands are used to define
%% the authors and their affiliations.
%% Of note is the shared affiliation of the first two authors, and the
%% "authornote" and "authornotemark" commands
%% used to denote shared contribution to the research.
\author{Kate Donahue}
\email{kdonahue@cs.cornell.edu}
\affiliation{%
  \institution{Department of Computer Science, Cornell University}
}

\author{Jon Kleinberg}
\affiliation{Departments of Computer Science and Information Science, Cornell University}
\email{kleinber@cs.cornell.edu}

%%
%% By default, the full list of authors will be used in the page
%% headers. Often, this list is too long, and will overlap
%% other information printed in the page headers. This command allows
%% the author to define a more concise list
%% of authors' names for this purpose.
\renewcommand{\shortauthors}{Donahue and Kleinberg}

%%
%% The abstract is a short summary of the work to be presented in the
%% article.
\begin{abstract}
Resource allocation problems are a fundamental domain in which to evaluate the fairness properties of algorithms. The trade-offs between fairness and utilization have a long history in this domain. A recent line of work has considered fairness questions for resource allocation when the demands for the resource are distributed across multiple groups and drawn from probability distributions. In such cases, a natural fairness requirement is that individuals from different groups should have (approximately) equal probabilities of receiving the resource. A largely open question in this area has been to bound the gap between the maximum possible utilization of the resource and the maximum possible utilization subject to this fairness condition. 

Here, we obtain some of the first provable upper bounds on this gap. We obtain an upper bound for arbitrary distributions, as well as much stronger upper bounds for specific families of distributions that are typically used to model levels of demand. In particular, we find --- somewhat surprisingly --- that there are natural families of distributions (including Exponential and Weibull) for which the gap is non-existent: it is possible to simultaneously achieve maximum utilization and the given notion of fairness. Finally, we show that for power-law distributions, there is a non-trivial gap between the solutions, but this gap can be bounded by a constant factor independent of the parameters of the distribution.
\end{abstract}

%%
%% The code below is generated by the tool at http://dl.acm.org/ccs.cfm.
%% Please copy and paste the code instead of the example below. 
%%
\begin{CCSXML}
<ccs2012>
<concept>
<concept_id>10003752.10003809</concept_id>
<concept_desc>Theory of computation~Design and analysis of algorithms</concept_desc>
<concept_significance>300</concept_significance>
</concept>
<concept>
<concept_id>10010147.10010178.10010199.10010201</concept_id>
<concept_desc>Computing methodologies~Planning under uncertainty</concept_desc>
<concept_significance>500</concept_significance>
</concept>
</ccs2012>
\end{CCSXML}

\ccsdesc[300]{Theory of computation~Design and analysis of algorithms}
\ccsdesc[500]{Computing methodologies~Planning under uncertainty}

%%
%% Keywords. The author(s) should pick words that accurately describe
%% the work being presented. Separate the keywords with commas.
\keywords{resource allocation, algorithmic fairness, uncertainty, Weibull distribution, power law distribution}

%% A "teaser" image appears between the author and affiliation
%% information and the body of the document, and typically spans the
%% page.

%%
%% This command processes the author and affiliation and title
%% information and builds the first part of the formatted document.
\maketitle

\section{Introduction}
The current interest in fairness properties of algorithms includes several distinct themes, one of which is the question of fairness in allocating scarce resources. Research on this question has a long history, with foundational work in the 1980s and 1990s on allocating resources in computer systems \cite{Jaffe81bottleneckflow,Chiu1989analysis,kelly1998rate, Demers:1989:ASF:75247.75248, aimd}.  Allocation problems continue to form an important topic for fairness considerations, especially as automated systems make allocation decisions in a wide range of areas that reach far beyond the original computational settings of the problem \cite{eubanks2018automating,procaccia13cakecacm}.

Recently, \citein{Elzayn_2019} considered a novel allocation problem in this style. In their formulation, individuals are divided into groups, each of which has some probability distribution of \emph{candidates} who desire the resource. There is a fixed amount of resource $\budget$ and the question is how to allocate this resource across the groups. Such a set-up arises in many applications; as one concrete motivating example for purposes of discussion, suppose that we have a set number of doctors, each with a maximum number of people they can assist, and we would like to allocate them across a set of $\ngroup$ geographically distributed regions. Each region contains a population with a potentially different probability distribution over the number of sick people in need of doctors.  We know the distribution of demand for each region, but we will only know the actual demand --- drawn from these distributions --- once we have allocated the doctors by dividing them up in some fashion across the regions.  If a community has too few doctors, some sick people will go unassisted; and if a community has more doctors than it needs for the number of sick people it has, some of the doctor's capacity to help will go unused. Formally, the expected number of candidates that receive a resource under a certain allocation is called the \emph{utilization} of that allocation. 

One allocation strategy could be to put the doctors where they are most likely to be used, with the aim of maximizing utilization. However, such an allocation could potentially \enquote{starve} certain regions, leaving them with very few doctors even though they have a non-trivial level of need. Such an allocation violates a natural definition of fairness: sick people in some regions would have a higher probability of receiving assistance than sick people in other regions, which conflicts with the premise that people are equally deserving of care from the doctors. 

In this work, we thus say that an allocation is \emph{fair} across multiple regions if a candidate who needs the resource has the same probability of receiving it, regardless of the region they belong to: the identity of the region doesn't impact the probability that those in need obtain assistance. At some points, we relax this definition slightly to say that an allocation is $\fair$-\emph{fair} if the probability that a candidate receives the resource in different regions is within a maximum difference of $\fair$, for $\fair \in [0, 1]$ across all of the regions. Note that $\fair=0$ corresponds to our initial notion of fairness, and $\fair=1$ imposes no constraint at all on the allocation.

\citein{Elzayn_2019} consider a version of this allocation problem in the scenario where the candidate probability distributions are unknown and must be learned. At each time step, a certain allocation is chosen and the feedback obtained reveals only the number of candidates who received the resource, not the true number. They adapt learning results from previous dark pool trading scenarios proposed in \citein{ganchev2009censored}. From these results, they construct algorithms that learn utilization-maximizing and $\fair$-fair utilization maximizing allocations through this type of censored feedback\footnote{They consider a related, but different definition of fairness from the one used in this work. Later sections will expand on this difference.}. They also define the \emph{Price of Fairness} (PoF) as the ratio of the utilizations of the max-utilizing allocation and the max-utilizing $\fair$-fair allocation. From an empirical dataset, they calculate the PoF for various levels of $\fair$ in practice. However, they leave largely open the following category of questions: can we obtain theoretical upper bounds on the price of fairness over different probability distributions?  These questions appear to become quite complex even when the distributions are known.

\paragraph{\bf The present work: The interaction of fairness and utilization.}
In this work we obtain bounds on the price of fairness for this family of resource allocation problems, both through general results that hold for all distributions and through stronger results that are specific to common families of distributions.  We also distinguish between two versions of the problem --- one in which we require that the allocation to each group be integer-valued, and one in which we allow resources to be allocated fractionally (or probabilistically, which yields equivalent results in expectation).

In the case of integer-valued resource allocation, we show through a constructive proof that the PoF can be unboundedly large. If we allow resources to be allocated fractionally, then we show that the PoF is bounded above by $\frac{1}{\fair}$ for arbitrary distributions and $\fair > 0$. In the case where $\fair=0$, the PoF can  be unboundedly large for fractional allocations as well.

We then show that much stronger upper bounds can be obtained for large families of  distributions that are often used to model levels of demand.  First we show that  certain families of natural distributions have PoF equal to 1, the smallest possible bound: for these distributions, there is no trade-off between fairness and maximum utilization. We show that distributions with this property include Exponential and Weibull distributions. We then consider the family of Power Law distributions for a constant number of groups; we show that these distributions have a PoF that can be strictly greater than $1$, but is always bounded above by a universal constant. Table \ref{tab:overview} in Appendix \ref{table} contains a high-level summary of results explored in this paper. 

Overall, our results reveal a rich picture in the trade-off between fairness and utilization for this type of allocation problem, through the different families of bounds for the price of fairness --- ranging from distributions where perfect fairness can be achieved while maximizing utilization, to those in which the gap is bounded by a constant, to others for which the gap can be large.  The techniques used for the analysis suggest further opportunities for reasoning about how balancing the probability of service to different groups can both constrain and also be compatible with other objectives. 

\paragraph{\bf Important limitations.} The assumptions and limitations of this model are discussed in greater detail in Section \ref{model}. However, it is worth emphasizing some of the specific choices inherent in creating this model. In particular, the notion of fairness that this paper uses is \enquote{equality in the chance of obtaining a resource, conditional on needing it}. Although we describe in the next section why this is a natural definition of fairness to use, it is certainly not the only possible definition of fairness, and alternative definitions could lead to different results. For example, a notion of fairness could take into account historical discrepancies in resources between locations and attempt to proactively rectify them. 

Another fundamental assumption of this work is that there is a fixed amount of resources available to distribute. An alternative viewpoint could be to explore ways to increase the total amount of resources so that more people in need are able to obtain help. Increasing available resources is a critical goal, and one where advocacy and political action can play a key role in bringing this about for real-world allocation problems. However, in almost all cases it is only feasible to reduce, not fully eliminate, unmet need. Thus, even after the overall level of resources have been increased, fair resource allocation will still be a problem that needs to be addressed. In this way, we view both increasing the total amount of resources and fair resource allocation as important and complementary areas of study; in most cases, neither eliminates the need for the other.

\section{Motivating example}\label{motivate}
To motivate the problem and to make some of the types of calculations more clear, we start with the following example. 
Suppose there is a remote stretch of coastline with two small villages, $A$ and $B$, each with a small number of houses. This stretch of coastline is prone to severe localized storms that lead to power outages. Each village has a slightly different probability distribution over storms - distributions that are known. In any particular week, the probability distribution over the number of houses $\bigcan$ impacted by power outages in each village is as follows: 
$$P_A(\bigcan = \can) =  \begin{cases} 
     0.6 & \can = 0 \\
     0.4 & \can = 2 \\
     0 & \text{otherwise} 
   \end{cases}
\quad 
P_B(\bigcan = \can) =  \begin{cases} 
     0.3 & \can = 0 \\
     0.7 & \can = 3 \\
     0 & \text{otherwise}
   \end{cases}
$$
In our example, assume that each week's power outage is independent of other weeks, and the villages are far enough apart that the power outages in one village aren't correlated with the power outages in the other. A regional planning committee has at its disposal 2 generators, each of which can restore power to 1 house. They are trying to decide how to allocate these generators across the two villages. The generators cannot be transferred from one village to another after a storm strikes. If there are $\res$ generators allocated to a village and $\can$ houses in need of generators, the number of houses that receive generators in that village is $\min(\res, \can)$. 

We include this example only as a motivating case for our broader model: we do \emph{not} intend that this model apply exclusively to disaster relief allocation. However, we do wish to make it clear that maximizing utilization and fairness in disaster relief resource allocation is an active area of study \cite{doi:10.1111/risa.13342}.

The regional planning committee first decides that it wishes to maximize \emph{utilization}: the expected number of houses in need of generators that receive them. 
There are three options for allocations: $\{(\res_A = 0, \res_B = 2), (\res_A = 1, \res_B = 1), (\res_A = 2, \res_B = 0)\} $ where $\res_i = j$ means that village $i$ gets $j$ generators.  Formally, utilization can be written as: 
$$U(\res_A = n_A, \res_B = n_B) = \mathbb{E}_{\bigcan \sim \candist_A}\br{\text{min}(\bigcan, n_A)} + \mathbb{E}_{\bigcan \sim \candist_B}\br{\text{min}(\bigcan, n_B)} $$ 
The utilizations across these three allocations are given in Table \ref{tab:ex_util}. The allocation that maximizes utilization is $(\res_A = 0, \res_B = 2)$. 
\begin{table}[]
\centering
\begin{tabular}{|C{1cm}|C{2.5cm}|C{2.5cm}|C{1.2cm}|}
\hline
$\{\res_A,\res_B\}$ & $\mathbb{E}_{\bigcan \sim \candist_A}\br{\text{min}(\bigcan, \res_A)}$ & $\mathbb{E}_{\bigcan \sim \candist_B}\br{\text{min}(\bigcan, \res_B)}$ & $U(\res_A, \res_B)$ \\ \hline
$\{0, 2\}$ & $0.6 \  \text{min}(0, 0) + 0.4 \  \text{min}(2, 0) = 0$                & 

$0.3\ \text{min}(0, 2) + 0.7\ \text{min}(3, 2) = 1.4$                  & $1.4$           \\\hline 
$\{1, 1\}$ & $0.6 \ \text{min}(0, 1) + 0.4 \ \text{min}(2, 1) = 0.4$                & $0.3 \ \text{min}(0, 1) + 0.7 \ \text{min}(3, 1) = 0.7$                & $1.1$               \\ \hline
$\{2, 0\}$ & $0.6\ \text{min}(0, 2) +0.4\ \text{min}(2, 2) = 0.8$                   & $0.3 \ \text{min}(0, 0) + 0.7\ \text{min}(3, 0) = 0$                   & $0.8$               \\ \hline
\end{tabular}
\caption{Utilization of resources across two groups in motivating example.}
\label{tab:ex_util}
\vspace{-6mm}
\end{table}

However, something bothers the regional committee: it feels fundamentally unfair that none of the houses in village A ever receive generators. To formalize this, we could ask what fraction of houses in need of generators obtain them, on average, and aim to select an allocation that brings this proportion between the two villages as close to equality as possible. First, we derive the relevant probability for an arbitrary village:

\begin{eqnarray*}
& & P\p{\text{house $k$ gets generator} \mid \text{house $k$ needs generator}} \\
&= &\frac{P\p{\text{house $k$ gets generator }\cap \text{ house $k$ needs generator}}}{P(\text{house $k$ needs generator})} \\ 
&= &  \frac{P(\text{house $k$ gets generator })}{P(\text{house $k$ needs generator})}
\end{eqnarray*}
We assume that there is a total number of houses $0<K<\infty$ in the village and that all houses within the same village are interchangable. The total number of houses in the village that need generators is given by $\bigcan \sim \candist$. Given that $\can$ houses need generators, there is a $\frac{\can}{K}$ probability that a randomly selected house $k$ will be in need. Given that a house is in need, the probability it obtains a generator equals $\min(\frac{\res}{\can}, 1)$. $P(\text{house $k$ needs generator})$ equals
\begin{eqnarray*}
& = & \sum_{\can = 0}^{K}P(\text{house $k$ needs generator} \mid \ \bigcan = \can ) \cd P(\bigcan = \can) \\ 
& =  & \sum_{\can=0}^K\frac{\can}{K}P(\bigcan = \can) = \sum_{\can=1}^K\frac{\can}{K}P(\bigcan = \can)
\end{eqnarray*}

Similarly, we can use the $K$ term to expand  $P(\text{house $k$ gets generator})$:
\begin{eqnarray*}
&=& P(\text{$k$ gets generator} \mid \text{$k$ needs generator})P(\text{$k$ needs generator })\\ 
&& + \  P(\text{$k$ gets generator} \mid \text{$k$ does NOT need generator}) \cd  \\&& P(\text{$k$ does NOT need generator })
\end{eqnarray*}
$P(\text{house $k$ gets generator} \mid \text{house $k$ does NOT need generator}) = 0$. We expand the remaining term by conditioning on $\bigcan$. The sum starts from $\can = 1$ because if $\can=0$ there are 0 houses in need of generators. 
\begin{eqnarray*}
\sum_{\can=1}^K && P(\text{$k$ gets generator} \vert \text{$k$ needs generator},\  \bigcan = \can) \cd  \\ && P(\text{$k$ needs generator } \vert \bigcan = \can)P(\bigcan = \can) \\ 
 &=& \sum_{\can=1}^K\min\p{\frac{\res}{\can}, 1}\cd \frac{\can}{K}\cd P(\bigcan = \can)
\end{eqnarray*}

Combining these results gives: 

\begin{eqnarray*}
\frac{P(\text{house $k$ gets generator })}{P(\text{house $k$ needs generator})} & = & \frac{\sum_{\can=1}^K\min\p{\frac{\res}{\can}, 1}\cd \frac{\can}{K}\cd P(\bigcan = \can)}{\sum_{\can=1}^K\frac{\can}{K}\cd P(\bigcan = \can)} \\
& = & \frac{\sum_{\can=1}^K\min\p{\frac{\res}{\can}, 1}\cd \can \cd P(\bigcan = \can)}{\sum_{\can=1}^K\can \cd P(\bigcan = \can)} \\
& = & \frac{\sum_{\can=1}^K\min(\res, \can)P(\bigcan = \can)}{\sum_{\can=1}^K \can \cd P(\bigcan = \can)} \\
& = & \frac{\mathbb{E}_{\bigcan \sim \candist}\min(\res,\bigcan)}{\mathbb{E}_{\bigcan \sim \candist}\bigcan} \\
& =: & q(\res_i, \candist_i)
\end{eqnarray*}

where in the last step we have defined the probability of receiving the resource, conditional on needing it, as $q(\res_i, \candist_i)$, where $i$ indexes the village. Note that this definition differs from \citein{Elzayn_2019}: that definition calculated the probability of candidates receiving the resource weighted by each time period. By contrast, our definition is based on the per-person probability of receiving the resource, which we believe is a more natural definition. 

One goal might be to bring these probabilities across the villages to be as close as possible to each other. How does each allocation considered in Table \ref{tab:ex_util} do under this definition? For reference, $\mathbb{E}_{\bigcan \sim \candist_A}\bigcan = 0.6 \cd 0 + 0.4 \cd 2 = 0.8$ and $\mathbb{E}_{\bigcan \sim \candist_B}\bigcan = 0.3 \cd 0 + 0.7\cd 3 = 2.1$. The results are shown in Table \ref{tab:ex_fair}. 

\begin{table}[]
\centering
\begin{tabular}{|C{1.1cm}|C{1.3cm}|C{1.3cm}|C{1.6cm}|C{1.2cm}|}
\hline
$\{\res_A, \res_B\}$& $q(\res_A, \candist_A)$ & $q(\res_B, \candist_B)$        & $\vert q(\res_A, \candist_A) - q(\res_B, \candist_B) \vert $ & $U(\res_A, \res_B)$\\[3pt] \hline
$\{0, 2\}$ & $\frac{0}{0.8} = 0$     & $\frac{1.4}{2.1} = 0.\overline{6}$ & $0.\overline{6}$ & 1.4                                        \\[3pt] \hline
$\{1, 1\}$ & $\frac{0.4}{0.8} = 0.5$ & $\frac{0.7}{2.1} = 0.\overline{3}$ & $0.1\overline{6}$ & 1.1                                         \\[3pt] \hline
$\{2,0\}$ & $\frac{0.8}{0.8} = 1$   & $\frac{0}{2.1} = 0$                & $1$    & 0.8                                                   \\[3pt] \hline
\end{tabular}
\caption{Probability of receiving resources across two groups in motivating example. Utilization included for reference.}
\label{tab:ex_fair}
\vspace{-10mm}
\end{table}

Given these numbers, the regional committee could look at the utilization and fairness of each allocation and make their own decision about how resources should be allocated. In particular, they might decide that they want the $q(\res_i, \candist_i)$ fractions between each village to be within a certain difference $\fair$ and pick the allocation with maximal utilization subject to those constraints. It is this scenario that much of this paper will consider: comparing max-utilization with max-utilization subject to an $\fair$-fairness constraint. 

\section{Applications and formal model}\label{model}

Our model is best described as a two-stage stochastic problem in which resource allocation decisions are made in the first stage, and demand is drawn probabilistically in the second stage. We envision our work being used in the following way. Suppose that the group identity is associated with some decomposition of the population into groups (whether by geographic location, or socioeconomic status, or attributes like race or ethnicity), such that we wish for resources to be fairly distributed across these groups. In this case, the PoF bound could be used to provide a general sense of the maximum trade-off between utilization and fairness that a decision-maker might need to face. The purpose of this paper is not to provide guidance for which allocation decision-makers should choose, but rather to provide general bounds on when this procedure will result in a large trade-off of utilization and fairness, and when the trade-off will be smaller. 

Much of this paper will focus on cases with many candidates and many resources, which in the limit will be be approximated by continuous probability distributions. This approximation is common: in many application areas in social sciences, it is known that certain phenomena are well-described by power law and exponential distributions, even when their exact parameters might be unclear \cite{power_law_emp} \cite{Clauseteaao3580}. We have focused much of our analysis on these distributions with the goal that our results can be maximally useful. 

\subsection{Model detail}
In this section, we make the motivating example more abstract. We assume a set of $\ngroup$ groups. Each group $i$ has a distribution $\candist_i$ over the number of candidates $\bigcan$ that are present during a particular unit of time.  We assume that 
$$\mathbb{E}_{\bigcan \sim \candist_i}\bigcan > 0 \ \forall i\in [\ngroup] \text{ and } \candist_i \perp\!\!\!\perp \candist_j \quad \forall i \ne j$$
and that each time period's number of candidates is independently and identically distributed. Much of this paper will focus on the continuous probability distribution case, but the results obtained largely do not depend on this factor. 

In the model, we assume we have a number $\budget \in \mathbb{N}$ of \emph{resources} that we can allocate across these groups. We assume that allocations must be selected before the number of candidates is realized each time period. If $\res$ resources and $\can$ candidates are present a given group, then $\min(\res, \can)$ candidates will receive the resource. If $\res > \can$, some resources will go unused, and if $\res < \can$, some candidates will go without resources. We assume that the only distinguishing characteristic candidates have is which group they come from: otherwise, they are interchangeable. 

We have two different objectives: \emph{utilization} and \emph{fairness}.
\begin{eqnarray*}
\text{Utilization: }U(\textbf{\res}, \{\candist_i\}) \coloneqq&& \sum_{i=1}^{\ngroup} \mathbb{E}_{\bigcan \sim \candist_i} \min(\bigcan, \res_i) \\
\text{Fairness: } Q(\textbf{\res}, \{\candist_i\}) \coloneqq&& \max_{i,j} \left \vert q(\res_i, \candist_i) - q(\res_j, \candist_j) \right\vert   \\ 
  && \text{ with } q(\res, \candist) \coloneqq \frac{\mathbb{E}_{\bigcan \sim \candist}\min(\bigcan, \res)}{\mathbb{E}_{\bigcan \sim \candist}\bigcan}
\end{eqnarray*}
The fairness term bounds the difference between groups in the probability of a candidate receiving a resource. An allocation ${\bf{\res}}$ is $\fair$-fair if the fairness objective has value $\leq \fair$, so all $q(\res_i, \candist_i)$ fractions are within $\fair$ of each other. Next, we define the Price of Fairness:
\begin{eqnarray*}
\text{PoF}(\fair) & =& 
 \frac{\text{Unconstrained utilization}}{\text{Maximum utilization that is $\fair$-fair}} \\ 
& = & \frac{\max_{\textbf{\res}} U( \textbf{\res}, \{\candist_i\}) \text{ s.t } \sum_i \res_i = \budget}{\max_{\textbf{\res}} U(\textbf{\res}, \{\candist_i\}) \text{ s.t } Q(\textbf{\res}, \{\candist_i\}) \leq \fair \text{ and } \sum_i \res_i = \budget} 
\end{eqnarray*}

If there exists an allocation that simultaneously maximizes utilization and is $\fair =0$ fair, then the PoF is 1 and there is no tension between our two objectives. Otherwise, the PoF is $>1$ and we will need to trade off between these objectives in order to make a decision about which allocation to choose. We will use the standard notation of having $f(x)$ represent the probability density function (or probability mass function) of a distribution and $F(x)$ represent the cumulative distribution function. 

\paragraph{\bf Assumptions.}
There are a few assumptions in this model that are important to recognize. To start, we require that any allocation always use its entire budget $\budget$. It might be possible that an $\fair$-fair allocation might prefer to use less than its entire budget in order to achieve fairness: this is a case that our model specifically disallows. Additionally, it would often be the case that increasing the budget $\budget$ would make it easier to achieve any given fairness metric. We prove in Lemma \ref{increase_res} that increasing the budget only improves fairness and utilization metrics, but overall the area of increasing or decreasing the available resources is not the focus of this paper.

Secondly, our definition of fairness revolves around the maximum difference in probability of receiving the resource. A different metric could have used the average difference in probability, for example. A more ambitious multi-time period extension could take into account historical differences in resource allocation and have a fairness metric that attempts to rectify past inequalities. While these (and other) different models might also be valid and interesting future avenues to explore, they would all be built on the necessary foundation this paper provides. The rest of this paper will use the assumptions and model listed above.

\subsection{Application areas}
Despite the motivating example in Section \ref{motivate}, the focus of this paper is not to model any one specific application such as disaster relief aid. Instead, we view this model as representing the core features of allocation problems in a variety of contexts. A few of these are listed below. 
\begin{itemize}
    \item \textbf{Doctors}: As described in the introduction, one example could be allocating doctors across regions with different probabilities of ill patients. It is worth noting that the presence of doctors would likely influence the level of illnesses. 
    \item \textbf{Blood banks}: Donated blood needs to be allocated across different hospitals with potentially different needs for blood. This area has the added complication that blood donations are not interchangeable: candidates can only receive a certain subset of the available types of blood. 
    \item \textbf{Schools}: A public school district might try to allocate certain resources (teachers, computers, specialized classes) across schools of different size, and therefore different distributions of need. 
\end{itemize}
We have noted that in each case, the potential application area would have key features that are not already captured in the above model. This is intentional because the goal of this paper is to provide broad results for a broad class of problems. Both \citein{Elzayn_2019} and \citein{ensign18a} use related models in the contexts of allocating police officers to districts with differing crime rates. While this is also an area of application for this model, it raises additional issues that we do not model here, such as the likelihood that the presence of police officers will have an impact on the rate of crime. 

Fairness in resource allocation work has been studied in various application areas. For example, \cite{Jaffe81bottleneckflow,Chiu1989analysis,kelly1998rate, Demers:1989:ASF:75247.75248} study the issue of fair allocation of bandwidth, speed of service, and buffer space in networking. These works differ from ours in the selected model: it is assumed that actors can modify the amount of resources requested at will, which differs from our probabilistic resource generation method. \citein{doi:10.1111/risa.13342} studies the specific case of disaster relief aid. In this case, it is assumed that the amount of need is determined before the allocation decision needs to be made, which again differs from our model.

\section{General bounds on Price of Fairness}\label{generalbounds}
One natural question we might ask is, \enquote{What is the maximum amount of utilization we give up by requiring fairness?} This amount depends on the set of demand distributions $\{\candist_i\}$ and our budget $\budget$. If we have these quantities, we can directly calculate the PoF by first calculating the unconstrained max-utilization, then the $\fair$-fair max-utilization. \citein{Elzayn_2019} provides algorithms to calculate both of these quantities (so long as the allocations are integers). However, in many cases we might not know the specific $\{\candist_i\}$ in advance, but we might still want to have a bound on the PoF. In this section, we explore general bounds on the PoF when we do not know the candidate distributions in advance. 

\subsection[Discrete allocation]{Discrete resource allocation has unbounded PoF}
Suppose we allocate resources in integer units: for example, one generator at a time. In this formulation, we find that \emph{the PoF is unbounded: achieving $\fair$ fairness could require giving up arbitrarily large amounts of utilization}. This proof is presented in detail in Appendix \ref{supp}. A rough sketch is below. 

Suppose we are given a desired $\fair < 1$ and a maximum PoF $\pof$ we are willing to accept. (We require $\fair <1$ because if $\fair = 1$ the fairness constraint has no effect, so the PoF will be 1 always.) Then, it is possible to construct a set of candidate distributions $\{\candist_i\}$ and a set of resources $\budget$ over this such that the PoF is $> \pof$. The proof involves creating two different candidate distributions with parameters depending on $\fair$ and $\pof$ and then showing that the PoF of allocating resources across those groups is always $>\pof$.

\subsection[Fractional allocation $1/\fair$]{PoF bounded by $1/\fair$ under fractional allocation}
A critical reader might wonder about the PoF if we were allowed to allocate resources fractionally. Such allocations would offer more flexibility and might allow for lower PoF. In the motivating example of allocating generators across villages, for example, this might correspond to allocating 0.7 generators to Village A and 1.3 generators to Village B. For divisible resources (such as potable water stores), this type of allocation makes intuitive sense. For indivisible resources, like generators, we can view the allocation as \emph{probabilistic}: An allocation of $\res = 1.3$, for example, could be viewed as a 70\% chance of $\res= 1$ and a 30\% chance of $\res = 2$. 

The expected utilization is equal to $\mathbb{E}_{\bigres \sim \resdist}\mathbb{E}_{\bigcan \sim \candist}\text{min}(\bigres, \bigcan)$. Both the \enquote{deterministic, fractional} and \enquote{probabilistic, integer-valued} interpretations give in the same expected utilization (proven in Appendix \ref{supp}). We will switch between these two interpretations (probabilistic and deterministic) depending on which makes more sense for the problem at hand. 

Given the ability to allocate resources fractionally, we will find that the PoF is no longer unbounded. In particular, the PoF is upper bounded by $\frac{1}{\fair}$. Note that this proof relies on an algorithm to calculate a max-utilizing allocation in the fractional case, as well as other related proofs, all present in Appendix \ref{algorithms}. 

\begin{theorem}
Given a set of candidate distributions $\{\candist_i\}$, level of resources $\budget$, and ability to allocate resources fractionally, it is possible to find an $\fair$-fair allocation with PoF at most $\frac{1}{\fair}$, though this allocation may not use all of the resources.
\end{theorem}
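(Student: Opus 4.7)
The plan is to start from any max-utilizing fractional allocation $\res^*$ (whose existence is established by the algorithm in Appendix \ref{algorithms}) and perturb it downward into an $\fair$-fair allocation $\res'$ that uses at most $\budget$ resources, losing at most a factor of $\fair$ in utilization. The key observation motivating the construction is the identity $U_i(\res_i) = q(\res_i, \candist_i) \cdot \mathbb{E}_{\bigcan \sim \candist_i}\bigcan$, which ties per-group utilization linearly to the per-person probability $q$. So if I can shave each $q_i$ down to at most $\fair$ while preserving a constant fraction of its original value, the utilization ratio will follow.

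Concretely, let $q_i^{*} = q(\res_i^{*}, \candist_i) \in [0,1]$. Under fractional allocation, $q(\cdot, \candist_i)$ is continuous and non-decreasing on $[0, \infty)$, ranging over $[0,1]$, so for each group $i$ I can pick $\res_i' \le \res_i^{*}$ satisfying $q(\res_i', \candist_i) = \min(q_i^{*}, \fair)$. Two checks follow immediately: (i) $\sum_i \res_i' \le \sum_i \res_i^{*} = \budget$, so the allocation is feasible (possibly with slack, which is why the theorem allows leaving some budget unused); and (ii) every $q(\res_i', \candist_i) \in [0, \fair]$, so $Q(\res', \{\candist_i\}) \le \fair$.

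For the utilization comparison, I would show $U(\res') \ge \fair \cdot U(\res^{*})$ group by group. For each $i$, the ratio
\[
\frac{U_i(\res_i')}{U_i(\res_i^{*})} \;=\; \frac{\min(q_i^{*}, \fair)}{q_i^{*}}
\]
is at least $\fair$: in the case $q_i^{*} \le \fair$ the ratio is $1 \ge \fair$, while in the case $q_i^{*} > \fair$ it equals $\fair/q_i^{*} \ge \fair$ since $q_i^{*} \le 1$. (The case $q_i^{*} = 0$ is vacuous as $U_i(\res_i^{*}) = 0$.) Summing across $i$ gives $U(\res') \ge \fair \cdot U(\res^{*})$, and dividing yields $\text{PoF}(\fair) \le 1/\fair$.

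I do not expect a serious obstacle; the main subtlety is that the argument genuinely uses fractional allocation in two places: continuity of $q(\cdot, \candist_i)$ is needed to solve $q(\res_i', \candist_i) = \min(q_i^{*}, \fair)$ exactly, and this aligns with the preceding subsection showing that discrete allocation has unbounded PoF. A secondary subtlety is that $\res'$ may strictly under-use the budget; attempting to redistribute the slack to restore $\sum_i \res_i' = \budget$ could raise some $q(\res_i', \candist_i)$ above $\fair$ and violate fairness, which is exactly why the theorem statement relaxes the equal-budget requirement rather than insisting on it.
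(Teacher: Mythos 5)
Your proposal is correct and follows essentially the same route as the paper: reduce the allocation of any group whose service probability exceeds $\fair$ until $q(\res_i',\candist_i)=\fair$ (using continuity and monotonicity of $q$), observe the result is $\fair$-fair and uses at most the original budget, and bound the utilization loss via $U_i = q_i\cdot\mathbb{E}_{\bigcan\sim\candist_i}[\bigcan]$. The only cosmetic difference is that you bound the ratio group-by-group (with the unchanged groups contributing ratio $1$), whereas the paper splits the groups into the sets $A$ and $B$ and does the same accounting in one aggregated fraction.
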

\begin{proof}
First, we find a max-utilization allocation and  divide the groups into two categories:
$$A = \{i \ \vert \ q(\res_i, \candist_i) > \fair\} \quad B = \{i \ \vert \ q(\res_i , \candist_i) \leq \fair\}$$
If $A = \varnothing$, then the max-utilization allocation is already $\fair$-fair and PoF = $1 < \frac{1}{\fair}$. Otherwise, modify the allocation $\res_i \rightarrow \res_i'$ for each $i \in A$ so $q(\res_i', \candist_i) = \fair $ exactly. We are able to do this because we know that $q(\res_i, \candist_i)$ is continuous (proven in Appendix \ref{supp}) and increases monotonically from 0 to 1.

This new allocation is $\fair$-fair: $q(\res_i, \candist_i) \leq \fair $. It uses at most the same amount of resources that the optimal allocation used so it is achievable. The only case where it uses exactly the same amount of resources is when $A$ is the empty set. For $i \in A$, we know that: 
$$q(\res_i', \candist_i) = \fair = \frac{\mathbb{E}_{\candist}\br{\text{min}(\bigcan, \res_i')}}{\mathbb{E}_{\candist}[\bigcan]}$$ 
$$ \Rightarrow \quad \mathbb{E}_{\candist}\br{\text{min}(\bigcan, \res_i')} = \fair \cd \mathbb{E}_{\candist}[\bigcan] \geq \fair \cd \mathbb{E}_{\candist}\br{\text{min}(\bigcan, \res_i)}$$
This shows that the utilization of the groups in $A$ is greater than or equal to $\fair$ multiplied by the utilization of the unconstrained maximum utilization. The utilization of groups in $B$ is identical to their unconstrained utilization because their allocation has not been changed. Then:

\begin{eqnarray*}
PoF  &=& \frac{\sum_{i \in A}\mathbb{E}_{\candist}\text{min}(\bigcan, \res_i) + \sum_{i \in B}\mathbb{E}_{\candist} \text{min}(\bigcan, \res_i)}{\sum_{i \in A}\mathbb{E}_{\candist}\text{min}(\bigcan, \res_i') + \sum_{i \in B}\mathbb{E}_{\candist} \text{min}(\bigcan, \res_i)}  \\ 
& \leq &   \frac{\sum_{i \in A}\mathbb{E}_{\candist}\text{min}(\bigcan, \res_i) + \sum_{i \in B}\mathbb{E}_{\candist} \text{min}(\bigcan, \res_i)}{\fair\cd\sum_{i \in A}\mathbb{E}_{\candist}\text{min}(\bigcan, \res_i) + \sum_{i \in B}\mathbb{E}_{\candist} \text{min}(\bigcan, \res_i)} \\ 
& \leq & \frac{\sum_{i \in A}\mathbb{E}_{\candist}\text{min}(\bigcan, \res_i)}{\fair\cd\sum_{i \in A}\mathbb{E}_{\candist}\text{min}(\bigcan, \res_i)} = \frac{1}{\fair}
\end{eqnarray*}
where the inequality in the second to last step comes from the fact that $B$ could be the empty set.
\end{proof}
Note that we have required that all allocations use all of the resources, so we will need to convert the allocation calculated above into one with $\sum_i \res_i = \budget$. The supporting lemma below (proved in the Appendix) provides this final step. 
\begin{lemma}
If there exists a fractional allocation ${\bf{\res}}$ over candidate distributions $\{\candist\}$ such that $\sum_i \res_i = \bigres' < \budget$ that is $\fair$-fair, there also exists an allocation ${\bf{\res}}'$ that is $\fair$-fair and has utilization at least as large as ${\bf{\res}}$, but additionally has the property that $\sum_i \res_i = \budget$ . 
\end{lemma}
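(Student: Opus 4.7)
The plan is to distribute the leftover $\Delta := \budget - \bigres' > 0$ by ``water-filling'' on the $q$-values, so that the largest $q$ never rises while the smallest $q$ can only rise --- which preserves (and can only tighten) the $\fair$-fairness gap. The only tools needed are that $q(\res_i,\candist_i)$ is continuous and monotone increasing in $\res_i$ (established in the appendix referenced in the excerpt) and that $\mathbb{E}_{\bigcan\sim\candist_i}[\min(\bigcan,\res_i)]$ is non-decreasing in $\res_i$, so that adding resources never hurts utilization.

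Concretely, parameterize the construction by a threshold $q^* \in [\min_i q(\res_i,\candist_i),\, 1]$. For each group $i$, set $\res_i'(q^*) := \res_i$ if $q(\res_i,\candist_i) \geq q^*$, and otherwise let $\res_i'(q^*)$ be the unique value satisfying $\res_i'(q^*) \geq \res_i$ and $q(\res_i'(q^*),\candist_i) = q^*$. Write $T(q^*) := \sum_i \res_i'(q^*)$. By continuity and monotonicity of $q$, $T$ is continuous and non-decreasing, with $T(\min_i q(\res_i,\candist_i)) = \bigres'$. If some $q^* \leq 1$ gives $T(q^*) = \budget$, declare the new allocation to be $\res_i'(q^*)$ for each $i$. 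Every new $q$-value then lies in $[q^*,\, \max_j q(\res_j,\candist_j)]$, so the range of $q$-values has only shrunk, which means $Q(\cdot,\{\candist_i\}) \leq \fair$ continues to hold. Utilization is weakly larger because each $\res_i'(q^*) \geq \res_i$.

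The remaining case is $T(q^*) < \budget$ for every $q^* \leq 1$; this forces every group to have bounded support, to attain $q = 1$ at some finite $\res_i^{\mathrm{sat}}$, and to satisfy $\sum_i \res_i^{\mathrm{sat}} < \budget$. In that regime, setting $\res_i' := \res_i^{\mathrm{sat}}$ makes every $q$ equal to $1$ and every $\mathbb{E}[\min(\bigcan,\res_i')] = \mathbb{E}_{\candist_i}[\bigcan]$, after which the residual $\budget - \sum_i \res_i^{\mathrm{sat}}$ can be distributed across groups arbitrarily without changing any $q$ or utilization, so both fairness and utilization are preserved. The step that will require the most care is showing $T$ is continuous so that $T(q^*) = \budget$ is solvable whenever feasible: this reduces to the continuity of $q(\cdot,\candist_i)$ together with the observation that any flat interval of $q$ corresponds to a continuous range of admissible $\res_i$, which only makes it easier, not harder, to hit the target total $\budget$.
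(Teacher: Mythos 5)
Your proposal is correct, and it proves the same statement by a cleaner, one-shot version of the idea the paper uses. The paper's proof is an iterative staged construction: first raise every group up to $\beta^* = \max_i q(\res_i,\candist_i)$, then repeatedly lift all groups together in increments $\fair_1 < \fair$ (so that each intermediate allocation is itself $\fair$-fair), stopping ``partially'' the moment the budget runs out, and only dumping leftover resources arbitrarily once every group has $q = 1$. Your version replaces this bookkeeping with a single threshold parameter $q^*$ and an intermediate-value argument on $T(q^*) = \sum_i \res_i'(q^*)$: because you jump directly to the final allocation, you never create a temporarily enlarged gap and hence never need the $\fair_1$-increment device or an argument that partially completed steps remain fair --- fairness follows at once from the fact that the new $q$-values sit in $[q^*, \max_j q(\res_j,\candist_j)]$ (or are all equal to $q^*$). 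What the paper's route buys in exchange is that budget exhaustion is handled trivially (just stop), whereas you must argue that $T$ actually attains the value $\budget$; you correctly identify the one delicate point there, namely flat stretches of $q$ (where $f_i$ vanishes), and your resolution --- that on a flat stretch any allocation in the preimage has the same $q$-value, so the jump in the minimal preimage can be bridged while keeping $q_i = q^*$ --- is sound, and is no more informal than the paper's own appeal to ``achievable by continuity of $q$.'' Your treatment of the saturation case ($T(q^*) < \budget$ for all $q^* \le 1$, so every group reaches $q=1$ at a finite allocation and the residual is distributed arbitrarily) coincides with the paper's final step.
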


Taken together, these two lemmas show that, for every optimal allocation, we can find an allocation that is $\fair$-fair, uses all of the resources $\budget$, and has PoF less that $\frac{1}{\fair}$. This tells us that the PoF for the max-utilization $\fair$-fair allocation could be at most $\frac{1}{\fair}$ for $\fair>0$. 

\subsection[Fractional allocation unbounded]{PoF unbounded for $\fair = 0$ under fractional allocation}
However, this $\frac{1}{\fair}$ bound is undefined for $\fair=0$. The proof below shows that we can achieve arbitrarily high PoF with only $n=2$ groups, even in the case where we increase the budget $\budget$ by a constant factor in the $\fair=0$ case. 
\begin{lemma}
Suppose that in the case requiring $\fair =0$ allocation, the budget $\budget$ is multiplied by $k\geq 1$. For any $\pof > 1$, we can create a set of $n=2$ groups such that requiring $\fair=0$ fairness involves a PoF $> \pof$, assuming fractional allocation of resources.
\end{lemma}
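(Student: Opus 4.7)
\emph{Plan.} My plan is to prove the lemma by exhibiting, for each desired $\pof>1$, an explicit two-group instance whose price of fairness exceeds $\pof$ even when the $\fair=0$ allocation is given a budget of $k\budget$ for any fixed $k\ge 1$. The instance pairs a deterministic group with a ``spiky'' group whose demand is very rare but very large; this makes the probability-of-service functions $q_i$ rise at drastically different rates in the two groups, so forcing $q_1=q_2$ exactly has to spend most of the budget on the spiky group, where it is wasted whenever no candidate shows up.

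\emph{Construction and calculation.} I take $\ngroup=2$ with $\candist_1$ deterministic at $1$ and $\candist_2$ equal to $M$ with probability $1/M$ and $0$ with probability $1-1/M$, where $M$ is a large parameter to be chosen in terms of $k$ and $\pof$. Then $\mathbb{E}_{\candist_1}[\bigcan]=\mathbb{E}_{\candist_2}[\bigcan]=1$ and
\[
q_1(\res_1)=\min(\res_1,1),\qquad q_2(\res_2)=\min(\res_2,M)/M.
\]
Set $\budget=1$. For the unconstrained optimum at budget $\budget$, group $1$ has marginal utility $1$ per unit while group $2$ has only $1/M$ per unit, so the optimum is $\res_1=1,\res_2=0$ with utility $1$. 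For the $\fair=0$ allocation at budget $k\budget=k$, the regime with both $q$-values in $(0,1)$ forces $\res_2=M\res_1$, giving $\res_1=k/(M+1)$ and utility $2k/(M+1)$. Therefore the price of fairness equals $(M+1)/(2k)$, which exceeds any prescribed $\pof$ by choosing $M>2k\pof-1$.

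\emph{Main obstacle.} The only real subtlety is ruling out the alternative fairness regime $q_1=q_2=1$, since my argument would be wrong if that regime turned out to give higher utility. But $q_1=1$ needs $\res_1\ge 1$ and $q_2=1$ needs $\res_2\ge M$, which together consume budget at least $M+1$; as long as $M>k-1$ this regime is infeasible, so the $(0,1)$-regime computation above is the unique $\fair=0$ allocation. Since $\pof>1$ and $k\ge 1$ already imply $2k\pof-1\ge k-1$, the single condition $M>2k\pof-1$ covers both the feasibility check and the PoF bound, finishing the proof.
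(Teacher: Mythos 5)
Your proof is correct and follows essentially the same strategy as the paper's: pair a "concentrated" group with a "spiky" group whose demand is rare but large, observe that $q_i(\res_i)=\res_i/n_i$ in the relevant regime so that exact fairness forces a proportional split wasting almost the whole budget on the spiky group, and conclude $\mathrm{PoF}=(M+1)/(2k)>\pof$ for $M>2k\pof-1$ (the paper's version is identical with $n_1$ in place of $M$, just with a slightly different parametrization of the two two-point distributions). Your normalization of both means to $1$ and the explicit feasibility check ruling out the $q_1=q_2=1$ regime make the argument marginally cleaner, but the idea and the final bound coincide.
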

\begin{proof}
We create 2 groups, one with distribution $\candist_1$ and one with distribution $\candist_2$, described below:
\begin{eqnarray*}
\bigcan \sim \candist_{1} && P_1(\bigcan = \can) = 
\begin{cases} 
       1 - p_1 & \can = 0 \\ 
       p_1 & \can = n_1 \\
       0 & \text{otherwise} \\
\end{cases}\\
\bigcan \sim \candist_{2} && P_2(\bigcan = \can) = 
\begin{cases} 
       1 - p_2 & \can = 0\\ 
       p_2 & \can = n_2 \\
       0 &\text{otherwise} \\
\end{cases}
\end{eqnarray*}

Set $0 < p_1 < 1$, $p_2 = \frac{p_1}{n_1}$, $n_2 = n_1^2$, and $n_1 > 2 k \cd  \pof -1$. Finally, set $\budget = n_1$. 

Because $p_1> p_2$, candidates are more likely to be present in Group 1, so the optimal allocation is $\res_1 = n_2, \res_2 = 0$. 

The $\fair=0$ fair allocation has $q(\res_1, \candist_1) = q(\res_2, \candist_2)$. In the case that $\res_i \leq n_i$, the $q$ probabilities can be calculated as follows: 
$$q(\res_i, \candist_i) = \frac{\mathbb{E}_{\candist_i}\min(\res_i, \bigcan)}{\mathbb{E}_{\candist_i}\bigcan} = \frac{\res_i\cd p_i}{n_i \cd p_i} = \frac{\res_i}{n_i}$$
As stated above, this equation only holds for $\res_i \leq n_i$. If $\res_i > n_i$, $q(\res_i, \candist_i) = 1$ always. However, we must have $\res_i \leq n_i$ in the $\fair = 0$ fair allocation: we have insufficient resources to have $q(\res_1, \candist_1) = q(\res_2, \candist_2)= 1$. 
This implies that: 
$$\frac{\res_1}{n_2} = \frac{\res_2}{n_2} \quad \res_1 + \res_2 = k \cd \budget = k \cd n_1$$
where we have incorporated the fact that the budget is $k$ times larger. 
The solution to this system of equations is:
$$\res_1 = \frac{k \cd n_1}{n_1 + n_2}\cd n_1 \quad \res_2 = \frac{k \cd n_1}{n_1 + n_2} \cd n_2 $$
This gives us a PoF as follows:
\begin{eqnarray*}
PoF &=& \frac{p_1 \cd n_1}{\frac{k \cd n_1}{n_1 + n_2}\cd n_1 \cd p_1 + \frac{k \cd n_1}{n_1 + n_2}\cd n_2 \cd p_2} = \frac{p_1}{\frac{k \cd n_1}{n_1 + n_2}\cd p_1 + \frac{k \cd n_2}{n_1 + n_2} \cd p_2}\\
&=& \frac{p_1}{\frac{k \cd n_1}{n_1 + n_1^2}\cd p_1 + \frac{k \cd n_1^2}{n_1 + n_1^2} \cd p_2}
\end{eqnarray*}
where we have used the fact that $n_2 = n_1^2$. Simplifying further gives:
$$ PoF= \frac{p_1}{\frac{k \cd n_1}{n_1+n_1^2}\cd(p_1 + n_1 \cd p_2)} = \frac{p_1}{\frac{k}{1+n_1}\cd(p_1 + n_1 \cd p_2)} = \frac{p_1 \cd (1 + n_1)}{k \p{p_1 + n_1 \cd p_2}}$$
Using the facts that $p_2 = \frac{p_1}{n_1}$ and $n_1 > 2 k \cd \pof -1 $, the bound becomes:
$$PoF = \frac{p_1 \cd (1 + n_1)}{2k \cd p_1} = \frac{1 + n_1}{2 k} > \frac{1 + 2 k \cd \pof - 1}{2 k } = \pof $$
as desired. 
\end{proof} 
Note that $p_1$ is left as a free parameter. In particular, it could be set to something decreasing in $\pof$, such as $\frac{1}{\pof}$. This demonstrates that even in the case of tail probabilities decreasing in $\pof$, it is still possible to create an example with PoF greater than $\pof$.

\section[PoF $=1$]{PoF equal to 1 for certain families of distributions}\label{pof1}
The previous section showed that with fractional allocation, the upper bound on the Price of Fairness is $\frac{1}{\fair}$. This could still be a large price to pay: if the desired level of fairness is $\fair = 0.1$, this could mean we might be obliged to reduce utilization by 90\% in order to achieve the desired level of equity. Depending on the context, this could be a very high tradeoff. However, the upper bound obtained in the previous section was derived \emph{without any reliance on the characteristics of the candidate distributions} $\{\candist\}$. In this section, we will show that for certain reasonable distributions, the Price of Fairness is much lower. In some cases, the max-utilization allocation is already $\fair=0$ fair!

\subsection[Exponential example]{Illustrative example for exponential distribution}
First, we will work out an example illustrating this phenomenon for $\ngroup = 2$ groups, both with exponential distributions. As proved in Appendix \ref{algorithms}, for this case there exists only one max-utilizing allocation, which has $F_1(\res_1) = F_2(\res_2) = \tau$, with $\res_1 + \res_2 = \budget$. This can be written as $1 - e^{-\lambda_1 \cd \res_1} = 1 - e^{-\lambda_1 \cd \p{\budget - \res_1}}$, which implies: 

$$\res_1 = \frac{ \lambda_2}{\lambda_1 + \lambda_2} \cd \budget \text{ and } \res_2 = \frac{ \lambda_1}{\lambda_1 + \lambda_2} \cd \budget$$
This allocation was calculated to be max-utilizing. We will also show that it is $\fair = 0$ fair, which means that PoF = 1. To do this, we will calculate $q(\res_i, \candist_i)$ for both groups. In general, 
\begin{eqnarray*}
q(\res, \candist) &=& \frac{\int_0^{\res}\can \cd f(\can)d\can + \res \cd (1 - F(\res))}{\mathbb{E}_{\bigcan \sim \text{exp}(\lambda)}\bigcan} = \frac{\int_0^{\res}\can \cd \lambda e^{-\lambda \can}dx + \res\cd e^{-\lambda \res}}{\frac{1}{\lambda}} \\ 
&=& \frac{\frac{1}{\lambda} \cd \p{1 - e^{-\lambda \cd \res}}}{\frac{1}{\lambda}} = 1 - e^{-\lambda \cd \res}
\end{eqnarray*}
Then, we know that $q(\res_1, \candist_1) = q(\res_2, \candist_2)$ because 
$$1 - e^{-\lambda_1 \cd \frac{\lambda_2}{\lambda_1 + \lambda_2} \cd \budget} = 1 - e^{-\lambda_2 \cd \frac{\lambda_1}{\lambda_1 + \lambda_2} \cd \budget}$$
This demonstrates that, for the given scenario, the max-utilizing allocation already satisfies $\fair=0$ fairness. However, this property is true more broadly than just in this illustrative example. In the next section, we will describe a property of probability distributions such that the max-utilizing allocation is also $\fair=0$ fair. 
\subsection{Proof of general property}
\begin{theorem}
Consider a set of continuous candidate distributions with $F_i(0) = 0$ and $f_i(\res) > 0 \ \forall \res \geq 0$. Suppose the set of candidate distributions $\{\candist_i\}$ has the following property: 
$$F_i(\res) = F_j\p{\res \cd \frac{\mathbb{E}_{\bigcan \sim \candist_j}\br{\bigcan}}{\mathbb{E}_{\bigcan \sim \candist_i}\br{\bigcan}}} \quad \forall \res \in [0, \infty) \ \forall i, j$$
Then, under fractional allocation of resources, the max-utilization allocation is already $\fair=0$ fair. In other words, $\forall i, j$
$$F_i(\res_i) = F_j(\res_j) \quad \Rightarrow \quad \frac{\mathbb{E}_{\bigcan \sim \candist_i}\min(\bigcan, \res_i)}{\mathbb{E}_{\bigcan \sim \candist_i}\br{\bigcan}} = \frac{\mathbb{E}_{\bigcan \sim \candist_j}\min(\bigcan, \res_j)}{\mathbb{E}_{\bigcan \sim \candist_j}\br{\bigcan}}$$
\end{theorem}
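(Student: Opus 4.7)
The plan is to show that the scaling hypothesis $F_i(\res) = F_j(\res \cdot \mathbb{E}_{\candist_j}[\bigcan]/\mathbb{E}_{\candist_i}[\bigcan])$ is precisely the statement that all of the $\candist_i$ belong to a common scale family parameterized by the mean, and then to verify that both the max-utilization condition and the ratio $q(\res, \candist)$ depend only on the normalized allocation $\res/\mathbb{E}_{\candist}[\bigcan]$. Write $\mu_i := \mathbb{E}_{\bigcan \sim \candist_i}[\bigcan]$ throughout. The hypothesis can be re-expressed as saying that $X_i/\mu_i$ has the same distribution for every $i$ when $X_i \sim \candist_i$: concretely, there is a common CDF $G$ with $F_i(\res) = G(\res/\mu_i)$ for all $i$.

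First, I would use the characterization of the max-utilization allocation (stated earlier and proved in Appendix~\ref{algorithms}) that any unconstrained maximizer satisfies $F_i(\res_i) = F_j(\res_j)$ for all $i, j$. Substituting the scaling identity into this equality gives $F_j(\res_i \cdot \mu_j/\mu_i) = F_j(\res_j)$. Since $f_j > 0$ everywhere on $[0, \infty)$, $F_j$ is strictly increasing, so this forces $\res_j = \res_i \cdot \mu_j/\mu_i$, i.e.\ the ratio $r := \res_i/\mu_i$ is the same for every group.

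Next, I would show that $q(\res, \candist)$ depends only on the ratio $\res/\mu$. Using the standard layer-cake identity $\mathbb{E}_{\candist}[\min(\bigcan, \res)] = \int_0^\res (1 - F(\can))\, d\can$ and substituting $\can = \mu u$ yields
\[
q(\res, \candist) \;=\; \frac{1}{\mu}\int_0^{\res}\p{1 - F(\can)}\, d\can \;=\; \int_0^{\res/\mu}\p{1 - F(\mu u)}\, du \;=\; \int_0^{\res/\mu}\p{1 - G(u)}\, du,
\]
where the last step uses $F_i(\mu_i u) = G(u)$. Since we have just shown that $\res_i/\mu_i = \res_j/\mu_j$ at the max-utilization allocation, the two integrals $q(\res_i, \candist_i)$ and $q(\res_j, \candist_j)$ are identical, proving $\fair=0$ fairness.

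I do not expect a significant obstacle: the proof is essentially a change-of-variables argument once one sees the right reformulation. The only subtle point is being careful to invoke the hypothesis $f_i > 0$ on $[0,\infty)$ and $F_i(0) = 0$ at the correct moment, namely to conclude strict monotonicity of each $F_i$ so that the equality $F_j(\res_i \cdot \mu_j/\mu_i) = F_j(\res_j)$ can be inverted uniquely.
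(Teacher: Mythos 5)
Your proof is correct, and it reaches the conclusion by a noticeably more streamlined route than the paper's. Both arguments share the same skeleton: invoke the characterization of the max-utilization allocation ($F_i(\res_i) = F_j(\res_j)$, with the $\res_i = 0$ branch ruled out by $F_i(0)=0$), use strict monotonicity from $f_i > 0$ together with the scaling hypothesis to conclude $\res_i/\mu_i = \res_j/\mu_j$, and then show the $q$-values agree. Where you differ is in the final step: the paper splits $\mathbb{E}_{\bigcan \sim \candist_i}\min(\bigcan,\res_i)$ into the two subterms $\mathbb{E}\br{\bigcan \mid \bigcan \le \res_i}P_i(\bigcan \le \res_i)$ and $\res_i \cd P_i(\bigcan > \res_i)$, and proves two separate lemmas (the second requiring an explicit formula for the truncated mean and a reverse-chain-rule substitution of anti-derivatives) showing each subterm is invariant. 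You instead use the tail-integral identity $\mathbb{E}_{\candist}\min(\bigcan,\res) = \int_0^{\res}\p{1-F(\can)}d\can$, so that $q(\res,\candist) = \int_0^{\res/\mu}\p{1-G(u)}du$ after a single change of variables, and fairness is immediate once the normalized allocations coincide. Your reformulation of the hypothesis as a common scale family $F_i(\res) = G(\res/\mu_i)$ is exactly equivalent to the paper's condition, and it makes the conceptual content clearer: $q$ depends on the allocation only through $\res/\mu$, and max-utilization equalizes $\res/\mu$ across groups. What the paper's longer decomposition buys is only that it works directly from the definition of $q$ without importing the layer-cake identity; your version is shorter, arguably more transparent, and equally rigorous given that identity holds for nonnegative random variables, which is guaranteed here by $F_i(0)=0$ and support on $[0,\infty)$.
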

Intuitively, this property means that the CDFs of all of the candidate distributions $\{\candist_i\}$ are versions of the same function, scaled by the ratio of their expected values. If this property holds, it will turn out mathematically that the max-utilizing allocation is already $\fair=0$ fair.  

First, we can verify that the exponential distribution satisfies these properties. The exponential distribution has the given expectation and CDF: $$\mathbb{E}_{\bigcan \sim \candist_j}[\bigcan] = \frac{1}{\lambda_j} \quad \mathbb{E}_{\bigcan \sim \candist_i}[\bigcan] = \frac{1}{\lambda_i} \quad F_j(\res) = 1 - e^{-\lambda_j \cd \res}$$
Substitution shows that the premise of the lemmas above holds:
%$$ \frac{\mathbb{E}_{\bigcan \sim \candist_j}[\bigcan]}{\mathbb{E}_{\bigcan \sim \candist_i}[\bigcan]} = \frac{\lambda_i}{\lambda_j}$$
$$ F_j\p{\res \cd \frac{\mathbb{E}_{\bigcan \sim \candist_j}[\bigcan]}{\mathbb{E}_{\bigcan \sim \candist_i}[\bigcan]}} = 1 - e^{-\lambda_j \cd \res \cd \frac{\lambda_i}{\lambda_j}} = 1 - e^{-\lambda_i \cd \res} = F_i(\res)$$ 

Next, we will explore why this property of probability distributions leads to $PoF = 1$. As shown in Appendix \ref{algorithms}, the max-utilization allocation has the property that
\begin{itemize}
    \item $F_i(\res_i) > \tau$ only if $F_i(0) \geq \tau$. In this case, $\res_i = 0$.
    \item $F_i(\res_i) = \tau$ otherwise. 
\end{itemize}

We require $F_i(0) = 0$, so $\res_i > 0 \ \forall i$. We additionally require $f_i(\res) >0 \ \forall \res \geq 0$, which tells us that there exists only one max-utilizing allocation. Next, we examine the fairness constraint and rewrite it as:
\begin{eqnarray*}
&& \frac{\mathbb{E}_{\bigcan \sim \candist_i}\br{\bigcan \mid \bigcan \leq \res_i}\cd P_i(\bigcan \leq \res_i)}{\mathbb{E}_{\bigcan \sim \candist_i}\br{\bigcan}} + \frac{\res_i \cd P_i(\bigcan > \res_i) }{\mathbb{E}_{\bigcan \sim \candist_i}[\bigcan]} = \\
&& \frac{\mathbb{E}_{\bigcan \sim \candist_j}\br{\bigcan \mid \bigcan \leq \res_j}\cd P_j(\bigcan \leq \res_j)}{\mathbb{E}_{\bigcan \sim \candist_j}\br{\bigcan}} + \frac{\res_j \cd P_j(\bigcan > \res_j)}{\mathbb{E}_{\bigcan \sim \candist_j}[\bigcan]} \quad \forall i, j
\end{eqnarray*}
For notational convenience, we denote $m_{ij} = \frac{\mathbb{E}_{\bigcan \sim \candist_i}[\bigcan]}{\mathbb{E}_{\bigcan \sim \candist_j}[\bigcan]}$. We will prove this theorem by considering subterms of each side independently and proving that they are equal. 
\begin{lemma}
$$F_i(\res) = F_j\p{\frac{\res}{m_{ij}}} \quad \forall \res \in [0, \infty) \ \forall i, j $$
$$\Rightarrow \quad \frac{\res \cd P_i(\bigcan > \res) }{\mathbb{E}_{\bigcan \sim \candist_i}[\bigcan]} = \frac{\frac{\res}{m_{ij}}\cd P_j\p{\bigcan > \frac{\res}{m_{ij}}}}{\mathbb{E}_{\bigcan \sim \candist_j}[\bigcan]} \quad \forall \res \in [0, \infty) \ \forall i, j$$
\end{lemma}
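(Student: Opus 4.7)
The plan is to prove this by a direct substitution using the two ingredients supplied: the CDF scaling hypothesis $F_i(\res) = F_j(\res/m_{ij})$, and the definition $m_{ij} = \mathbb{E}_{\bigcan \sim \candist_i}[\bigcan]/\mathbb{E}_{\bigcan \sim \candist_j}[\bigcan]$, which is equivalent to $\mathbb{E}_{\bigcan \sim \candist_i}[\bigcan] = m_{ij}\cd \mathbb{E}_{\bigcan \sim \candist_j}[\bigcan]$. Both sides of the desired identity have a linear factor in the bound (either $\res$ or $\res/m_{ij}$), a tail probability, and an expectation in the denominator, so the strategy is to rewrite the left-hand side factor by factor until it matches the right-hand side.

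Concretely, I would first rewrite the tail probability on the left using the hypothesis:
\[
P_i(\bigcan > \res) \;=\; 1 - F_i(\res) \;=\; 1 - F_j\!\p{\tfrac{\res}{m_{ij}}} \;=\; P_j\!\p{\bigcan > \tfrac{\res}{m_{ij}}}.
\]
Substituting into the left-hand side yields
\[
\frac{\res \cd P_i(\bigcan > \res)}{\mathbb{E}_{\bigcan \sim \candist_i}[\bigcan]} \;=\; \frac{\res \cd P_j\!\p{\bigcan > \tfrac{\res}{m_{ij}}}}{\mathbb{E}_{\bigcan \sim \candist_i}[\bigcan]}.
\]
Then I would replace $\mathbb{E}_{\bigcan \sim \candist_i}[\bigcan]$ by $m_{ij} \cd \mathbb{E}_{\bigcan \sim \candist_j}[\bigcan]$ and pull the $m_{ij}$ into the numerator with the $\res$, giving exactly $\frac{(\res/m_{ij}) \cd P_j(\bigcan > \res/m_{ij})}{\mathbb{E}_{\bigcan \sim \candist_j}[\bigcan]}$, which is the right-hand side.

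Honestly, there is no real obstacle here — the statement is essentially a change-of-variables identity $\res \leftrightarrow \res/m_{ij}$, where the CDF hypothesis handles the tail term and the definition of $m_{ij}$ rescales the denominator to absorb the factor introduced by the change of variables. The only small care needed is to note that $\res$ ranges over $[0,\infty)$ iff $\res/m_{ij}$ does (since $m_{ij} > 0$ by the assumption $\mathbb{E}_{\bigcan \sim \candist_i}[\bigcan] > 0$), so the identity holds for all admissible arguments. This sub-lemma is the easiest of the three term-by-term identities one expects to need for the main theorem; the analogous rewriting for the $\mathbb{E}[\bigcan \mid \bigcan \leq \res] \cd P(\bigcan \leq \res)$ subterm will be the more substantive step, since it requires a genuine change of variables inside an integral rather than a one-line algebraic rearrangement.
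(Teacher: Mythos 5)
Your proof is correct and follows essentially the same argument as the paper: use the CDF hypothesis to identify the tail probabilities $P_i(\bigcan > \res) = P_j(\bigcan > \res/m_{ij})$, then use $\mathbb{E}_{\bigcan \sim \candist_i}[\bigcan] = m_{ij}\cdot \mathbb{E}_{\bigcan \sim \candist_j}[\bigcan]$ to rescale the denominator. The paper merely writes the same chain of equalities starting from the other side of the identity.
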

\begin{proof}
\begin{eqnarray*}
\frac{\frac{\res}{m_{ij}}\cd P_j\p{\bigcan > \frac{\res}{m_{ij}}}}{\mathbb{E}_{\bigcan \sim \candist_j}[\bigcan]} = \frac{\frac{\res}{m_{ij}}\cd P_i(\bigcan > \res)}{\mathbb{E}_{\bigcan \sim \candist_j}[\bigcan]} = \frac{\res \cd P_i(\bigcan > \res)}{m_{ij} \cd \mathbb{E}_{\bigcan \sim \candist_j}[\bigcan]} \\ = \frac{\res\cd P_i(\bigcan > \res)}{ \mathbb{E}_{\bigcan \sim \candist_i}[\bigcan]}
\end{eqnarray*}
where the first step comes from the fact that $P_j(\bigcan > \frac{\res}{m_{ij}}) = 1 - F_j(\frac{\res}{m_{ij}}) = 1 - F_i(\res) = P_i(\bigcan > \res)$
\end{proof}
The max-utilizing allocation has the property that $F_i(\res_i) = F_j(\res_j)$. Having the additional property that $F_i(\res) = F_j\p{\frac{\res}{m_{ij}}}$ implies that $\res_j = \frac{\res_i}{m_{ij}}$. This allows us to rewrite the above lemma in terms of the max-utilizing allocations: 
$$F_i(\res_i) = F_j(\res_j) \quad \Rightarrow \quad \frac{\res_i \cd P_i(\bigcan > \res_i) }{\mathbb{E}_{\bigcan \sim \candist_i}[\bigcan]} = \frac{\res_j\cd P_j\p{\bigcan > \res_j}}{\mathbb{E}_{\bigcan \sim \candist_j}[\bigcan]}$$ 
Next, we consider the second subterm in the equality:
\begin{lemma}
$$F_i(\res) = F_j\p{\frac{\res}{m_{ij}}} \quad \forall \res \in [0, \infty) \ \forall i, j$$
$$\Rightarrow \quad \frac{\mathbb{E}_{\bigcan \sim \candist_i}[\bigcan \mid \bigcan \leq \res_i]\cd P_i(\bigcan \leq \res_i)}{\mathbb{E}_{\bigcan \sim \candist_i}\br{\bigcan}} = \frac{\mathbb{E}_{\bigcan \sim \candist_j}[\bigcan \mid \bigcan \leq \res_j]\cd P_j(\bigcan \leq \res_j)}{\mathbb{E}_{\bigcan \sim \candist_j}\br{\bigcan}}$$
\end{lemma}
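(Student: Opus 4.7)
The plan is to mimic the structure of the previous lemma, but with the pointwise boundary term $\res \cd P(\bigcan > \res)$ replaced by the truncated mean. First I would rewrite the numerator on each side as an ordinary integral via
$$\mathbb{E}_{\bigcan \sim \candist_i}\br{\bigcan \mid \bigcan \leq \res_i} \cd P_i(\bigcan \leq \res_i) = \int_0^{\res_i} \can \cd f_i(\can)\, d\can,$$
and analogously on the $j$ side, so the claimed equality reduces to
$$\frac{\int_0^{\res_i} \can \cd f_i(\can)\, d\can}{\mathbb{E}_{\bigcan \sim \candist_i}\br{\bigcan}} = \frac{\int_0^{\res_j} \can \cd f_j(\can)\, d\can}{\mathbb{E}_{\bigcan \sim \candist_j}\br{\bigcan}}.$$

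Next I would translate the scaling hypothesis on the CDFs into a scaling relation on the densities. Differentiating $F_i(\can) = F_j(\can/m_{ij})$ gives $f_i(\can) = \frac{1}{m_{ij}} f_j(\can/m_{ij})$. Also, because each $f_i > 0$, the functions $F_i$ are strictly increasing, so the max-utilizing condition $F_i(\res_i) = F_j(\res_j)$ together with the identity $F_i(\res_i) = F_j(\res_i/m_{ij})$ forces $\res_j = \res_i/m_{ij}$, exactly as used in the proof of the preceding lemma. Now in the integral on the left, I would substitute $u = \can/m_{ij}$, so $\can = m_{ij} \cd u$, $d\can = m_{ij}\, du$, and the upper limit $\can = \res_i$ becomes $u = \res_i/m_{ij} = \res_j$, which yields
$$\int_0^{\res_i} \can \cd f_i(\can)\, d\can = \int_0^{\res_j} (m_{ij} \cd u) \cd \tfrac{1}{m_{ij}} f_j(u) \cd m_{ij}\, du = m_{ij} \int_0^{\res_j} u \cd f_j(u)\, du.$$

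Finally, dividing both sides by $\mathbb{E}_{\bigcan \sim \candist_i}\br{\bigcan} = m_{ij} \cd \mathbb{E}_{\bigcan \sim \candist_j}\br{\bigcan}$ causes the factor $m_{ij}$ to cancel, giving exactly the desired equality. I do not anticipate a substantive obstacle: the argument is essentially the same change of variables as in the previous lemma, and the only care needed is to keep track of the integration limits under the substitution, which is precisely what the relation $\res_j = \res_i/m_{ij}$ handles. Once both subterms match across $i$ and $j$, the full fairness equality $q(\res_i, \candist_i) = q(\res_j, \candist_j)$ follows by adding the two lemmas, completing the proof of the theorem.
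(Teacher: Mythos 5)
Your proof is correct, and it rests on the same underlying idea as the paper's --- the scaling substitution $u = \can/m_{ij}$ together with the cancellation of $m_{ij}$ against $\mathbb{E}_{\bigcan \sim \candist_i}\br{\bigcan} = m_{ij}\cd\mathbb{E}_{\bigcan \sim \candist_j}\br{\bigcan}$ --- but you execute it at the level of densities rather than CDFs. The paper first rewrites $\mathbb{E}\br{\bigcan \mid \bigcan \leq \res}$ via the complementary-CDF representation as $\res - \frac{1}{F(\res)}\int_0^{\res}F(t)\,dt$ and then applies the reverse chain rule to $\int_0^{\res/m_{ij}}F_j(t)\,dt$, so the hypothesis $F_i(\res) = F_j\p{\res/m_{ij}}$ is used exactly as stated and never differentiated. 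You instead write the numerator as $\int_0^{\res_i}\can\, f_i(\can)\,d\can$, differentiate the hypothesis to obtain $f_i(\can) = \frac{1}{m_{ij}}f_j\p{\can/m_{ij}}$, and change variables, correctly using $\res_j = \res_i/m_{ij}$ (which, as in the paper, follows from $F_i(\res_i)=F_j(\res_j)$ plus the scaling hypothesis and strict monotonicity). Your route is shorter and more transparent, and it is legitimate under the paper's standing assumptions (continuous distributions with $f_i(\res)>0$); the only point deserving a sentence of care is that the density identity need only hold almost everywhere for the change of variables to go through --- or you can sidestep differentiation entirely by integrating $\int_0^{\res_i}\can\, f_i(\can)\,d\can$ by parts into $\res_i F_i(\res_i) - \int_0^{\res_i}F_i(t)\,dt$, at which point your computation coincides with the paper's. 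What the paper's CDF-level argument buys is that it never invokes a derivative of the hypothesis; what yours buys is a more direct and economical calculation.
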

\begin{proof}
What is $\mathbb{E}[\bigcan \mid \bigcan < \res]$ written out in terms of the CDF of a distribution? We will consider $\bigcan \mid \bigcan < \res$ as a different probability distribution. It has PDF and CDF as shown below:
$$f^*(t) = 
\begin{cases} 
       \frac{f(t)}{F(\res)}& 0 \leq t \leq \res \\
      0 & \text{otherwise } \\
   \end{cases} \quad F^*(t) = 
\begin{cases} 
       \frac{F(t)}{F(\res)}& 0 \leq t \leq \res \\
      1 & \text{otherwise } \\
   \end{cases}
$$
We can rewrite one of the terms above as:
\begin{eqnarray*}\mathbb{E}[\bigcan \mid \bigcan < \res] =  \int_0^{\infty}(1-F^*(t))dt = \int_0^{\res}(1-F^*(t))dt \\= \int_0^{\res}\p{1 - \frac{F(t)}{F(\res)}}dt = \res - \frac{1}{F(\res)}\int_0^{\res}F(t)dt\end{eqnarray*}
We will use this fact to evaluate the following term, substituting $\frac{\res}{m_{ij}}$ as the input $\res_j$:
\begin{eqnarray*}\mathbb{E}_{\bigcan \sim \candist_j}[\bigcan \mid \bigcan \leq \res]  \cd \frac{ F_j\p{\frac{\res}{m_{ij}}}}{\mathbb{E}_{\bigcan \sim \candist_j}[\bigcan]} = \left(\frac{\res}{m_{ij}} - \frac{\int_0^{\res/m_{ij}}F_j(t)dt}{F_j\p{\frac{\res}{m_{ij}}}}\right)\frac{F_j\p{\frac{\res}{m_{ij}}}}{\mathbb{E}_{\bigcan\sim \candist_j}[\bigcan]}
\\ 
= \left(\frac{\res\cd F_j\p{\frac{\res}{m_{ij}}} - m_{ij} \cd  \int_0^{\res/m_{ij}}F_j(t)dt}{m_{ij}\cd F_j\p{\frac{\res}{m_{ij}}}}\right)\frac{F_j\p{\frac{\res}{m_{ij}}}}{\mathbb{E}_{\bigcan \sim \candist_j}[\bigcan]}\\ 
 = \frac{\res\cd F_j\p{\frac{\res}{m_{ij}}} - m_{ij} \int_0^{\res/m_{ij}}F_j(t)dt}{m_{ij} \cd \mathbb{E}_{\bigcan \sim \candist_j}[\bigcan]} \\
 = \frac{\res \cd F_i(\res) - m_{ij} \int_0^{\res/m_{ij}}F_j(t)dt}{\mathbb{E}_{\bigcan \sim \candist_i}[\bigcan]}\end{eqnarray*}
where we have used the assumption of equality in the CDFs in the last step. Next, we note that $F_j(t) = F_i(g(t))$ where $g(t) = m_{ij}\cd t$. By the reverse chain rule, $\int_a^b f'\p{g\p{t}}\cd g'(t) dt = \br{f(g(t))}_a^b$. 
Here, $g'(t) = m_{ij}$, so if we denote the anti-derivative of $F_i$ by $H_i$, then
%KPD here
%\begin{eqnarray*}\int_0^{\res/m_{ij}}F_j(t)\cd m_{ij} \ dt = \int_0^{\res/m_{ij}}F_i(g(t))\cd m_{ij} \ dt &=& \br{H_i(g(t))}_0^{\res/m_{ij}} \\ 
%=\br{H_i(m_{ij}\cd t)}_0^{\res/m_{ij}} = H_i(\res) - H_i(0) 
%= \br{H_i(t)}_0^{\res}& =& \int_0^{\res}F_i(t)dt\end{eqnarray*}
\begin{eqnarray*}\int_0^{\res/m_{ij}}F_j(t)\cd m_{ij} \ dt = \int_0^{\res/m_{ij}}F_i(g(t))\cd m_{ij} \ dt = \\ 
\br{H_i(g(t))}_0^{\res/m_{ij}} =\br{H_i(m_{ij}\cd t)}_0^{\res/m_{ij}} = H_i(\res) - H_i(0)  = \int_0^{\res}F_i(t)dt\end{eqnarray*}
This allows us to rewrite the item above as
$$\frac{\res \cd F_i(\res) - \int_0^{\res}F_i(t)dt}{\mathbb{E}_{\bigcan \sim \candist_i}[\bigcan]}$$
This gives a value for the righthand side of the equality we are trying to show. For the lefthand side, we use results of our previous analysis of $\mathbb{E}[\bigcan \mid \bigcan < \res]$ to find that the two terms are equivalent:
\begin{eqnarray*}
\frac{\mathbb{E}_{\bigcan \sim \candist_i}[\bigcan \mid \bigcan \leq \res] \cd F_i(\res)}{\mathbb{E}_{\bigcan \sim \candist_i}[\bigcan]} = \left(\res - \frac{1}{F_i(\res)}\int_0^{\res}F_i(t)dt\right) \cd \frac{F_i(\res)}{\mathbb{E}_{\bigcan \sim \candist_i}[\bigcan]} \\ = \frac{\res\cd F_i(\res) - \int_0^\res F_i(t)dt}{\mathbb{E}_{\bigcan \sim \candist_i}[\bigcan]}\end{eqnarray*}
\end{proof}
Taken together, these two lemmas tell us that 
$$F_i(\res) = F_j\p{\frac{\res}{m_{ij}}} \quad \forall \res \in [0, \infty)$$
$$ \quad \Rightarrow \quad \frac{\mathbb{E}_{\bigcan \sim \candist_i}\min(\bigcan, \res_i)}{\mathbb{E}_{\bigcan \sim \candist_i}[\bigcan]} = \frac{\mathbb{E}_{\bigcan \sim \candist_j}\min(\bigcan, \res_j)}{\mathbb{E}_{\bigcan \sim \candist_j}[\bigcan]} \quad \forall i, j$$
so any max-utilization allocation is already $\fair=0$ fair. 

\subsection[Distributions]{Distributions that satisfy the given premise}
We have previously shown that exponential distributions satisfy the given premise, which is perhaps not very surprising.
Exponential distributions have certain nice properties: they are memoryless, for example, and are not heavy-tailed. At first guess, it could be reasonable to guess that this phenomenon of PoF equal to 1 is only present in exponential distributions.

However, there exist further families of distributions that also satisfy the premise above and therefore have PoF equal to 1; these include Weibull distributions with the same $k$ parameter. They have expectation and CDF as given below: 
$$\mathbb{E}_{\bigcan \sim \candist_j}[\bigcan] = \lambda_j \cd \Gamma\p{1 + \frac{1}{k}} \quad \mathbb{E}_{\bigcan \sim \candist_i}[\bigcan] = \lambda_i \cd \Gamma\p{1 + \frac{1}{k}}$$
$$F_j(\res) = 1 - e^{-\p{\frac{\res}{\lambda_j}}^k} $$
Again, substitution shows that the premise of the above lemmas holds:
$$ F_j\p{\res \cd \frac{\mathbb{E}_{\bigcan \sim \candist_j}[\bigcan]}{\mathbb{E}_{\bigcan \sim \candist_i}[\bigcan]}} = 1 - e^{-\p{\frac{\res}{\lambda_j} \cd \frac{\lambda_j}{\lambda_i}}^k} =  1 - e^{-\p{\frac{\res}{\lambda_i}}^k} = F_i(\res)$$

Thus, the theorem proved in this section shows that Weibull distributions with the same $k$ parameter have PoF $=1$. This result is somewhat surprising, given that Weibull distributions in general have fewer nice properties than exponential distributions. However, Weibull distributions, along with power law distributions, which will be discussed in the next section, tend to be better models of real-life phenomena. For example, natural disasters and terrorist attacks are commonly believed to follow heavy-tailed distributions in severity \cite{Clauseteaao3580, power_law_emp}. This result shows us that there are plausibly realistic models of natural phenomena that have no trade-off between fair and max-utilizing allocations. 

\section[Power law]{PoF bounded by small constant for power law distribution}\label{powerlaw}
So far, Section \ref{generalbounds} has shown that PoF can be arbitrarily large, and Section \ref{pof1} has shown that multiple distributions have $PoF = 1$. A next logical step is to explore probability distributions with small gaps between the max-utilizing and $\fair$-fair max utilizing allocations. Specifically, we consider power law distributions. There are multiple distributions that are commonly described as \enquote{power law} distributions. The one we will use, also called the \enquote{Lomax distribution}, has PDF and CDF as below. Note that it has support for all $x\geq 0$ and a parameter $\powexp >1$ controlling its shape. 
$$f_{\candist}(\res) = \frac{\powexp}{(\res+1)^{\powexp+1}} \quad F_{\candist}(\res) =1 - \frac{1}{(1+\res)^{\powexp}} \quad \mathbb{E}_{\candist}[\bigcan] = \frac{1}{\powexp-1} $$
\begin{theorem}
Set the number of groups $\ngroup$ to some constant $\npower$. Then, the Price of Fairness from allocating $\budget$ resources across these groups is upper bounded by 
$$ \npower \cd H_{\npower}$$
where $H_{\npower}$ is the $\npower$th harmonic number.  
\end{theorem}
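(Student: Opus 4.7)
My first step would be to derive the Lomax quantities in closed form. Using the tail-integral formula $\mathbb{E}[\min(\bigcan, v)] = \int_0^v (1+t)^{-\powexp}\, dt = (1 - (1+v)^{1-\powexp})/(\powexp-1)$ together with $\mathbb{E}[\bigcan] = 1/(\powexp-1)$, I obtain the clean form $q(v, \mathcal{C}) = 1 - (1+v)^{-(\powexp-1)}$. Writing $s_i := 1/(\powexp_i - 1) = \mathbb{E}[\bigcan_i]$, the fairness relation inverts to $v = (1-q)^{-s} - 1$ and the level-set-of-$F$ relation to $v = (1-\tau)^{-1/\powexp} - 1$.

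Next, I would invoke the Appendix characterization of the max-utilization allocation (all CDFs $F_i(\res_i^*)$ equal to some common $\tau$) together with the $\fair=0$ fairness constraint (all $q(\res_i,\candist_i)$ equal to a common $q$) to write the two allocations explicitly as $\res_i^* = (1-\tau)^{-1/\powexp_i} - 1$ and $\res_i^{\mathrm{fair}} = (1-q)^{-s_i} - 1$. The resource constraint $\sum_i \res_i = \budget$ yields the coupling $\sum_i (1-\tau)^{-1/\powexp_i} = \budget + \npower = \sum_i (1-q)^{-s_i}$. The utilizations then simplify cleanly to $U^{\mathrm{fair}} = q \sum_i s_i$ and $U^* = \sum_i s_i\, q_i^*$, where $q_i^* := q(\res_i^*, \candist_i) = 1 - (1-\tau)^{1 - 1/\powexp_i}$, so the PoF is the weighted average $\mathrm{PoF} = \sum_i s_i q_i^* / (q \sum_i s_i)$ of the per-group ratios $q_i^*/q$ with weights proportional to $s_i$.

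The naive bound $q_i^* \leq \tau$ only gives $\mathrm{PoF} \leq \tau/q$, which can vastly exceed $\npower \cdot H_\npower$ (for instance when some $\powexp_i$ is near $1$, so $s_i \to \infty$), so the correct bound must exploit the built-in tension that the groups whose ratios $q_i^*/q$ are largest (those with large $\powexp_i$, hence small $s_i$) carry the smallest weights in the average. My plan is to order groups so that $\powexp_1 \leq \powexp_2 \leq \cdots \leq \powexp_\npower$ (so $s_i$ is non-increasing while $q_i^*$ is non-decreasing), then show that the $k$-th term's contribution to $\mathrm{PoF}$ is bounded above by something of the form $q \cdot (\sum_j s_j)/k$ by using the resource-coupling equation $\sum_i (1-\tau)^{-1/\powexp_i} = \sum_i (1-q)^{-s_i}$ to control how fast $q_i^*/q$ can grow relative to how fast $s_i$ shrinks. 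Summing the resulting $1/k$ contributions would produce the $H_\npower$ factor, while the outer factor $\npower$ absorbs the remaining prefactor.

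The main obstacle I foresee is precisely this per-group $1/k$ estimate with the right constants: the resource coupling must be used in a way that is sharp enough to generate a harmonic-sum bound without the small-$\powexp_i$ pathology reappearing. A likely alternative route is a peeling/induction argument that removes the largest-$q_i^*$ group at each step, reduces to an instance with $\npower - 1$ groups and a smaller effective budget, and absorbs the removed group's contribution into a single $1/k$ term; iterating this across all $\npower$ groups would assemble the product $\npower \cdot H_\npower$ exactly. Making either approach produce clean constants, rather than a constant that merely depends on $\npower$ in some looser way, is where I expect the real technical work to lie.
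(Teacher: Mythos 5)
Your setup is correct and in fact reproduces the paper's reformulation exactly: with $s_i = 1/(\powexp_i-1)$ you get $U^{\mathrm{fair}} = q\sum_i s_i$ and $U^* = \sum_i s_i q_i^*$, and the plan of bounding each group's contribution by a term of the form $\npower/k$ and summing to get $\npower\cd H_\npower$ is precisely the decomposition the paper uses. But the heart of the proof --- the per-group estimate you defer to ``the real technical work'' --- is missing, and it is not obtained the way you propose. You suggest using the coupling equation $\sum_i(1-\tau)^{-1/\powexp_i} = \sum_i(1-q)^{-s_i}$ to play off the growth of $q_i^*/q$ against the decay of $s_i$; the paper instead sidesteps this entirely with two elementary pigeonhole bounds on the \emph{allocations}: ordering so that the max-utilization allocations satisfy $\res_1\le\cdots\le\res_\npower$ and sum to $\budget$, one gets $\res_i\le \budget/(\npower-i+1)$; and since the fair allocations also sum to $\budget$ and $\res_\npower'$ is the largest of them, $\res_\npower'\ge\budget/\npower$. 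These convert each per-group ratio into an explicit function of $\budget$ alone, namely
\[
\frac{1-(1+\budget/(\npower-i+1))^{-(\powexp_i-1)}}{\bigl((\powexp_i-1)\sum_j s_j\bigr)\cd\bigl(1-(1+\budget/\npower)^{-(\powexp_\npower-1)}\bigr)}.
\]

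The second missing ingredient is how to bound that expression uniformly in $\budget$ and in the $\powexp$'s. The paper shows the ratio is \emph{decreasing} in $\budget$ via the monotone form of L'H\^{o}pital's rule (both numerator and denominator vanish at $\budget=0$ and the ratio of derivatives is decreasing), so the supremum is the $\budget\to 0$ limit, where utilizations become proportional to marginal allocations and the ratio collapses to exactly $\npower/(\npower-i+1)$ (after sending $\powexp_i\to\infty$ to kill the remaining $\powexp$-dependent prefactor). Without an argument of this type, your route through the coupling equation risks exactly the pathology you flag: as some $\powexp_i\to 1$ the weights $s_i$ blow up and the coupling becomes degenerate, and it is not clear the intended $1/k$ estimate survives. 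So: right decomposition, right target per-group bound, but the two lemmas that make it work (the pigeonhole bounds on $\res_i$ and $\res_\npower'$, and the monotonicity-in-$\budget$ reduction to the $\budget\to 0$ limit) are absent.
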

Note that this bound is independent of the fairness parameter $\fair$ as well as the budget $\budget$ and any distribution parameters $\powexp_i$!  

We will prove this theorem by breaking it up into several smaller lemmas. For convenience, we order the groups so that ${\powexp}_1 \geq {\powexp}_2 \ldots \geq  {\powexp}_{\npower}$. First, we write out the full PoF term: 
\begin{eqnarray*}
PoF &=& \frac{\text{Unconstrained max-utilization}}{\text{Utilization constrained to be $\fair$-fair} }\\
&=&\frac{\sum_{i=1}^{\npower} \br{\int_0^{\res_i} x \cd f_i(x)dx + \res_i \cd (1 - F_i(\res_i))}}{\sum_{i=1}^{\npower} \br{\int_0^{\res_i'} x \cd f_i(x)dx + \res_i' \cd (1 - F_i(\res_i'))}}
\end{eqnarray*}
As proved in Appendix \ref{algorithms}, because the power law distribution has positive support everywhere in the space we are considering $(\res \geq 0)$, it has a unique max-utilizing allocation and $\fair$-fair max-utilizing allocation. The PoF is largest when $\fair=0$, so that is the case we will upper-bound. 

We will denote the unconstrained max-utilization allocations by $\{\res_i\}$ and the $\fair$-fair allocations by $\{\res_i'\}$, for $i \in [\npower]$. 
\begin{lemma}\label{resmaxover}
$ \res_i \leq \frac{\budget}{\npower-i+1}\ \forall i$. 
\end{lemma}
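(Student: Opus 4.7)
The plan is to exploit the monotone structure of the max-utilization allocation for Lomax distributions together with a simple averaging argument. By the characterization of the unconstrained max-utilization allocation established in the appendix (and used earlier for the exponential example), there is a common threshold $\tau$ such that $F_i(\res_i) = \tau$ for every $i$, since the Lomax density is strictly positive on $[0,\infty)$ and $F_i(0)=0$ rules out the degenerate case.

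First I would translate the threshold equation for the Lomax CDF into an explicit formula: $F_i(\res_i)=\tau$ gives $\res_i = (1-\tau)^{-1/\powexp_i}-1$. Since $1-\tau \in (0,1)$, the base $(1-\tau)^{-1}$ exceeds $1$, so the map $\powexp \mapsto (1-\tau)^{-1/\powexp}$ is decreasing in $\powexp$. With the chosen ordering $\powexp_1 \geq \powexp_2 \geq \cdots \geq \powexp_{\npower}$, this immediately yields the monotonicity
\[
\res_1 \leq \res_2 \leq \cdots \leq \res_{\npower}.
\]

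Once monotonicity is in hand, the bound follows from a one-line pigeonhole style inequality. For any fixed $i$, the tail sum $\res_i + \res_{i+1} + \cdots + \res_{\npower}$ consists of $\npower - i + 1$ terms, each at least $\res_i$, so
\[
(\npower - i + 1)\,\res_i \;\leq\; \sum_{j=i}^{\npower} \res_j \;\leq\; \sum_{j=1}^{\npower} \res_j \;=\; \budget,
\]
and rearranging gives the claimed inequality $\res_i \leq \budget/(\npower - i + 1)$.

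The only subtle step is confirming that the max-utilization allocation really does equalize $F_i(\res_i)$ across all groups for Lomax distributions; I would justify this by citing the uniqueness/characterization results in Appendix \ref{algorithms} together with the observation that $F_i(0)=0$ and $f_i(\res)>0$ everywhere, so the degenerate case $\res_i=0$ never arises. After that, the argument is purely a matter of monotone rearrangement and does not require any delicate calculation with $\tau$ itself, which is convenient because $\tau$ depends implicitly on $\budget$ and all the $\powexp_i$.
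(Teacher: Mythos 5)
Your proof is correct and follows essentially the same route as the paper: both use the appendix characterization that the max-utilization allocation equalizes $F_i(\res_i)$ across groups (which for Lomax CDFs forces $\res_1 \leq \cdots \leq \res_{\npower}$ under the ordering $\powexp_1 \geq \cdots \geq \powexp_{\npower}$), then combine full budget usage with the averaging/pigeonhole bound on the tail sum. The only cosmetic difference is that you invert the CDF explicitly to get $\res_i = (1-\tau)^{-1/\powexp_i}-1$, while the paper compares $(1+\res_i)^{\powexp_i}=(1+\res_j)^{\powexp_j}$ directly.
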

\begin{proof}
As proved in Appendix \ref{algorithms}, the max-utilizing allocation $\{\res_i\}$ has the property that $F_i(\res_i) = F_j(\res_j)$ $ \forall \ i, j$. For the power law distributions, this property can be written as: 
$$\frac{1}{(1+\res_i)^{\powexp_i}}  = \frac{1}{(1+\res_j)^{\powexp_j}} \quad \Rightarrow \quad (1+\res_i)^{\powexp_i}  = (1+\res_j)^{\powexp_j}$$
For $\powexp_i \leq \powexp_j$, this implies $\res_i \geq \res_j$, so $\res_1 \leq \res_2 \ldots \leq \res_h$. Any max-utilizing allocation over power law distributions of this form must use the entire resource $\budget$: because the support of the distribution continues to $\infty$, any allocation using $\budget- \epsilon$ resources can be strictly improved by allocating the remaining $\epsilon$ resources. This tells us that $\res_1 + \res_2 + \ldots \res_h = \budget$. 

We can prove the upper bound by working backwards from $\npower$. $\res_{\npower} \leq \budget$ because the allocation to any one group cannot exceed the total amount of resources available. $\res_{\npower-1} \leq \res_{\npower}$ and together both of them cannot sum to more than $\budget$, so $\res_{\npower-1}\leq \frac{\budget}{2}$. Similarly, for any $\res_i$, $\res_i \leq \res_{i+1} \leq \ldots \res_{\npower}$, with $\res_i+ \res_{i+1} + \ldots \res_{\npower} \leq \budget$, so $\res_i \leq \frac{\budget}{\npower-i+1}$. 
\end{proof}
\begin{lemma}\label{resfairunder}
$\res_{\npower}' \geq \frac{\budget}{\npower}$. 
\end{lemma}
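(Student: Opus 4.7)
The plan is to mirror the structure of Lemma \ref{resmaxover}, but to replace the CDF-equality condition that pinned down the unconstrained max-utilizing allocation with the $q$-equality condition that pins down the $\fair=0$ fair allocation. Since the theorem already notes that the PoF is largest at $\fair=0$, it suffices to lower-bound $\res_{\npower}'$ under the assumption that the quantities $q(\res_i', \candist_i)$ all coincide.

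First I would derive a closed form for $q$ on a Lomax distribution. Using the tail-integral identity $\mathbb{E}_{\candist}\min(\bigcan,\res)=\int_0^{\res}(1-F_{\candist}(x))\,dx$ together with $1-F_{\candist}(x)=(1+x)^{-\powexp}$ and $\mathbb{E}_{\candist}[\bigcan]=1/(\powexp-1)$, a direct integration gives
$$q(\res,\candist) \;=\; 1 - \frac{1}{(1+\res)^{\powexp-1}}.$$
Setting $q(\res_i',\candist_i)=q(\res_j',\candist_j)$ then yields $(1+\res_i')^{\powexp_i-1}=(1+\res_j')^{\powexp_j-1}$ for every pair $i,j$.

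Because each $\res_i'\ge 0$ forces this common value to be at least $1$, taking roots and applying the ordering $\powexp_1\ge\powexp_2\ge\ldots\ge\powexp_{\npower}$ shows that $\res_1'\le\res_2'\le\ldots\le\res_{\npower}'$, exactly analogous to the monotonicity step in Lemma \ref{resmaxover} but with the direction flipped because we are equalizing $q$-values rather than CDF-values. The one ingredient still missing is that $\sum_i \res_i' = \budget$; this holds because $q(\res,\candist)$ is strictly increasing in $\res$ and never reaches $1$ for finite $\res$, so any $\fair=0$ fair allocation with slack can be scaled upward along the $q$-equal locus while strictly improving utilization (this is the same observation used by the supporting lemma invoked at the end of the $1/\fair$ argument).

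Given that $\res_{\npower}'$ is the largest of $\npower$ non-negative numbers summing to $\budget$, an averaging argument immediately produces $\res_{\npower}'\ge \budget/\npower$. I expect the only mildly delicate step to be recognizing that equating $q$-values produces the same direction of ordering on $\{\res_i'\}$ as equating $F$-values does on $\{\res_i\}$ — once that is checked, the rest of the proof is mechanical, and it will then dovetail with Lemma \ref{resmaxover} to produce upper and lower bounds on corresponding components of the two allocations, which is presumably how the theorem itself will be assembled in the subsequent lemmas.
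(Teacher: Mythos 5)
Your proposal is correct and follows essentially the same route as the paper's proof: derive the closed form $q(\res,\candist)=1-(1+\res)^{-(\powexp-1)}$, equate $q$-values to get $(1+\res_i')^{\powexp_i-1}=(1+\res_j')^{\powexp_j-1}$ and hence the ordering $\res_1'\le\cdots\le\res_{\npower}'$, invoke the supporting lemma to conclude the fair allocation exhausts the budget, and finish by averaging. Your explicit note that the common value is at least $1$ (so taking roots preserves the intended ordering direction) is a small point of care the paper leaves implicit.
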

\begin{proof}
As proved in Appendix \ref{supp}, any $\fair$-fair allocation ${\bf{\res_i}}$ with $\sum_i\res_i < \budget$ can be increased to an $\fair$-fair allocation ${\bf{\res_i'}}$ with $\sum_i\res_i' = \budget$ and no decrease in utilization. This tells us that the $\fair$-fair max-utilizing allocation must use all of the resources, so $\res_1' + \res_2' + \ldots \res_{\npower}' = \budget$. 
Under the definition of $q(\res_i, \candist_i)$, 
\begin{eqnarray*} q(\res_i, \candist_i) &=& \frac{\int_0^{\res}x\cd f(x)dx + \res \cd(1- F(\res))}{\mathbb{E}_{\bigcan \sim \candist}\br{\bigcan}} \\
&=& \frac{\frac{1 - \powexp \cd \res \cd (\res+1)^{-\powexp} - (\res+1)^{-\powexp}}{\powexp-1} + \res\cd(\res+1)^{-\powexp}}{ \frac{1}{\powexp-1}}  =  1 - \frac{1}{(\res+1)^{\powexp -1}}
\end{eqnarray*}
For $\fair=0$, we require $q(\res_i', \candist_i) = q(\res_j', \candist_j)$ $\forall i, j$, which can be written as:
$$1 - \frac{1}{(\res_i'+1)^{\powexp_i -1}} = 1 - \frac{1}{(\res_j'+1)^{\powexp_j -1}} \quad \Rightarrow \quad  (\res_i'+1)^{\powexp_i -1} = (\res_j'+1)^{\powexp_j -1}$$
For $\powexp_i \geq \powexp_j$, this implies $\res_i' \leq \res_j'$. Given that we have $\npower$ terms that sum up to $\budget$ and $\res_{\npower}'$ is the largest of them, we must have $\res_{\npower}' \geq \frac{\budget}{\npower}$.
\end{proof}
While Lemma \ref{resmaxover} produces bounds for $\res_i$ $\forall i \in [\npower]$, Lemma \ref{resfairunder} only bounds the $\res_{\npower}$ case. The following lemma will demonstrate that other bounds aren't necessary, since we can rewrite all of the $\fair$-fair utilization in terms of $\res_{\npower}'$.
\begin{lemma}
The $\fair$-fair max-utilizing allocation can be rewritten entirely in terms of $\res_{\npower}'$:
\begin{eqnarray*}
\sum_{i=1}^{\npower}\br{\int_0^{\res_i'}x\cd f_i(x)dx + \res_i'\cd (1- F_i(\res_i'))} \\ 
= (\powexp_{\npower}-1) \cd \left(\sum_{i=1}^{\npower} \frac{1}{\powexp_i-1}\right) \cd \left(\int_0^{\res_{\npower}'}x\cd f_{\npower}(x)dx + \res_{\npower}'\cd (1- F_{\npower}(\res_{\npower}')) \right)
\end{eqnarray*}
\end{lemma}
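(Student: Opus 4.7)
The key observation is that each summand on the left is exactly $\mathbb{E}_{\bigcan \sim \candist_i}\min(\bigcan, \res_i')$, i.e., the numerator of $q(\res_i', \candist_i)$. The plan is therefore to rewrite the whole left-hand sum in terms of the $q$-values, invoke the $\fair = 0$ fairness condition to collapse them to a single common value, and then re-inflate that common value back into the $n$-th group's utilization.

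Concretely, I would first note that
\[ \int_0^{\res_i'} x \cd f_i(x)\,dx + \res_i'\cd (1 - F_i(\res_i')) = \mathbb{E}_{\bigcan \sim \candist_i}\min(\bigcan, \res_i'), \]
and then use $\mathbb{E}_{\bigcan \sim \candist_i}[\bigcan] = \tfrac{1}{\powexp_i - 1}$ (from the power-law formulas recalled just above the theorem) to write
\[ \mathbb{E}_{\bigcan \sim \candist_i}\min(\bigcan, \res_i') = q(\res_i', \candist_i) \cd \mathbb{E}_{\bigcan \sim \candist_i}[\bigcan] = \frac{q(\res_i', \candist_i)}{\powexp_i - 1}. \]
Summing over $i$ turns the left-hand side into $\sum_{i=1}^{\npower} \frac{q(\res_i', \candist_i)}{\powexp_i - 1}$.

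Second, because we are analyzing the $\fair = 0$ case (this was already argued in the paragraph preceding Lemma~\ref{resmaxover}, since the PoF is maximized there), all $q(\res_i', \candist_i)$ are equal; call their common value $q^\star$. Then the sum factors as
\[ \sum_{i=1}^{\npower} \frac{q(\res_i', \candist_i)}{\powexp_i - 1} = q^\star \cd \sum_{i=1}^{\npower} \frac{1}{\powexp_i - 1}. \]
Finally, I express $q^\star$ through its value on group $\npower$: $q^\star = q(\res_{\npower}', \candist_{\npower}) = (\powexp_{\npower}-1)\cd \mathbb{E}_{\bigcan \sim \candist_{\npower}}\min(\bigcan, \res_{\npower}')$. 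Substituting back and re-expanding the expectation as the integral form completes the identity.

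There is no real obstacle here, since nothing beyond the formula $q(\res,\candist) = 1 - (1+\res)^{1-\powexp}$ (which was already derived in the proof of Lemma~\ref{resfairunder}) and the $\fair=0$ equality of the $q$-values is needed; the only thing to be careful about is that we really are in the $\fair = 0$ regime (justified by the fact that PoF is largest there, noted earlier) so that the collapse $q(\res_1',\candist_1) = \cdots = q(\res_{\npower}', \candist_{\npower})$ is legitimate.
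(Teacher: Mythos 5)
Your proposal is correct and is essentially the paper's own argument: both rest on writing each summand as $q(\res_i',\candist_i)\cdot\frac{1}{\powexp_i-1}$ and invoking the $\fair=0$ equality $q(\res_i',\candist_i)=q(\res_{\npower}',\candist_{\npower})$ to express everything through group $\npower$. The only difference is cosmetic --- the paper rewrites each term individually before summing, whereas you factor out the common $q$-value first --- so there is nothing further to add.
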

\begin{proof}
$\fair=0$ fairness requires $q(\res_i', \candist_i) = q(\res_{\npower}', \candist_{\npower})$, which can be written as:
%KPD here
%\begin{eqnarray*}
%\frac{\int_0^{\res_i'}x \cd f_i(x)dx + \res_i'\cd(1- F_i(\res_i'))}{\frac{1}{\powexp_i-1}} = \frac{\int_0^{\res_{\npower}'}x \cd  f_{\npower}(x)dx + \res_{\npower}'\cd(1- F_{\npower}(\res_{\npower}'))}{\frac{1}{\powexp_{\npower}-1}} 
%\end{eqnarray*}
\begin{eqnarray*}
\frac{\int_0^{\res_i'}x f_i(x)dx + \res_i'(1- F_i(\res_i'))}{\frac{1}{\powexp_i-1}} = \frac{\int_0^{\res_{\npower}'}x  f_{\npower}(x)dx + \res_{\npower}'(1- F_{\npower}(\res_{\npower}'))}{\frac{1}{\powexp_{\npower}-1}} 
\end{eqnarray*}
Simple rearrangement shows that $\int_0^{\res_i'}x\cd f_i(x)dx + \res_i'\p{1- F_i(\res_i')}$ is equal to $\frac{\powexp_{\npower}-1}{\powexp_i-1}\left(\int_0^{\res_{\npower}'}x\cd f_{\npower}(x)dx + \res_{\npower}'(1- F_{\npower}(\res_{\npower}'))\right)$. 
This procedure could be repeated to obtain rewritten terms for every $\res_i'$ term with $i \ne \npower$.
\end{proof}

The above lemma allows us to rewrite the entire PoF term as:

$$\frac{\sum_{i=1}^{\npower} \br{\int_0^{\res_i} x \cd f_i(x)dx + \res_i \cd (1 - F_i(\res_i))}}{(\powexp_{\npower}-1)\cd \left(\sum_{i=1}^{\npower}\frac{1}{\powexp_i-1}\right)\left(\int_0^{\res_{\npower}'}x\cd f_{\npower}(x)dx + \res_{\npower}'\cd(1- F_{\npower}(\res_{\npower}')) \right)}
$$

Next, our approach for bounding the PoF will be to bound each term within the sum in the numerator independently. 
\begin{lemma} $PoF_{\npower} \leq \npower$ where
\begin{eqnarray*}
PoF_{\npower}:= \frac{\int_0^{\res_{\npower}}x\cd f_{\npower}(x)dx + \res_{\npower}\cd (1- F_{\npower}(\res_{\npower})) }{(\powexp_{\npower}-1)\cd \left(\sum_{i=1}^{\npower}\frac{1}{\powexp_i-1}\right)\left(\int_0^{\res_{\npower}'}x\cd f_{\npower}(x)dx + \res_{\npower}'\cd(1- F_{\npower}(\res_{\npower}')) \right)} 
\end{eqnarray*}
\end{lemma}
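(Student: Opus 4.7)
\medskip
\noindent\textbf{Proof plan.} My plan is to plug in the closed-form expressions for the truncated-expectation terms that the previous lemmas have already worked out, reduce $PoF_{\npower}$ to a simple one-variable ratio, and then finish by a concavity argument. Recall from the proof of Lemma \ref{resfairunder} that for a power-law distribution with parameter $\powexp$,
$$\int_0^{\res} x\cd f(x)\,dx + \res\cd(1-F(\res)) \;=\; \frac{1}{\powexp-1}\p{1 - \frac{1}{(1+\res)^{\powexp-1}}},$$
since $q(\res,\candist)=1-(1+\res)^{-(\powexp-1)}$ and $\mathbb{E}[\bigcan]=1/(\powexp-1)$. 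Applying this identity to both the numerator (with $\powexp_\npower,\res_\npower$) and the inner factor of the denominator (with $\powexp_\npower,\res_\npower'$) of $PoF_\npower$ cancels a factor of $1/(\powexp_\npower-1)$ from top and bottom, yielding
$$PoF_{\npower} \;=\; \frac{1 - (1+\res_{\npower})^{-(\powexp_{\npower}-1)}}{(\powexp_{\npower}-1)\cd\p{\sum_{i=1}^{\npower}\frac{1}{\powexp_i-1}}\cd\p{1-(1+\res_{\npower}')^{-(\powexp_{\npower}-1)}}}.$$

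The next step is to dispose of the sum in the denominator. Because we ordered the groups so that $\powexp_1\ge\cd\cd\cd\ge\powexp_{\npower}$, every term in the sum satisfies $\frac{1}{\powexp_i-1}\ge 0$ and the $i=\npower$ term alone gives $\sum_{i=1}^{\npower}\frac{1}{\powexp_i-1}\ge\frac{1}{\powexp_{\npower}-1}$, so $(\powexp_{\npower}-1)\cd\sum_i\frac{1}{\powexp_i-1}\ge 1$. Now I plug in the allocation bounds we have in hand: Lemma \ref{resmaxover} with $i=\npower$ gives $\res_{\npower}\le \budget$, while Lemma \ref{resfairunder} gives $\res_{\npower}'\ge \budget/\npower$. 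Writing $c=\powexp_{\npower}-1>0$ and $g(x)=1-(1+x)^{-c}$, and noting $g$ is increasing, these bounds combine to give
$$PoF_{\npower} \;\le\; \frac{g(\budget)}{g(\budget/\npower)}.$$

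The remaining step is to show $g(\budget)\le \npower\cd g(\budget/\npower)$. I will do this by observing that $g$ is concave on $[0,\infty)$ (indeed $g'(x)=c(1+x)^{-(c+1)}$ is decreasing) and satisfies $g(0)=0$. Any such function is sublinear in the sense that $x\mapsto g(x)/x$ is non-increasing on $(0,\infty)$: for $0<y\le x$, concavity plus $g(0)=0$ gives $g(y)=g\p{\tfrac{y}{x}\cd x+\p{1-\tfrac{y}{x}}\cd 0}\ge\tfrac{y}{x}\,g(x)$, so $g(x)/x\le g(y)/y$. Taking $y=\budget/\npower$ and $x=\budget$ gives $g(\budget)\le \npower\cd g(\budget/\npower)$, completing the bound $PoF_{\npower}\le \npower$.

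\medskip
\noindent\textbf{Expected difficulty.} The routine part is algebraic simplification using the power-law closed form and the ordering of the $\powexp_i$'s. The step that carries the real content is the final concavity argument: it is where the specific shape of the power-law CDF (monotone concave in $\res$ with value $0$ at $\res=0$) matters, and it is what converts the crude inequalities $\res_{\npower}\le\budget$ and $\res_{\npower}'\ge\budget/\npower$ into a clean factor of $\npower$. I expect the analogous $PoF_i$ lemmas for $i<\npower$ to follow the same template, with $\res_i\le\budget/(\npower-i+1)$ from Lemma \ref{resmaxover} leading to bounds $PoF_i\le\npower/(\npower-i+1)$ that telescope via the harmonic sum into the overall $\npower\cd H_{\npower}$ bound in the main theorem.
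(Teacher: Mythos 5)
Your proof is correct, and it tracks the paper's reduction exactly up to the bound $PoF_{\npower} \leq g(\budget)/g(\budget/\npower)$ with $g(x) = 1-(1+x)^{-(\powexp_{\npower}-1)}$: the same closed form $\int_0^{\res}x f(x)dx + \res(1-F(\res)) = \frac{1}{\powexp-1}\p{1-(1+\res)^{-(\powexp-1)}}$, the same observation that $(\powexp_{\npower}-1)\sum_i\frac{1}{\powexp_i-1}\geq 1$ lets you discard that factor, and the same substitution of $\res_{\npower}\leq\budget$ and $\res_{\npower}'\geq\budget/\npower$ from Lemmas \ref{resmaxover} and \ref{resfairunder}. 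Where you genuinely diverge is the final step. The paper treats $g(\budget)/g(\budget/\npower)$ as a $0/0$ form at $\budget=0$, computes the limit $\npower$ by L'H\^{o}pital's rule, and then invokes L'H\^{o}pital's Monotone Rule to show the ratio is decreasing in $\budget$, so the limiting value bounds it everywhere. You instead prove $g(\budget)\leq\npower\cd g(\budget/\npower)$ outright from the facts that $g$ is concave and $g(0)=0$, which force $x\mapsto g(x)/x$ to be non-increasing. Your finish is more elementary (no limit computation, no monotone-rule citation) and makes transparent exactly which structural property of the Lomax CDF is being used --- concavity of the marginal utilization with $g(0)=0$ --- a property shared by any distribution supported on $[0,\infty)$, so your argument generalizes more readily. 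One caveat on your closing remark: the concavity shortcut does not transfer verbatim to the companion lemma for $i<\npower$, because there the numerator and denominator involve different exponents $\powexp_i\neq\powexp_{\npower}$ (so they are not values of a single concave function), and the paper retains part of the denominator sum to absorb the mismatch; that is where its L'H\^{o}pital-based template earns its keep.
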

\begin{proof}
We can apply Lemmas \ref{resmaxover} and \ref{resfairunder} to obtain $\res_{\npower} \leq \budget$ and $\res_{\npower}' \geq \frac{\budget}{\npower}$. We additionally use the fact that
$$\int_0^{\res}x\cd f_i(x)dx + \res \cd (1 - F_i(\res)) = \frac{1}{\powexp_i-1}\p{1 - \frac{1}{(1+\res)^{\powexp_i-1}}} = \frac{1}{\powexp_i-1} q(\res, \candist_i)$$
These facts allow us to bound the equation as: 
\begin{eqnarray*}
PoF_{\npower} &\leq& \frac{\frac{1}{\powexp_{\npower}-1}\p{1 - \frac{1}{\p{1 + \budget}^{\powexp_{\npower}-1}}}}{\p{\powexp_{\npower} - 1} \cd \p{\sum_{j=1}^{\npower} \frac{1}{\powexp_j-1}} \cd \frac{1}{\powexp_h-1}\p{1 - \frac{1}{\p{1 + \frac{\budget}{\npower }}^{\powexp_h-1}}}}  \\ 
&=& \frac{1 - \frac{1}{\p{1 + \budget}^{\powexp_{\npower}-1}}}{\p{\powexp_{\npower} - 1} \cd \p{\sum_{j=1}^{\npower} \frac{1}{\powexp_j-1}} \cd \p{1 - \frac{1}{\p{1 + \frac{\budget}{\npower }}^{\powexp_h-1}}}}\\ 
&\leq&  \frac{1 - \frac{1}{\p{1 + \budget}^{\powexp_{\npower}-1}}}{1 - \frac{1}{\p{1 + \frac{\budget}{\npower }}^{\powexp_h-1}}}
\end{eqnarray*}
In the last step, to over-bound $PoF_{\npower}$, we under-bounded a term in the denominator, replacing $(\powexp_{\npower}-1)\cd \left(\sum_{i=1}^{\npower} \frac{1}{\powexp_i-1}\right)$ with $1$.

At $\budget = 0$, this fraction is undefined. We denote the numerator as $s$, the denominator as $t$, and use L'H\^{o}pital's rule to investigate the limit as $\budget \rightarrow 0$. We note that $s' = \p{\powexp_{\npower} - 1}\frac{1}{\p{1 + \budget}^{\powexp_{\npower}}}$ and  $t' = \frac{\powexp_{\npower}-1}{\npower}\frac{1}{\p{1 + \frac{\budget}{\npower}}^{\powexp_{\npower}}}$. 
As $\budget \rightarrow 0$, $s' \rightarrow \powexp_{\npower} -1$ and $t' \rightarrow \frac{\powexp_{\npower} -1}{\npower}$, so $\frac{s'}{t'} \rightarrow \npower$, which tells us that $PoF_{\npower} \rightarrow \npower$ as $\budget \rightarrow 0$. 

Next, we will show that $PoF_{\npower}$ is decreasing in $\budget$, proving that $\npower$ is an upper bound on $PoF_{\npower}$ everywhere. To do this, we will use L'H\^{o}pital's Monotone rule \cite{lhopitalrule}, which states the following:

\begin{lemma}
Given $-\infty < a < c < \infty$, and $s, t$ continuous functions differentiable on $(a,c)$ with $s(a)= t(a) = 0$ or $s(c) = t(c) = 0$ and $g(x) \ne 0 \ \forall x \in (a, c)$, then if $\frac{s'}{t'}$ is increasing (decreasing) on $(a, c)$, then $\frac{s}{t}$ is also increasing (decreasing).
\end{lemma}
The functions in $PoF_{\npower}$ satisfy this rule: $s, t$ are both continuous and differentiable, and for $a = 0$, $s(a) = t(a) = 0$. Using this rule:
$$\frac{s'}{t'} = \frac{\p{\powexp_{\npower} - 1} \frac{1}{\p{1 + \budget}^{\powexp_{\npower}}}}{\frac{\powexp_{\npower} - 1}{\npower} \frac{1}{\p{1 + \frac{\budget}{\npower}}^{\powexp_{\npower}}}} = \frac{\npower \cd \p{1 + \frac{\budget}{\npower}}^{\powexp_{\npower}} }{\p{1 + \budget}^{\powexp_{\npower}}} = \npower \cd \p{\frac{1 + \frac{\budget}{\npower}}{1 + \budget}}^{\powexp_{\npower}}$$
This term is decreasing in $\budget$ because the denominator $1 +\budget$ is increasing faster than the numerator $1 + \frac{\budget}{\npower}$ is increasing. This tells us that $\frac{s'}{t'}$ is decreasing in $\budget$, so $PoF_{\npower}$ is decreasing as well. The upper bound of $\budget = 0$ is an upper bound for all of $PoF_{\npower}$. 
\end{proof}

Next, we consider a bound for any $i \ne \npower$. 
\begin{lemma} For $i \in \br{\npower -1}$, $PoF_{i} \leq \frac{\npower}{\npower - i + 1} $, where
\begin{eqnarray*}
PoF_{i} := \frac{\int_0^{\res_i}x\cd f_i(x)dx + \res_i(1- F_i(\res_i)) }{(\powexp_{\npower}-1)\cd\left(\sum_{j=1}^{\npower}\frac{1}{\powexp_j-1}\right)\left(\int_0^{\res_{\npower}'}x\cd f_{\npower}(x)dx + \res_{\npower}'(1- F_{\npower}(\res_{\npower}')) \right)}
\end{eqnarray*}
\end{lemma}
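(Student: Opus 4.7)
The plan is to repeat the L'H\^{o}pital's monotone rule template used in the preceding lemma, now adapted to the fact that the numerator involves the shape parameter $\powexp_i$ while the denominator involves $\powexp_\npower$. First I would substitute the closed form $\int_0^{\res} x\cd f_i(x)\,dx + \res\cd(1 - F_i(\res)) = \frac{1}{\powexp_i - 1}(1 - (1+\res)^{1-\powexp_i})$ into both the numerator and denominator. The substitution in the denominator causes the $(\powexp_\npower - 1)$ prefactor to cancel against the $(\powexp_\npower - 1)^{-1}$ in the closed form, after which I would lower-bound the leftover sum $\sum_j (\powexp_j - 1)^{-1}$ by its $j = \npower$ summand alone. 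Finally, plugging in $\res_i \leq \budget/(\npower - i + 1)$ (Lemma~\ref{resmaxover}) in the numerator and $\res_\npower' \geq \budget/\npower$ (Lemma~\ref{resfairunder}) in the denominator yields
\begin{equation*}
PoF_i \;\leq\; \frac{\powexp_\npower - 1}{\powexp_i - 1}\cd\frac{1 - (1 + \budget/(\npower - i + 1))^{1-\powexp_i}}{1 - (1 + \budget/\npower)^{1-\powexp_\npower}}.
\end{equation*}

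Call the remaining fraction $s(\budget)/t(\budget)$. Both vanish at $\budget = 0$, so L'H\^{o}pital gives $\lim_{\budget \to 0^+} s/t = s'(0)/t'(0)$, and a direct computation yields $s'(0) = (\powexp_i - 1)/(\npower - i + 1)$ and $t'(0) = (\powexp_\npower - 1)/\npower$, whence the limit equals $(\powexp_i - 1)\npower/[(\powexp_\npower - 1)(\npower - i + 1)]$. Multiplying by the prefactor $(\powexp_\npower - 1)/(\powexp_i - 1)$ gives exactly $\npower/(\npower - i + 1)$, so it remains to show that this limit is in fact a supremum, i.e.\ that $s/t$ is decreasing on $[0,\infty)$.

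For that I would invoke L'H\^{o}pital's monotone rule as in the previous lemma, which reduces monotonicity of $s/t$ to monotonicity of $s'/t'$. A short logarithmic-derivative computation on
\begin{equation*}
\frac{s'(\budget)}{t'(\budget)} = \frac{(\powexp_i - 1)\npower}{(\powexp_\npower - 1)(\npower - i + 1)}\cd\frac{(1 + \budget/\npower)^{\powexp_\npower}}{(1 + \budget/(\npower - i + 1))^{\powexp_i}}
\end{equation*}
reduces the decreasing claim to the pointwise inequality $\frac{\powexp_\npower}{\npower + \budget} \leq \frac{\powexp_i}{(\npower - i + 1) + \budget}$, which is immediate from $\powexp_i \geq \powexp_\npower$ (the group ordering) and $\npower - i + 1 \leq \npower$ (since $i \geq 1$). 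The main obstacle is exactly this alignment step: the argument works cleanly only because the same parameter ordering that supplied the bounds on $\res_i$ and $\res_\npower'$ is also what makes the mismatched exponents $\powexp_i$ and $\powexp_\npower$ collapse so that the prefactor $(\powexp_\npower - 1)/(\powexp_i - 1)$ cancels the residual shape-parameter dependence in $s'(0)/t'(0)$, leaving a bound that is independent of all the $\powexp_j$'s.
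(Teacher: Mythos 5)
Your proof is correct and follows essentially the same route as the paper: substitute the closed form of the utilization, apply the bounds $\res_i \leq \budget/(\npower-i+1)$ and $\res_\npower' \geq \budget/\npower$, evaluate the $\budget \to 0$ limit by L'H\^{o}pital, and show the ratio is decreasing in $\budget$ via L'H\^{o}pital's monotone rule. The only (immaterial) difference is that you lower-bound $\sum_j(\powexp_j-1)^{-1}$ by its $j=\npower$ summand alone, whereas the paper retains the $j=i$ and $j=\npower$ summands and only at the end discards the resulting extra factor $\bigl(\tfrac{\powexp_\npower-1}{\powexp_i-1}+1\bigr)^{-1}\le 1$; both choices yield the same bound $\npower/(\npower-i+1)$.
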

\begin{proof}Using bounds $\res_i \leq \frac{\budget}{\npower-i+1}$ and $\res_{\npower}' \geq \frac{\budget}{\npower}$ from Lemmas \ref{resmaxover} and \ref{resfairunder}, we substitute and rewrite: 
\begin{eqnarray*}
PoF_i &\leq & \frac{\frac{1}{\powexp_i-1}\p{1 - \frac{1}{\p{1 + \frac{\budget}{\npower - i + 1 }}^{\powexp_i-1}}}}{\p{\powexp_{\npower} - 1} \cd \p{\sum_{j=1}^{\npower} \frac{1}{\powexp_j-1}} \cd \frac{1}{\powexp_h-1}\p{1 - \frac{1}{\p{1 + \frac{\budget}{\npower }}^{\powexp_h-1}}}}\\ 
&=& \frac{1 - \frac{1}{\p{1 + \frac{\budget}{\npower - i + 1 }}^{\powexp_i-1}}}{\p{\powexp_{i} - 1} \cd \p{\sum_{j=1}^{\npower} \frac{1}{\powexp_j-1}} \cd \p{1 - \frac{1}{\p{1 + \frac{\budget}{\npower }}^{\powexp_h-1}}}} \\ 
&<& \frac{1 - \frac{1}{\p{1 + \frac{\budget}{\npower - i + 1 }}^{\powexp_i-1}}}{\p{\frac{\powexp_{i} - 1}{\powexp_{\npower}-1} + 1} \cd \p{1 - \frac{1}{\p{1 + \frac{\budget}{\npower }}^{\powexp_h-1}}}}
\end{eqnarray*}
At $\budget = 0$, this is again undefined. Using the same process of rewriting the fraction as $\frac{s}{t}$, we get 
$$s' = \frac{\powexp_i-1}{\npower - i + 1} \frac{1}{\p{1 + \frac{\budget}{\npower - i +1 }}^{\powexp_i}}\quad t' = \frac{\powexp_{\npower} - 1 + \powexp_i-1}{\npower} \frac{1}{\p{1 + \frac{\budget}{\npower}}^{\powexp_{\npower}}} $$
Using L'H\^{o}pital's rule again, as $\budget \rightarrow 0$, $\frac{s'}{t'}$ goes to a ratio of: 
$$\frac{\frac{\powexp_i - 1}{\npower - i + 1}}{\frac{\powexp_{\npower} - 1 +\powexp_i - 1}{\npower}} = \frac{\powexp_i - 1}{\powexp_{\npower} - 1 + \powexp_i - 1} \cd \frac{\npower}{\npower - i + 1} = \frac{1}{\frac{\powexp_{\npower} - 1}{\powexp_i-1} + 1} \cd \frac{\npower}{\npower - i + 1}$$
$\frac{\powexp_{\npower} - 1}{\powexp_i-1}$ is decreasing in $\powexp_i$, so this overall term is increasing in $\powexp_i$. $\powexp_i \rightarrow \infty$ upper-bounds this fraction by $\frac{1}{0 + 1} \cd \frac{\npower}{\npower - i + 1} = \frac{\npower}{\npower - i + 1}$
which provides the limit as $\budget \rightarrow 0$. 
Again, we use L'H\^{o}pital's Monotone rule to show that $\frac{s}{t}$ is decreasing in $\budget$ and that the $\budget = 0$ upper bound is sufficient  $\forall \budget$. This tells us that $\frac{s'}{t'}$ equals
$$\frac{\frac{\powexp_i -1}{\npower - i + 1} \frac{1}{\p{1 + \frac{\budget}{\npower - i + 1}}^{\powexp_i}}}{\frac{\powexp_{\npower} - 1 + \powexp_i-1}{\npower} \frac{1}{\p{1 + \frac{\budget}{\npower}}^{\powexp_{\npower}}}} =\frac{\powexp_i -1}{\npower - i + 1} \cd \frac{\npower}{\powexp_{\npower} - 1 + \powexp_i-1} \cd \frac{\p{1 + \frac{\budget}{\npower}}^{\powexp_{\npower}}}{\p{1 + \frac{\budget}{\npower - i + 1}}^{\powexp_i}}$$
The term in the denominator has a base that grows more quickly in $\budget$ (because $1 + \frac{\budget}{\npower - i + 1}$ increases in $\budget$ faster than  $1 + \frac{\budget}{\npower}$) and a larger exponent (because $\powexp_i \geq \powexp_{\npower}$). Because of this, the denominator grows more quickly in $\budget$, so $\frac{s'}{t'}$ is decreasing in $\budget$. L'H\^{o}pital's Monotone rule tells us that $\frac{s}{t}$ is decreasing in $\budget$ as well, and so the over bound at $\budget = 0$ is an over bound for the entire function. 
\end{proof}

These lemmas, taken together, give us a bound for the overall PoF in terms of $H_{\npower}$, the $\npower$th harmonic number:
$$\npower + \sum_{i=1}^{\npower-1}\left[\frac{\npower}{\npower-i+1}  \right] = \npower \cd H_{\npower}$$

We also investigated this bound empirically by directly calculating the PoF for groups with different power law distributions. We found that the PoF is typically much lower than this theoretical bound, frequently around 1.1. This result tells us that, in general, the most max-utilizing allocations are already very close in utilization to $\fair=0$ fair solutions. Overall, this provides an optimistic outlook for resource allocation in the numerous real-world scenarios modeled by power law distributions.

\section{Conclusion}

In this work we have analyze a problem in fair resource allocation from the recent literature \cite{Elzayn_2019}.  Starting from a set of open questions around the relationship between fair allocations and maximum-utilization allocations, we obtained upper bounds on the gap between the utilization of optimal and fair solutions --- formalized as the {\em Price of Fairness} (PoF).  In addition to bounds for general distributions, we showed that a number of natural families of distributions exhibit no gap at all: there exist optimal allocations that also achieve perfect fairness guarantees. Finally, we established constant upper bounds for power-law distributions.

There are a number of further directions suggested by this work.  To begin with, it would be interesting to try characterizing the set of distributions for which the Price of Fairness is equal to 1; this is a fundamental distributional property that applies more broadly than it initially appears.  It would also be valuable to extend the set of distributions for which we have the techniques to prove constant upper bounds on the Price of Fairness. 

Additionally, future work could explore different definitions of fairness. If we consider models where candidates are not interchangeable and persist from one time state to another, it might be considered more \enquote{fair} to allocate resources preferentially to those who have been waiting longest for resources. Finally, the model so far largely assumes a fixed amount of resources. In some cases, though, the amount of resources could be increased or decreased and it could be enlightening to consider how modifying the availability of resources might impact the tradeoff between utilization and fairness. By developing a richer understanding of the behavior of different classes of distributions in this model, we can further our understanding of the interplay between fairness and utilization when we allocate resources with this type of uncertainty.

%%
%% The acknowledgments section is defined using the "acks" environment
%% (and NOT an unnumbered section). This ensures the proper
%% identification of the section in the article metadata, and the
%% consistent spelling of the heading.
\begin{acks}
This work was in part supported by a Simons Investigator Award,
a grant from the MacArthur Foundation,
and NSF grants DGE-1650441, CCF-1740822, and SES-1741441. We are grateful to Manish Raghavan, the \href{https://aipp.cis.cornell.edu/}{AI in Policy and Practice} working group at Cornell, and Raunak Kumar for invaluable discussions. We would also like to thank four anonymous reviewers for their kind and insightful feedback. 
\end{acks}

%%
%% The next two lines define the bibliography style to be used, and
%% the bibliography file.
\bibliographystyle{ACM-Reference-Format}
\bibliography{sample-base}

%%
%% If your work has an appendix, this is the place to put it.
\clearpage
\appendix

\section{Summary table}\label{table}

Table \ref{tab:overview} displays a high-level summary of the results of this paper. The main point is to highlight the distinction between discrete vs. continuous \emph{resource allocation} and discrete vs. continuous \emph{candidate probability distributions}. The former distinction produces more substantive differences in results, whereas the latter distinction mainly requires different tools of analysis but similar results. 

\begin{table*}[]
    \centering
    \begin{tabular}{p{2cm}|p{6cm}|p{6cm}}
         & Discrete candidate probability  distribution & Continuous candidate probability distribution\\ \hline
         Discrete \newline resource \newline  allocation &          \begin{itemize}
             \item This situation is the main focus of \citein{Elzayn_2019}..
             \item \citein{ganchev2009censored} provides a greedy allocation algorithm for this situation.
             \item Appendix \ref{supp} uses an example of this type to show unbounded PoF for discrete allocation in general
         \end{itemize} &  
          \\  \hline
         Fractional resource allocation &
         \begin{itemize}
             \item Appendix \ref{algorithms} shows that the same algorithm from \citein{ganchev2009censored} returns the max-utilizing allocation in this situation.  
             \item Section \ref{generalbounds} uses an example of this type to show that the PoF can be arbitrarily large for $\fair =0$ in fractional allocation case.
         \end{itemize} & 
         \begin{itemize}
             \item Appendix \ref{algorithms} provides an algorithm to calculate a max-utilizing allocation in this situation. 
             \item Section \ref{pof1} shows that exponential \& Weibull distributions have PoF $=1$ in this situation.  
             \item Section \ref{powerlaw} shows that power law distributions have PoF equal to a small constant in this situation. 
         \end{itemize}    \\ 
         & \multicolumn{2}{p{10cm}}{\emph{Results for both discrete \& continuous distributions}: \begin{itemize}
            \item Section \ref{generalbounds} shows that $\forall \ \fair > 0$, it is possible to create an $\fair$-fair allocation with PoF $\leq 1/\fair$.
         \end{itemize}} 
    \end{tabular}
    \caption{Summary of results}
    \label{tab:overview}
    \vspace{-6mm}
\end{table*}
\section{Supplementary lemmas}\label{supp}
\begin{lemma}
Under integer allocation of resources, for any desired level of fairness $\fair < 1$ and any desired PoF $\pof$, we can create a set of $\ngroup$ groups, candidate distributions across these groups $\{\candist_i\}$, and a level of resources $\budget$ such that the PoF is greater than $\pof$.
\end{lemma}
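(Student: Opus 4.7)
The plan is to construct a pair of groups whose parameters depend on $\fair$ and $\pof$, arranged so that the coarseness of integer allocation forces any $\fair$-fair choice to route the entire budget onto a group whose candidates rarely appear. The key observation is that if a group has deterministic demand equal to $1$, then its per-candidate probability $q$ is a $\{0,1\}$-valued step function of the integer allocation; under $\fair$-fairness with $\fair<1$, that group cannot receive any unit unless the other group is essentially saturated, which we will arrange to be infeasible given a tiny budget.

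Concretely, I would take $\ngroup=2$, $\budget=1$, and candidate distributions
\[
P_1(\bigcan=1)=1,\qquad P_2(\bigcan=K)=p,\quad P_2(\bigcan=0)=1-p,
\]
with $K=\lceil 1/\fair\rceil$ (for $\fair>0$) and $p=1/(2\pof)$. The only integer allocations with $\res_1+\res_2=\budget=1$ are $(1,0)$ and $(0,1)$. Using $q(\res,\candist)=\mathbb{E}_{\bigcan\sim\candist}[\min(\bigcan,\res)]/\mathbb{E}_{\bigcan\sim\candist}[\bigcan]$, a direct computation gives $q(1,\candist_1)=1$, $q(0,\candist_1)=0$, $q(0,\candist_2)=0$, and $q(1,\candist_2)=1/K$. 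So $(1,0)$ has $|q_1-q_2|=1>\fair$ and is \emph{not} $\fair$-fair, while $(0,1)$ has $|q_1-q_2|=1/K\leq\fair$ and \emph{is} $\fair$-fair by the choice of $K$. The utilizations are $U(1,0)=1$ and $U(0,1)=p$, so the unconstrained maximum is $1$ and the $\fair$-fair maximum is $p$, giving $\text{PoF}(\fair)=1/p=2\pof>\pof$.

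The thing to verify most carefully is the exhaustive enumeration of allocations: any integer $(\res_1,\res_2)$ with $\res_1+\res_2=1$ and $\res_i\geq 0$ is either $(1,0)$ or $(0,1)$, so there is no hidden third option, and values $\res_i\geq 2$ are ruled out by the budget constraint. The corner case $\fair=0$ needs a separate sentence: here $1/K>0=\fair$ for every finite $K$, so neither allocation is $\fair$-fair, the feasible set is empty, and $\text{PoF}(0)=\infty>\pof$ for any finite $\pof$. Together these cover all $\fair\in[0,1)$ and $\pof\geq 1$, completing the proposed construction.
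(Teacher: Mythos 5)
Your construction is correct, and it is genuinely simpler than the paper's. The paper also builds deterministic-demand groups with parameters tied to $\fair$ and $\pof$, but it forces unfairness by making the \emph{number of groups exceed the budget}: with $\budget+2$ groups and $\budget$ resources, some group necessarily has $q=0$, so every group must have $q\le\fair$, which starves all the low-demand groups and caps the high-demand group at $\lfloor\fair n\rfloor$ resources; the PoF bound then comes from a somewhat delicate choice of the multiplicity $m'$ and floor/ceiling estimates. You instead use only two groups and $\budget=1$, exploiting integrality at its sharpest point: the deterministic-demand-$1$ group has $q\in\{0,1\}$, so under any $\fair<1$ it can never receive the single resource in a fair allocation, and the only fair option sends that resource to a group that has demand with probability $p=1/(2\pof)$. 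Your enumeration of the feasible set $\{(1,0),(0,1)\}$ is exhaustive, the computation $q(1,\candist_2)=1/K\le\fair$ for $K=\lceil 1/\fair\rceil$ is right, and the resulting PoF of $2\pof$ settles the claim with far less bookkeeping. What the paper's version buys in exchange for its complexity is that all its distributions are point masses (no small tail probabilities that shrink with $\pof$), whereas your second group's demand probability decays as $1/(2\pof)$; neither feature is required by the lemma. One shared caveat: at $\fair=0$ both constructions leave the fair feasible set empty, so the PoF is only ``greater than $\pof$'' under the convention that an infeasible denominator yields $+\infty$ --- you at least flag this explicitly, which the paper does not.
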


\begin{proof}
We will use the inputs $\fair$ and $\pof$ to design a set of candidate distributions such that the PoF will be $ > \pof$. The building block of this proof will be two distinct kinds of candidate distributions. Both are discrete and deterministic. 
$$\bigcan_i \sim \candist_{high} \quad P(\bigcan_i = \can) = 
\begin{cases} 
       1 & \can = n  \\
       0 & \text{ otherwise} \\
\end{cases}
$$
$$\bigcan_i \sim \candist_{low} \quad P(\bigcan_i = \can) = 
\begin{cases} 
       1 & \can = n' \\
       0 & \text{ otherwise} \\
\end{cases}
$$
We will set the parameters based on the input values: set $n = \ceil{ \frac{1}{\fair}}$ and $n' =\floor{\frac{1}{\fair}} $ if $\frac{1}{\fair}$ is not an integer and $\frac{1}{\fair} -1$ if it is. Note that this implies that $n \geq \frac{1}{\fair}$ and $n' < \frac{1}{\fair}$. We will find it useful to define a parameter $m'$ to be any integer such that
$$m' > \pof \cd \frac{\fair (1+\fair)}{1-\fair}$$
As a reminder, $0 \leq \fair < 1$, $\pof \geq 1$ so this number is always positive. We will set the total number of resources to be equal to 
$$\budget = n +n' \cd m' $$
Finally, we will set $\{\candist_i\}$ so that we have 1 of the $\candist_{high}$ groups and $m = \budget + 1$ of the $\candist_{low}$ groups, for $\ngroup = \budget + 2$ overall.

Next, we calculate the PoF. Under the max-utilization allocation, we have sufficient resources to have $\budget$ candidates receive the resource. One allocation that achieves this upper bound is to put $n$ resources in the $\candist_{high}$ group and $n'$ resources in each of $m'$ of the $\candist_{low}$ groups. 

Next, we consider $\fair$-fair allocation. $\ngroup = \budget + 2 > \budget$, so we have insufficient resources to put 1 resource in each group. This means that the probability of a candidate receiving the resource will be 0 for some group, so the probability of a candidate receiving the resource in any group must be $\leq \fair$. 

By construction, putting 1 resource in any of the $\candist_{low}$ groups means that a candidate has a $\frac{1}{n'}> \fair$ chance of receiving the resource, so a max-utilizing fair allocation must have 0 resources at each of the $\candist_{low}$ groups. The highest allocation to $\candist_{high}$ that has $q(\res_i, \candist_{high}) \leq \fair$ is $\lfloor \fair n \rfloor$. 

The PoF is: 
$$\frac{n + m'\cd n'}{\lfloor \fair n \rfloor }$$
We will place a lower bound of $\pof$ on the overall fraction by putting a lower bound on the numerator and an upper bound on the denominator. For the lower bound on the numerator, we have: 
$$n + m' \cd n' \geq \frac{1}{\fair} + m'\cd n' \geq \frac{1}{\fair} + m'\cd \p{\frac{1}{\fair} - 1} > \frac{1}{\fair} + \pof \cd \frac{\fair \cd  (1+\fair)}{1-\fair} \cd  \p{\frac{1}{\fair} - 1}$$
We can simplify the overall terms and strategically rewrite:
$$ \frac{1}{\fair} + \pof \cd \frac{\fair  \cd (1+\fair)}{1-\fair} \cd \frac{1 - \fair}{\fair} = \frac{1}{\fair} + \pof \cd (1 + \fair) = \frac{1}{\fair}\left(1 + \pof \cd \fair  \cd  (1 + \fair)\right)$$
For an upper bound on the denominator, we know that: 
$$\floor{\fair n} \leq \fair \cd  n  = \fair \ceil{ \frac{1}{\fair}} < \fair \cd \p{\frac{1}{\fair} + 1} = \fair + 1$$
where the last inequality comes from the fact that $\lceil \frac{1}{\fair}\rceil < \frac{1}{\fair}+1$. 
Recombining the fractions, we get: 
$$\text{PoF} > \frac{\frac{1}{\fair}\left(1 + \pof \cd \fair \cd (1 + \fair)\right)}{\fair + 1} > \frac{\frac{1}{\fair}\br{\fair \cd \pof \cd (1+\fair)}}{\fair + 1} = \pof$$
So the PoF is strictly greater than $\pof$, as desired. This result shows that for any PoF level $\pof$, we can construct a set of candidate distributions with a higher PoF, so the PoF is unbounded. 
\end{proof}

\begin{lemma}
For the case where the allocation is equal to $ \res + \epsilon$, with $\res \in \mathbb{Z}$ and $0<\epsilon <1$ over a discrete distribution $\candist$, the following perspectives provide the same expected utilization: 
\begin{itemize}
    \item Viewing the allocation as deterministically providing $\res + \epsilon$ resources. 
    \item Viewing it as providing $\res$ resources with probability $1- \epsilon$ and $\res+1$ resources with probability $\epsilon$. 
\end{itemize} 
\end{lemma}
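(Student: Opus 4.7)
The plan is to compute both expectations directly and show they coincide. Since $\candist$ is discrete (supported on non-negative integers), I will split each expectation into a sum over the range $\{0, 1, \ldots, \res\}$ and a complementary tail over $\{\res+1, \res+2, \ldots\}$, then match the pieces.

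For the deterministic interpretation, I will use the fact that when $\can$ is a non-negative integer, $\min(\can, \res+\epsilon) = \can$ whenever $\can \leq \res$ and $\min(\can, \res+\epsilon) = \res+\epsilon$ whenever $\can \geq \res+1$ (since $\res < \res+\epsilon < \res+1$). This immediately gives
\[
\mathbb{E}_{\bigcan \sim \candist}\br{\min(\bigcan, \res+\epsilon)} \;=\; \sum_{\can=0}^{\res} \can \cd P(\bigcan=\can) \;+\; (\res+\epsilon)\cd P(\bigcan \geq \res+1).
\]

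For the probabilistic interpretation, I will expand each of $\mathbb{E}[\min(\bigcan,\res)]$ and $\mathbb{E}[\min(\bigcan,\res+1)]$ using the same split. The bookkeeping step is that $\mathbb{E}[\min(\bigcan,\res)] = \sum_{\can=0}^{\res}\can\cd P(\bigcan=\can) + \res\cd P(\bigcan \geq \res+1)$, obtained by absorbing the $\can=\res$ contribution from the tail into the left-hand sum, and similarly $\mathbb{E}[\min(\bigcan,\res+1)] = \sum_{\can=0}^{\res}\can\cd P(\bigcan=\can) + (\res+1)\cd P(\bigcan \geq \res+1)$. Then the convex combination $(1-\epsilon)\mathbb{E}[\min(\bigcan,\res)] + \epsilon\,\mathbb{E}[\min(\bigcan,\res+1)]$ keeps the left-hand sum intact and combines the tail coefficients as $(1-\epsilon)\res + \epsilon(\res+1) = \res+\epsilon$, matching the deterministic expression term-by-term.

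There is no real obstacle: the statement is essentially linearity of the $\min$ function on the interval $[\res, \res+1]$, combined with the observation that a discrete $\bigcan$ never lands strictly inside that interval. The only care needed is to handle the boundary term at $\can = \res$ consistently across both expansions so that the tail probability appears as $P(\bigcan \geq \res+1)$ in every expression; once done, the two sides are visibly identical.
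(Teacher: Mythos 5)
Your proof is correct and takes essentially the same route as the paper's: both expand the deterministic expectation and the two integer-allocation expectations over $\{0,\ldots,\res\}$ plus a tail, and observe that the tail coefficients combine as $(1-\epsilon)\res + \epsilon(\res+1) = \res+\epsilon$. Your bookkeeping --- normalizing every expansion to the common tail $P(\bigcan \geq \res+1)$ before taking the convex combination --- is, if anything, a slightly cleaner version of the paper's step of absorbing the leftover $\epsilon\,(\res+1)\,f(\res+1)$ term.
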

\begin{proof}
If we view this allocation deterministically, we could convert $\candist$ to a continuous distribution that has point mass probability at the integers and 0 mass elsewhere. In this case, the expected utilization would be equal to: 
$$\sum_{\can=0}^{\res}\can\cd f(\can) + (1-F(\res))\cd(\res + \epsilon)$$
We would like our interpretation under the probablistic interpretation to simplify down to an equivalent term. This interpretation looks like: 
\begin{eqnarray*}
\mathbb{E}_{\resdist}\mathbb{E}_{\candist}\min(\bigcan,\bigres) = (1-\epsilon)\cd \left(\sum_{\can=0}^{\res}\can\cd f(\can) + \res\cd \left(\sum_{\can=\res+1}^{\infty}\can\cd f(\can)\right)\right) + \\  \epsilon \cd\left(\sum_{\can=0}^{\res+1}\can\cd f(\can) + (\res+1)\left(\sum_{\can=\res+2}^{\infty}\can\cd f(\can)\right)\right)
\end{eqnarray*}
Combining the first terms in the sum first: 
\begin{eqnarray*}
&&\left(\sum_{\can=0}^{\res}\can\cd f(\can)\right) + \epsilon \cd (\res+1)\cd f(\res+1) \\ 
&& +\ (1-\epsilon)\cd \res\cd \left(\sum_{\can=\res+1}^{\infty}\can\cd f(\can)\right)+ \epsilon\cd (\res+1)\cd\left(\sum_{\can=\res+2}^{\infty}\can\cd f(\can)\right)
\end{eqnarray*}
Absorbing the leftover $\epsilon\cd(\res+1)\cd f(\res+1)$ term inside: 
\begin{eqnarray*}
& = & \left(\sum_{\can=0}^{\res}\can\cd f(\can)\right) + (1-\epsilon)\cd \res\cd \left(\sum_{\can=\res+1}^{\infty}\can\cd f(\can)\right) \\ 
& & + \  \epsilon\cd (\res+1)\cd\left(\sum_{\can=\res+1}^{\infty}\can\cd f(\can)\right) \\ 
& = & \left(\sum_{\can=0}^{\res}\can\cd f(\can)\right) + \left(\sum_{\can=\res+1}^{\infty}\can\cd f(\can)\right)\p{(1-\epsilon)\cd \res + \epsilon\cd (\res+1)}\\
 &= &\left(\sum_{\can=0}^{\res}\can\cd f(\can)\right) + \left(1- F(\res)\right)(\res +\epsilon)
\end{eqnarray*}
As desired. 
\end{proof}

\begin{lemma}
The probability of service function $q({\bf{\res}}, \{\candist\})$ is continuous. 
\end{lemma}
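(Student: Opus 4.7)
The plan is to reduce continuity of $q(\mathbf{\res}, \{\candist\})$ to continuity in each coordinate separately, and then to use the standard identity for the expected value of a truncated non-negative random variable. Since the vector $q(\mathbf{\res}, \{\candist\}) = (q(\res_1, \candist_1), \ldots, q(\res_N, \candist_N))$ has the $i$-th coordinate depending only on $\res_i$ and $\candist_i$, continuity of the whole map reduces to showing that $\res \mapsto q(\res, \candist)$ is continuous for each fixed distribution $\candist$ on $[0,\infty)$.

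The key observation is the layer-cake identity
\[
\mathbb{E}_{\bigcan \sim \candist}\min(\bigcan, \res) \;=\; \int_0^\infty P(\min(\bigcan,\res) > t)\,dt \;=\; \int_0^\res \bigl(1 - F(t)\bigr)\,dt,
\]
valid for any non-negative random variable $\bigcan$ with CDF $F$, and in particular applicable uniformly to both the discrete and continuous cases that the paper considers. This form is convenient because it bypasses the separate case analysis (jumps in the discrete CDF versus smooth behavior in the continuous CDF) that would otherwise appear if one differentiated $\int_0^\res x f(x)dx + \res(1-F(\res))$ directly. The integrand $1 - F(t)$ lies in $[0,1]$, so the function $\res \mapsto \int_0^\res (1-F(t))\,dt$ is in fact $1$-Lipschitz in $\res$, and therefore continuous.

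The denominator $\mathbb{E}_{\bigcan \sim \candist}\bigcan$ is a fixed positive constant (by the standing assumption in Section~\ref{model} that $\mathbb{E}_{\bigcan \sim \candist_i}\bigcan > 0$ for each group). Dividing a continuous function by a positive constant preserves continuity, so $q(\res,\candist)$ is continuous in $\res$. Combining the coordinate-wise continuity statements gives continuity of $q(\mathbf{\res}, \{\candist\})$.

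I do not expect any genuine obstacle here: the only mildly subtle point is justifying that the layer-cake identity treats point masses and densities uniformly (so that we get a clean Lipschitz estimate without splitting cases). If the reviewer prefers a more elementary argument, one can instead write out $\mathbb{E}[\min(\bigcan,\res)]$ directly in each regime and verify at each integer atom $k$ of a discrete distribution that the jump $k\cdot P(\bigcan=k)$ acquired by the sum $\sum_{c\le \res}c\cdot P(\bigcan=c)$ as $\res$ crosses $k$ is exactly cancelled by the corresponding drop in the tail term $\res\cdot P(\bigcan > \res)$, yielding a matched left/right limit at every point. Either route delivers the claim.
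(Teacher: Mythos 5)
Your proposal is correct, and it takes a genuinely different route from the paper. The paper proves the lemma by a direct $\epsilon$--$\delta$ computation: it fixes $\delta < \epsilon \cdot \mathbb{E}_{\bigcan \sim \candist}[\bigcan]$ and then expands $q(\res+\delta,\candist) - q(\res,\candist)$ explicitly, with two separate case analyses --- one for discrete $\candist$ (where it additionally arranges $\lfloor \res \rfloor = \lfloor \res+\delta\rfloor$ so the partial sum does not change) and one for continuous $\candist$ (where it adds and subtracts $\delta\int_{\res}^{\res+\delta}f(\can)\,d\can$ and discards a nonpositive term). Your argument replaces all of this with the single identity $\mathbb{E}[\min(\bigcan,\res)] = \int_0^{\res}(1-F(t))\,dt$, which holds for any nonnegative random variable and immediately yields that the numerator is $1$-Lipschitz in $\res$; dividing by the positive constant $\mathbb{E}[\bigcan]$ gives Lipschitz continuity of $q$ with constant $1/\mathbb{E}[\bigcan]$. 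This is both shorter and strictly stronger than the paper's conclusion (Lipschitz rather than merely continuous), and it treats the discrete and continuous cases uniformly; note also that the paper's computation only bounds $q(\res+\delta,\candist)-q(\res,\candist)$ for $\delta>0$, so it implicitly relies on monotonicity of $q$ to get two-sided continuity, whereas your Lipschitz estimate is symmetric from the start. The only caveat worth recording is the one you already flag: the tail-integral identity is insensitive to whether $F$ is taken left- or right-continuous, since the two versions disagree only on a countable set of atoms, which does not affect the integral. With that remark in place, your proof is complete.
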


\begin{proof}
We work directly from the formal definition of continuity, which is that: 
$$\forall \epsilon >0 \ \exists \delta > 0 \ \text{ s.t. } \vert x - y \vert < \delta \ \Rightarrow \ \vert f(x) - f(y) \vert < \epsilon$$ 
The probability of service function is $q(\res_i, \candist_i) = \frac{\mathbb{E}_{\bigcan \sim \candist} \min(\bigcan, \res)}{\mathbb{E}_{\bigcan \sim \candist}\br{\bigcan}}$. 

We first consider the case of discrete $\candist$. If we are trying to prove continuity at a particular point $\res$, we pick $\delta < \epsilon \cd \mathbb{E}_{\bigcan \sim \candist}\br{\bigcan}$. We additionally require $\lfloor \res \rfloor = \lfloor \res + \delta \rfloor$. This is always possible if we make $\delta$ small enough: if $\res$ is an integer, then any $\delta < 1$ works. If $\res$ is not an integer, then there is some positive distance $d$ between it and the next integer. Then, any $\delta < d$ will work. Then 
\begin{eqnarray*}
&& q(\res + \delta, \candist) - q(\res, \candist) \\
&=& \frac{1}{\mathbb{E}_{\bigcan \sim \candist}\br{\bigcan}} \cd  \left(\sum_{\can=0}^{\lfloor \res + \delta \rfloor}\can\cd P(\bigcan = \can) +(\res + \delta)P(\can > \lfloor \res + \delta \rfloor)\right) \\  
& &   - \frac{1}{\mathbb{E}_{\bigcan \sim \candist}\br{\bigcan}} \cd  \left(\sum_{\can=0}^{\lfloor \res \rfloor}\can \cd P(\bigcan = \can) + \res \cd P(\bigcan > \lfloor \res \rfloor)\right) \\ 
&=& \frac{1}{\mathbb{E}_{\bigcan \sim \candist}\br{\bigcan}} \cd (\res + \delta - \res)P(\bigcan > \lfloor \res \rfloor) \\ 
&<&  \frac{1}{\mathbb{E}_{\bigcan \sim \candist}\br{\bigcan}} \cd \epsilon \cd \mathbb{E}_{\bigcan \sim \candist}\br{\bigcan} \cd  P(\bigcan > \lfloor \res \rfloor) \\ 
&\leq& \epsilon
\end{eqnarray*}
Next, we consider the continuous $\candist$ case. Again, we pick $\delta < \epsilon \cd \mathbb{E}_{\bigcan \sim \candist}\br{\bigcan}$. 
\begin{eqnarray*}
&& q(\res + \delta, \candist) - q(\res, \candist) \\  
&=&\frac{1}{\mathbb{E}_{\bigcan \sim \candist}\br{\bigcan}} \cd \left(\int_0^{\res+\delta}\can \cd f(\can)d\can + (\res + \delta) \int_{\res+\delta}^{\infty} f(\can) d\can \right)\\ 
 && -  \frac{1}{\mathbb{E}_{\bigcan \sim \candist}\br{\bigcan}} \cd \left(\int_0^{\res}\can\cd f(\can)d\can + \res \int_{\res}^{\infty} f(\can) d\can\right)\\ 
&=& \frac{1}{\mathbb{E}_{\bigcan \sim \candist}\br{\bigcan}} \cd \p{\int_{\res}^{\res+\delta}\can\cd f(\can) d\can + (\res + \delta)\int_{\res+\delta}^{\infty}f(\can)d\can} \\
&& \frac{1}{\mathbb{E}_{\bigcan \sim \candist}\br{\bigcan}} \cd \p{- \res \int_{\res}^{\infty}f(\can)d\can}
\end{eqnarray*}
We add and subtract $\delta \cd  \int_{\res}^{\res+\delta}f(\can)d\can$
and split up one of the terms in the middle:
\begin{eqnarray*}
& =& \frac{1}{\mathbb{E}_{\bigcan \sim \candist}\br{\bigcan}}  \p{\int_{\res}^{\res+\delta}\can f(\can) d\can + (\res + \delta)\int_{\res+\delta}^{\infty}f(\can)d\can} 
\\
&& + \ \frac{1}{\mathbb{E}_{\bigcan \sim \candist}\br{\bigcan}} \cd \p{- \res \int_{\res}^{\res + \delta}f(\can)d\can  - \res \int_{\res + \delta}^{\infty}f(\can)d\can}
\\
&&+ \ \frac{1}{\mathbb{E}_{\bigcan \sim \candist}\br{\bigcan}} \cd \p{ \delta  \int_{\res}^{\res+\delta}f(\can)d\can - \delta  \int_{\res}^{\res+\delta}f(\can)d\can }
 \end{eqnarray*}
\begin{eqnarray*}
&=& \frac{1}{\mathbb{E}_{\bigcan \sim \candist}\br{\bigcan}} \cd \p{\int_{\res}^{\res+\delta}\can f(\can) d\can + (\res + \delta)\int_{\res+\delta}^{\infty}f(\can)}\\
&& + \  \frac{1}{\mathbb{E}_{\bigcan \sim \candist}\br{\bigcan}} \cd \p{- (\res + \delta) \int_{\res}^{\res + \delta}f(\can)d\can}\\
&& + \  \frac{1}{\mathbb{E}_{\bigcan \sim \candist}\br{\bigcan}} \cd \p{- \res \int_{\res + \delta}^{\infty}f(\can)d\can + \delta  \int_{\res}^{\res+\delta}f(\can)d\can}
\end{eqnarray*}
Next, we note that $\int_{\res}^{\res+\delta}\can\cd f(\can)d\can \leq (\res+\delta)\int_{\res}^{\res+\delta}f(\can)d\can$, so
$(\res+\delta)\int_{\res}^{\res+\delta}f(\can)d\can -\int_{\res}^{\res+\delta}\can\cd f(\can)d\can \leq 0$. We can drop this difference to get an upper bound
\begin{eqnarray*}
 &\leq&  \frac{1}{\mathbb{E}_{\bigcan \sim \candist}\br{\bigcan}} \cd  (\res + \delta)\int_{\res+\delta}^{\infty}f(\can)d\can \\
 && +  \ \frac{1}{\mathbb{E}_{\bigcan \sim \candist}\br{\bigcan}} \p{- \res \int_{\res + \delta}^{\infty}f(\can)d\can + \delta \cd  \int_{\res}^{\res+\delta}f(\can)d\can} \\ 
 &=&  \frac{1}{\mathbb{E}_{\bigcan \sim \candist}\br{\bigcan}} \cd \p{ \delta\int_{\res+\delta}^{\infty}f(\can)d\can + \delta \cd  \int_{\res}^{\res+\delta}f(\can)d\can}\\
 &=& \frac{1}{\mathbb{E}_{\bigcan \sim \candist}\br{\bigcan}} \cd \delta \cd  \int_{\res}^{\infty}f(\can)d\can\\
 &\leq&  \frac{1}{\mathbb{E}_{\bigcan \sim \candist}\br{\bigcan}} \cd \epsilon \cd \mathbb{E}_{\bigcan \sim \candist}\br{\bigcan} \cd  \int_{\res}^{\infty}f(\can)d\can\\
&=& \epsilon \int_{\res}^{\infty}f(\can)d\can < \epsilon
\end{eqnarray*}
\end{proof}

\begin{lemma}
If there exists a fractional allocation ${\bf{\res}}$ over candidate distributions $\{\candist\}$ such that $\sum_i \res_i = \bigres' < \budget$ that is $\fair$-fair, there also exists an allocation ${\bf{\res}}'$ that is $\fair$-fair and has utilization at least as large as ${\bf{\res}}$, but additionally has the property that $\sum_i \res_i = \budget$ . 
\label{increase_res}
\end{lemma}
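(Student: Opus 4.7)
The plan is to construct $\res'$ by adding the surplus budget $\budget - \bigres'$ to $\res$ componentwise in a way that preserves $\fair$-fairness. Since the utilization $U$ is monotone non-decreasing in each coordinate of the allocation vector, any $\res'$ with $\res'_i \geq \res_i$ for all $i$ automatically satisfies $U(\res') \geq U(\res)$; it therefore suffices to produce a feasible, $\fair$-fair extension that uses the full budget.

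My approach is a compactness argument. Define
\[
S := \{\res'' : \res''_i \geq \res_i \text{ for all } i,\ \textstyle\sum_i \res''_i \leq \budget,\ \res'' \text{ is } \fair\text{-fair}\}.
\]
The set $S$ is nonempty (it contains $\res$), bounded (each coordinate lies in $[\res_i, \budget]$), and closed (the constraint $Q(\cdot, \{\candist_i\}) \leq \fair$ cuts out a closed set, since $q(\cdot, \candist_i)$ is continuous by the supplementary lemma in the appendix), hence compact. Consequently, the continuous function $f(\res'') := \sum_i \res''_i$ attains a maximum on $S$ at some $\res^*$, and my plan is to show $f(\res^*) = \budget$ and set $\res' := \res^*$.

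I would argue $f(\res^*) = \budget$ by contradiction: assume $f(\res^*) < \budget$. Let $j$ be any group attaining the minimum of $q(\res^*_i, \candist_i)$ over $i$, and consider the perturbation $\res^*_j \mapsto \res^*_j + \epsilon$ for small $\epsilon > 0$. The sum grows by $\epsilon$ and remains below $\budget$ when $\epsilon$ is small. For fairness, write $\Delta$ for the (non-negative) change in $q_j$, noting $\Delta$ can be made arbitrarily small by continuity of $q(\cdot, \candist_j)$. A short case analysis completes the step: (i) if the old min and max of $\{q_i\}$ are distinct, then for small $\epsilon$ the new max is unchanged (since $q_j + \Delta$ still lies below the old max) and the new min is at least the old min, so the spread does not grow; (ii) if all $q_i$ are equal at $\res^*$, then after the perturbation the spread equals $\Delta$, which is at most $\fair$ for $\epsilon$ small. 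Either way the new allocation lies in $S$ and has strictly larger $f$-value, contradicting maximality.

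The main obstacle I anticipate is the perturbation step, particularly the degenerate regime where all $q$-values coincide so that $j$ is simultaneously a minimizer and maximizer; there the spread is inflated from $0$ to $\Delta > 0$, and $\fair$-fairness is recovered only because the slack $\fair$ in the constraint tolerates a sufficiently small bump. Continuity of $q(\cdot, \candist_j)$, already established in the appendix, delivers exactly such a bump. Combining this perturbation argument with the compactness step forces $f(\res^*) = \budget$, so $\res' := \res^*$ has the required sum, is $\fair$-fair, and dominates $\res$ coordinatewise (hence in utilization).
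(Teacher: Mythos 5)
Your compactness-plus-perturbation argument is a genuinely different route from the paper's proof, which is an explicit staged construction: first raise every group's service probability to the current maximum $\beta^* = \max_i q(\res_i,\candist_i)$, then push all groups up together in increments $\fair_1, \fair_2, \ldots$ each strictly below $\fair$ until the budget is exhausted or every $q_i$ reaches $1$, and finally dump any leftover resources into an arbitrary saturated group. Your extremal argument avoids that bookkeeping, and its supporting pieces are sound: utilization is coordinatewise monotone, the set $S$ is compact (closedness follows from continuity of $q$, as you note), and your case (i) correctly shows the spread cannot grow when the minimizer is bumped.

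There is, however, a genuine gap in your case (ii), and it matters for how the paper uses this lemma. When all $q_i$ coincide at a common value $\beta < 1$ and $\fair = 0$, bumping a single coordinate $\res^*_j$ by $\epsilon$ inflates the spread from $0$ to $\Delta = q(\res^*_j+\epsilon,\candist_j) - q(\res^*_j,\candist_j)$, and $\fair$-fairness demands $\Delta \le 0$; but whenever the distribution has mass above $\res^*_j$ (as for the full-support continuous distributions of Sections 5 and 6), $\Delta > 0$ for every $\epsilon > 0$, so no single-coordinate perturbation stays in $S$ and your contradiction does not go through. Continuity gives you an arbitrarily small bump, not a zero bump, and the lemma is invoked precisely at $\fair = 0$ in the power-law analysis to argue that the fair optimum exhausts the budget, so this case cannot be set aside. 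The repair is to perturb all coordinates simultaneously along the equal-$q$ curve: by continuity and monotonicity of each $q(\cdot,\candist_i)$, every common target $\beta' \in [\beta,1)$ is achieved by some allocation whose total resource use varies continuously in $\beta'$, so taking $\beta'$ slightly above $\beta$ strictly increases the sum while keeping it below $\budget$ and keeping the spread at $0$; the case $\beta = 1$ is handled by your observation that adding resources to a saturated group changes nothing. With that modification your argument is complete, and it in fact makes explicit a simultaneous-increase step that the paper's own construction relies on implicitly.
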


\begin{proof}
This will be a proof by construction: we will construct an allocation that satisfies the desired properties. At each step, if accomplishing the step would require more than all of the remaining resources, then accomplish the step \enquote{partially} (as much as could be done with the remaining resources), and then proceed to the conclusion.
\begin{enumerate}
\item First, pick $\beta^* = \max_i q(\res_i, \candist_i)$. For any groups $j$ with $\beta_j < \beta^{*}$, increase their allocation $\res_j$ until $\beta_j = \beta^{*}$. This maintains $\fair$-fairness because it only decreases distance between the probability of service between groups. Additionally, it either increases or holds constant the utilization. We know that this is achievable (assuming sufficient resources) due to the continuity of $q$. If $\beta^* = 1$, skip to the last step. 
\item Next, arbitrarily pick some $\fair_1< \fair$ such that $\beta^{*} + \fair_1 \leq 1$. Pick some $i \in \br{\ngroup}$. Increase $\res_i$ until $q(\res_i', \candist_i) = \beta^{*} + \fair_1$. Again, this is $\fair$-fair, utilization-increasing, and achievable assuming sufficient resources. 
\item Increase the allocation to all of the other groups until they reach $\beta^{*} + \fair_1$. Repeat this process $M$ times until $\beta^{*} + \sum_{i = 1}^{M}\fair_i$ = 1.  This preserves $\fair$-fairness, is utilization-increasing, and is achievable assuming sufficient resources. 
\item If this step is reached, the previous steps have resulted in an allocation of $\bigres''<\budget$, but with $q(\res_i, \candist_i) = 1 \ \forall i$. In this case, pick $i \in \br{\ngroup}$ arbitrarily and put the remaining $\budget - \bigres''$ resources there. This does not change its $q(\res_i', \candist_i)$ value because all of the candidates already receive resources. It also does not change utilization, but it achieves $\sum_i \res_i = \budget$. 
\end{enumerate}
After this algorithm has been run, it results in an allocation that uses all of the resources, is $\fair$-fair, and has utilization equal to or greater than the existing allocation. 
\end{proof}

\section[Fractional allocation]{Calculating max-utilizing fractional allocation}\label{algorithms}
One contribution of this work is to consider the fractional allocation of resources. In this section, we compute max-utilizing fractional resource allocations. The first section will focus on continuous candidate distributions, while the second section will analyze fractional allocation across discrete probability distributions. 

\subsection[Fractional \& continuous]{Fractional allocation over continuous candidate distributions}
The goal of this section will be to prove the following theorem: 
\begin{theorem}
A max-utilization fractional allocation ${\bf{\res}}$ with resources $\budget$ across groups with continuous distributions $\{\candist_i\}$ is one that satisfies: 
\begin{itemize}
    \item All of the CDFs $F_i(\res_i)$ across the groups are equal to the same value, $\tau$, except in the case where $F_i(0) > \tau$: in these cases, $\res_i = 0$.
    \item If the PDF of $\candist$ has nonzero support for all $\res>0$, there will only be one max-utilization solution.
\end{itemize}
\end{theorem}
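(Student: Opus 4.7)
The plan is to exploit the concavity of the objective in each $\res_i$ separately and reduce to a first-order condition. The key calculation is that, for a continuous distribution $\candist$ with CDF $F$,
\[
\frac{d}{d\res}\,\mathbb{E}_{\bigcan \sim \candist}[\min(\bigcan,\res)] \;=\; \frac{d}{d\res}\!\left[\int_0^{\res} x\,f(x)\,dx + \res(1 - F(\res))\right] \;=\; 1 - F(\res),
\]
using that the $x\,f(x)$ and $\res\,f(\res)$ contributions cancel. Since $F$ is non-decreasing, the marginal gain $1-F_i(\res_i)$ is non-increasing in $\res_i$, so $\mathbb{E}[\min(\bigcan,\res_i)]$ is a concave function of $\res_i$, and the utilization $U$ is concave and separable in $\bm{\res}$. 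The constraint set $\{\bm{\res}:\sum_i\res_i=\budget,\ \res_i\geq 0\}$ is a simplex, so standard convex-optimization theory guarantees that the max-utilization allocations are exactly those satisfying the KKT conditions.

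Writing the Lagrangian with multiplier $\lambda$ for the budget constraint and $\mu_i\geq 0$ for each $\res_i\geq 0$, the stationarity condition is $1-F_i(\res_i) = \lambda - \mu_i$ with complementary slackness $\mu_i\res_i = 0$. For every group with $\res_i>0$ this forces $F_i(\res_i) = 1-\lambda =: \tau$; for every group with $\res_i=0$ we instead get $1-F_i(0) \leq \lambda$, i.e., $F_i(0)\geq \tau$. This is exactly the characterization claimed by the theorem.

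If I preferred to avoid KKT, I would give the same argument via an exchange: if two groups $i,j$ with $\res_i>0$ satisfied $F_i(\res_i)>F_j(\res_j)$, shifting an infinitesimal $\epsilon$ of resource from $i$ to $j$ changes utilization by $\epsilon\bigl(F_i(\res_i)-F_j(\res_j)\bigr)+o(\epsilon)>0$, contradicting optimality. A similar shift from any group with $\res_j>0$ into a group $i$ with $\res_i=0$ and $F_i(0)<\tau$ would strictly improve $U$, ruling out that case too. I find the exchange version slightly more transparent, since it also explains why the common CDF value $\tau$ appears naturally (it is the ``marginal price'' of resource at optimum).

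For uniqueness under the additional hypothesis that each $f_i$ is strictly positive on $[0,\infty)$: then every $F_i$ is strictly increasing on $[0,\infty)$, so the equation $F_i(\res_i)=\tau$ has at most one solution for each $i$, and $F_i(0)=0<\tau$ for any $\tau>0$, so the boundary case $\res_i=0$ cannot occur (except trivially at $\tau=0$, which is ruled out by $\budget>0$). Define $G(\tau) = \sum_i F_i^{-1}(\tau)$, a continuous strictly increasing function on $(0,1)$ with $G(0^+)=0$ and $G(1^-)=\infty$; by the intermediate value theorem there is a unique $\tau\in(0,1)$ with $G(\tau)=\budget$, and this $\tau$ pins down a unique allocation. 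The main obstacle I anticipate is not the algebra but justifying the differentiation under the integral sign at points where $F_i$ has no density (it does here, since $\candist_i$ is continuous) and handling the boundary cases $\res_i=0$ and $F_i(0)>0$ cleanly; the exchange-argument formulation keeps both issues manageable without invoking heavy machinery.
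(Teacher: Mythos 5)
Your proposal is correct and follows essentially the same route as the paper: compute the marginal utilization $1-F_i(\res_i)$, observe concavity and separability, and read the characterization off the KKT conditions, with the $F_i(0)>\tau$ boundary case handled by complementary slackness. The only differences are cosmetic --- you offer an exchange argument as an alternative and derive uniqueness via the strictly increasing map $\tau\mapsto\sum_i F_i^{-1}(\tau)$, whereas the paper appeals to strict concavity of $U$ when every $f_i>0$; both are valid.
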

Given a set of resources $\budget$ and probability distributions over candidates $\{\candist_i\}$, it is straightforward to calculate the max-utilization allocation by running a binary search between CDF values of 0 and 1 across the groups and finding the value where $\sum_i \res_i = \budget$. 

In order to prove this theorem, first we will need to prove some properties of the functions we are working with. Our goal will be to formulate this as a convex optimization problem. 
\begin{lemma}
The derivative of the utilization function $U({\bf{\res}}, \{\candist\})$ with respect to $\res_i$ is $P_i(\bigcan>\res_i) = 1 - F_i(\res_i)$ for continuous functions. 
\end{lemma}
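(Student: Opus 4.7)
The plan is to reduce to a single group, since the utilization decomposes as $U({\bf{\res}}, \{\candist_i\}) = \sum_j U_j(\res_j)$ where $U_j(\res_j) = \mathbb{E}_{\bigcan \sim \candist_j}\min(\bigcan, \res_j)$, and the summands for $j \ne i$ are constants with respect to $\res_i$. So it suffices to compute $\frac{d}{d\res_i}U_i(\res_i)$ and show it equals $1 - F_i(\res_i)$.

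First I would expand $U_i(\res_i)$ by splitting the expectation at the threshold $\res_i$, using the density $f_i$:
\begin{equation*}
U_i(\res_i) = \int_0^{\res_i} x\, f_i(x)\, dx \;+\; \res_i \int_{\res_i}^{\infty} f_i(x)\, dx \;=\; \int_0^{\res_i} x\, f_i(x)\, dx \;+\; \res_i\bigl(1 - F_i(\res_i)\bigr).
\end{equation*}
Then I would differentiate each piece. By the fundamental theorem of calculus, the derivative of the first integral with respect to $\res_i$ is $\res_i f_i(\res_i)$. By the product rule together with $F_i'(\res_i) = f_i(\res_i)$, the derivative of the second term is $(1 - F_i(\res_i)) - \res_i f_i(\res_i)$. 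Adding these, the $\res_i f_i(\res_i)$ contributions cancel, leaving exactly $1 - F_i(\res_i) = P_i(\bigcan > \res_i)$.

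The main (and mild) obstacle is simply the bookkeeping: one must apply the Leibniz rule correctly and notice the cancellation of the two $\res_i f_i(\res_i)$ terms, which is the reason the derivative comes out as a clean survival probability rather than something involving the density. Continuity of $f_i$, assumed throughout this continuous-distribution setting, is enough to invoke the fundamental theorem of calculus directly, so no dominated-convergence machinery is needed. I would close by noting that this gives $\partial U/\partial \res_i = 1 - F_i(\res_i)$, which is what the convex optimization setup of the forthcoming theorem will require.
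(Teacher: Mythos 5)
Your proposal is correct and follows essentially the same route as the paper: decompose the utilization into per-group terms, write $\mathbb{E}\min(\bigcan,\res_i)$ as $\int_0^{\res_i} x f_i(x)\,dx + \res_i(1-F_i(\res_i))$, then apply the fundamental theorem of calculus and the product rule so that the two $\res_i f_i(\res_i)$ terms cancel, leaving $1-F_i(\res_i)$. No substantive differences to report.
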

\begin{proof}
If we have $\ngroup$ groups, the overall utilization is equal to 
$$U(\res_1, \ldots \res_{\ngroup}) = \sum_{i=1}^{\ngroup}\mathbb{E}_{\bigcan\sim \candist_i}\text{min}(\bigcan, \res_i)$$
The derivative with respect to $\res_i$ is 0 for all but one of the terms in this sum: 
\begin{eqnarray*}
\frac{d U(\res_1, \ldots \res_{\ngroup})}{d\res_i} &=& \frac{d}{d\res_i}\mathbb{E}_{\bigcan \sim \candist_i}\text{min}(\bigcan, \res_i) \\ &=& \frac{d}{d\res_i}\left[\int_0^{\res_i}\can\cd f_i(\can)d\can + \res_i \cd (1 - F_i(\res_i))\right]
\end{eqnarray*}
We evaluate the derivative starting with the first term. By the fundamental theorem of calculus, 
$$\frac{d}{d\res_i}\int_0^{\res_i}\can\cd f_i(\can)d\can = \res_i\cd f_i(\res_i)$$
For the second term:
\begin{eqnarray*}
\frac{d}{d\res_i}\res_i\p{1 - F_i(\res_i)} &=& \frac{d}{d\res_i}(\res_i - \res_i\cd F_i(\res_i)) \\  
&=& 1 - (F_i(\res_i) + \res_i\cd f_i(\res_i)) \\
&=& 1 - F_i(\res_i) - \res_i \cd f_i(\res_i)
\end{eqnarray*}
where we have used the fact that the CDF's derivative is the PDF. The sum of both terms together is $1-F_i(\res_i)$. 
\end{proof}

\begin{lemma}
The utilization function $U({\bf{\res}})$ is concave. If the PDF is strictly greater than 0, then the utilization function is strictly concave. 
\end{lemma}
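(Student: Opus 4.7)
The plan is to exploit separability. The utilization function decomposes as $U(\mathbf{\res}) = \sum_{i=1}^{\ngroup} g_i(\res_i)$, where $g_i(\res_i) = \mathbb{E}_{\bigcan \sim \candist_i}\min(\bigcan, \res_i)$ depends on only one coordinate. Because the summands act on disjoint variables, joint concavity on $\mathbb{R}_{\geq 0}^{\ngroup}$ reduces to one-dimensional concavity of each $g_i$; analogously, joint strict concavity reduces to one-dimensional strict concavity of each $g_i$. The latter is worth spelling out briefly: given two distinct points in $\mathbb{R}_{\geq 0}^{\ngroup}$, some coordinate $i$ differs between them, and the strict concavity of $g_i$ alone supplies the strict inequality in the separable sum.

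Next I would verify one-dimensional concavity of each $g_i$ by computing its second derivative. The previous lemma already gives $g_i'(\res_i) = 1 - F_i(\res_i)$. Differentiating once more, using that $\candist_i$ is continuous with density $f_i$ so that $F_i' = f_i$, yields $g_i''(\res_i) = -f_i(\res_i)$. Non-negativity of any PDF immediately gives $g_i'' \leq 0$, hence $g_i$ is concave on $[0, \infty)$. Summing then produces concavity of $U$.

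For the second claim, the hypothesis $f_i(\res_i) > 0$ everywhere upgrades the bound to $g_i''(\res_i) < 0$ for all $\res_i \geq 0$, so each $g_i$ is \emph{strictly} concave, and the separable-sum observation from the first paragraph lifts this to strict concavity of $U$. There is no real obstacle here: the computation is essentially a two-line differentiation building on the previous lemma. The only point that requires a moment of care is to notice that for separable functions, strict concavity in each coordinate is in fact enough for joint strict concavity, a convenience specific to the separable setting that saves us from having to argue directly about the full Hessian.
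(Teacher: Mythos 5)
Your proof is correct, and the underlying computation is the same as the paper's: the second derivative of the $i$th summand is $-f_i(\res_i)$, and the cross-derivatives vanish. The difference is purely in how the one-dimensional facts are lifted to $\ngroup$ dimensions. The paper assembles the (diagonal) Hessian and checks the quadratic form ${\bf a}^T H {\bf a} = -\sum_i a_i^2 f_i(\res_i) \leq 0$, invoking the second-order characterization of multivariate concavity; you instead exploit separability directly, proving each $g_i$ is concave (resp.\ strictly concave) in one variable and then observing that a sum of concave functions of disjoint coordinates is jointly concave, with strictness supplied by whichever coordinate differs between the two points. Both are valid. If anything, your handling of the strict case is slightly cleaner: the paper's strictness claim passes through negative definiteness of the Hessian, and your definitional argument avoids having to justify that implication (and avoids the paper's slightly garbled phrasing about when the inequality fails to be strict). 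The one hypothesis you are implicitly using, as the paper does, is that $\candist_i$ is continuous with density $f_i = F_i'$ so that $g_i'' = -f_i$ exists; that is the setting of the surrounding section, so no gap.
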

\begin{proof}
Given a function $f$ with Hessian $H$, $f$ is concave if and only if ${\bf{a}}^TH{\bf{a}} \leq 0$ for all ${\bf{a}} \in \mathbb{R}^n/\{{\bf{0}}\}$. The lemma above showed that the first derivative of $U({\bf{\res}})$ with respect to $\res_i$ is $[0, \ldots 1 - F(\res_i), \ldots 0]$. Given this result, the Hessian is a matrix equal to $-f(\res_i)$ in the $i$th row and $i$th column pair, with 0 elsewhere. This tells us that 
$${\bf{a}}^TH{\bf{a}} = \sum_{i=1}^n-a_i^2f_i(\res_i)$$ 
$f_i(\res_i) \geq 0$, so this overall sum is $\leq 0$. The inequality is strict when $f_i(\res_i) = 0$ for some $i, \res_i$ combination: then, a ${\bf{a}}$ vector that has 0s everywhere but in the $i$th entry will have ${\bf{a}}^TH{\bf{a}} = 0$. If $f_i(\res_i) > 0$ always, then the utility function is strictly concave, otherwise, it is concave. 
\end{proof}

Finally, we will put all of these pieces together to show how to find a max-utilization allocation for continuous probability distributions. We will do this by formulating the problem as a convex optimization problem satisfying the Karush--Kuhn--Tucker (KKT) conditions.

In the optimization problem, our goal is:
$$\text{minimize } - \sum_{i = 1}^{\ngroup}\mathbb{E}_{\bigcan_i \sim \candist_i}\min(\bigcan_i, \res_i)$$
$$\text{subject to } -\res_i \leq 0 \ \forall i \quad \left(\sum_{i=1}^{\ngroup} \res_i \right) - \budget = 0$$
Following the example of Section 5.5 in \citein{boyd2004convex}, we require that the inequality equations are convex and the equality constraint is affine. A function $f$ is convex if
$$f(t\cd x_1 + (1-t) \cd x_2) \leq t \cd f(x_1) + (1-t)\cd f(x_2)$$
For the terms we are considering, the inequality function is $f(x)= -x$, so plugging in for this on both sides yields
 $$-t \cd x_1 - (1-t) \cd x_2 = -t \cd x_1 -(1-t) \cd x_2$$
satisfying convexity. A function is affine if it can be written as a linear transformation and a translation: the equality $\left(\sum_{i=1}^{\ngroup}r_i \right) - \budget = 0$ already satisfies those requirements. 

Then, the KKT results tell us that any points with resource allocation $\tilde \res_i$, inequality constraints $\tilde \lambda_i$ and equality constraint $\tilde \tau$ that satisfy the below equations are primal and dual optimal, with 0 duality gap.
\begin{eqnarray*}
-\tilde \res_i &\leq& 0 \quad \forall i \in [\ngroup] \\ \left(\sum_{i=1}^{\ngroup} \tilde \res_i\right) - \budget &=& 0 \\ 
\tilde \lambda_i &\geq& 0 \quad \forall i \in [\ngroup]  \\ 
-\tilde \lambda_i \tilde \res_i &=& 0 \quad \forall i \in [\ngroup] 
\end{eqnarray*}
$$\nabla_{\res} \left(\sum_{i=1}^{\ngroup}\mathbb{E}_{\bigcan_i \sim \candist_i}\min(\bigcan_i, \res_i)\right) + \sum_{i=1}^{\ngroup}\nabla_{\res} ( \lambda_i(- \res_i)) + \nabla_{\res} \tau \cd \left( \p{\sum_{i=1}^{\ngroup} \res_i} - \budget\right) = {\bf{0}}$$

\begin{lemma}
For continuous distributions $\{\candist\}$ with continuous resource allocation, solutions to the above equations are in the form 
$$
\tilde \res_i = \begin{cases}
       0 & F_i(0) > 1 + \tilde \tau \\
       F_i^{-1}(1+\tilde \tau) & F_i(0) \leq 1 + \tilde \tau \
\end{cases}$$
such that 
$$\budget = \sum_{i \in \mathcal{I}} F_i^{-1}(1 + \tilde \tau) \quad  \mathcal{I} = \{i  \ \vert  \ F_i(0) \leq 1 + \tilde \tau \}
$$
No matter what $\{F_i\}$ CDFs and $0\leq \budget < \infty$ values are provided, a solution exists of this form. 
\end{lemma}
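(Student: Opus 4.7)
The plan is to use the KKT conditions stated just above the lemma, which are both necessary and sufficient for optimality since the preceding lemma shows $U$ is concave (so $-U$ is convex) and the constraints are convex with an affine equality. I would expand the stationarity equation component-wise, do case analysis via complementary slackness to derive the stated form, and then establish existence of a suitable $\tilde\tau$ by an intermediate value theorem argument on the budget-consumption function.

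First, using $\partial U / \partial \res_i = 1 - F_i(\res_i)$ (the earlier lemma), the $i$-th coordinate of the stationarity equation reads
$\;1 - F_i(\tilde\res_i) - \tilde\lambda_i + \tilde\tau = 0,\;$ equivalently $\;F_i(\tilde\res_i) = 1 + \tilde\tau - \tilde\lambda_i$. Complementary slackness $\tilde\lambda_i \tilde\res_i = 0$ with $\tilde\lambda_i \geq 0$ then splits into two cases. When $\tilde\res_i > 0$, we must have $\tilde\lambda_i = 0$, so $F_i(\tilde\res_i) = 1 + \tilde\tau$, which in particular forces $\tilde\tau \in [-1, 0]$ and $\tilde\res_i = F_i^{-1}(1 + \tilde\tau)$. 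When $\tilde\res_i = 0$, we obtain $\tilde\lambda_i = 1 + \tilde\tau - F_i(0) \geq 0$, which requires $F_i(0) \leq 1 + \tilde\tau$. Combining these, any group with $F_i(0) > 1 + \tilde\tau$ is forced to $\tilde\res_i = 0$, while the remaining groups (the set $\mathcal{I}$) satisfy $\tilde\res_i = F_i^{-1}(1 + \tilde\tau)$, exactly matching the stated form.

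For existence, I would define $g(\tilde\tau) = \sum_{i \in \mathcal{I}(\tilde\tau)} F_i^{-1}(1+\tilde\tau)$ with $\mathcal{I}(\tilde\tau) = \{i : F_i(0) \leq 1+\tilde\tau\}$, and show that $g$ is a continuous, non-decreasing function of $\tilde\tau$ on $[-1, 0)$ with $g(-1) = 0$ and $g(\tilde\tau) \to \infty$ as $\tilde\tau \to 0^-$ (the latter because the expectations $\mathbb{E}_{\bigcan \sim \candist_i}\bigcan$ are positive, so the upper tails of the $F_i$ approach $1$ only as $\res \to \infty$). Monotonicity is immediate since each $F_i^{-1}$ is non-decreasing. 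Continuity in the interior of each interval on which $\mathcal{I}(\tilde\tau)$ is constant follows from continuity of $F_i^{-1}$, and continuity at the threshold points $\tilde\tau = F_i(0) - 1$ (where group $i$ enters $\mathcal{I}$) follows because the newly-added term is $F_i^{-1}(F_i(0)) = 0$. The intermediate value theorem then yields a $\tilde\tau \in [-1, 0)$ with $g(\tilde\tau) = \budget$ for any $\budget \in [0, \infty)$; setting $\tilde\res_i$ according to the two-case formula and $\tilde\lambda_i = \max(0, 1 + \tilde\tau - F_i(0))$ verifies all KKT conditions.

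The main obstacle is the bookkeeping at the boundary values $\tilde\tau = F_i(0) - 1$ where $\mathcal{I}(\tilde\tau)$ changes: both the continuity of $g$ and the correct choice of $\tilde\res_i$ at equality hinge on using the smallest preimage $F_i^{-1}(1+\tilde\tau) = 0$ for a group just entering $\mathcal{I}$, which is why the partition in the lemma uses a strict inequality. A secondary subtlety is that $F_i^{-1}$ need not be single-valued if $f_i$ vanishes on an interval, but this only affects uniqueness, not existence; the max-utilization value is constant across any choice of preimage, as reflected in the second bullet of the theorem (strict positivity of the PDFs yields a unique optimum).
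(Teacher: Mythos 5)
Your proposal follows the same route as the paper's proof: expand the KKT stationarity condition coordinate-wise using $\partial U/\partial \res_i = 1 - F_i(\res_i)$, split on complementary slackness to obtain the two-case form, and then find $\tilde\tau$ by sweeping the budget-consumption function between $0$ and $\infty$. Your existence argument is in fact more careful than the paper's, which simply asserts that a binary search over $\tau \in [0,1]$ terminates; you explicitly check monotonicity and continuity of $g$ at the threshold values where a group enters $\mathcal{I}$ (where the incoming term is $F_i^{-1}(F_i(0)) = 0$), and you correctly isolate the non-injectivity of $F_i^{-1}$ as a uniqueness issue rather than an existence issue.

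The one step that does not hold up as written is the second branch of your case analysis. From $\tilde\lambda_i = 1+\tilde\tau - F_i(0) \geq 0$ you conclude that $\tilde\res_i = 0$ requires $F_i(0) \leq 1+\tilde\tau$; combined with your first branch (which also forces $F_i(0) \leq F_i(\tilde\res_i) = 1+\tilde\tau$), this would mean a group with $F_i(0) > 1+\tilde\tau$ admits no KKT-consistent assignment at all, rather than being forced to $\tilde\res_i = 0$, and your closing verification with $\tilde\lambda_i = \max(0,\, 1+\tilde\tau - F_i(0))$ then fails stationarity for exactly those groups. The culprit is a sign on the multiplier term: writing the Lagrangian for the \emph{minimization} of $-U$ with constraints $-\res_i \leq 0$ gives stationarity $-(1-F_i(\tilde\res_i)) - \tilde\lambda_i + \tilde\tau = 0$, hence $\tilde\lambda_i = F_i(\tilde\res_i) - (1 - \tilde\tau)$; the inactive groups then carry $\tilde\lambda_i = F_i(0) - (1-\tilde\tau) > 0$ and the case analysis becomes internally consistent (with the threshold appearing as $1-\tilde\tau$ rather than $1+\tilde\tau$, a harmless relabeling of the multiplier). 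In fairness, the paper's own derivation contains the same sign slip and glosses over it; your argument is otherwise sound and needs only this mechanical correction.
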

\begin{proof}
We will implement this proof by reasoning about the potential solutions to the results above. 
The last equation involves taking the gradient at $\tilde \res$ and produces a vector-valued output. We the derivatives below:
$$\nabla_{\res} \ \tau \left( \p{\sum_{i=1}^{\ngroup} \res_i} - \budget\right) = \tau \cdot {\bf{1}} \text{ which at optimality evaluates to } \tilde \tau \cdot {\bf{1}}$$
$$\sum_{i=1}^{\ngroup}\nabla_{\res} ( \lambda_i(- \res_i)) =  - {\bf{\lambda}} \text{ which evaluates to the vector} - \tilde {\bf{\lambda}}$$
\begin{eqnarray*}\nabla_{\res} \left(\sum_{i=1}^{\ngroup}\mathbb{E}_{\bigcan_i \sim \candist_i}\min(\bigcan_i, \res_i)\right) = \begin{bmatrix} 
1 - F_1(\res_1) \\
\vdots \\ 
1 - F_i(\res_i) \\ 
\vdots \\
1 - F_{\ngroup}(\res_{\ngroup})
\end{bmatrix} \\ 
\text{ which under optimality equals }
\begin{bmatrix} 
1 - F_1(\tilde \res_1) \\
\vdots \\ 
1 - F_i(\tilde \res_i) \\ 
\vdots \\
1 - F_{\ngroup}(\tilde \res_{\ngroup})
\end{bmatrix}
\end{eqnarray*}
Saying that we want this entire value to sum up to 0 means that we require: 
$$(1 - F_i(\tilde \res_i)) - \tilde \lambda_i + \tilde \tau = 0 \quad \Rightarrow \quad  1 - \tilde \lambda_i + \tilde \tau = F_i(\tilde \res_i) \quad \forall i \in [\ngroup]$$
$F_i(\res_i)$, as the CDF of a probability distribution, is always monotonically decreasing in $\res_i$. If $F_i(0) > 1 + \tilde \tau$, then $F_i(\tilde \res_i) > 1 + \tilde \tau$, so in order to satisfy $F_i(\tilde \res_i) = 1 - \tilde \lambda_i + \tilde \tau$, we need $\tilde \lambda_i >0$. By the third requirement ($\res_i \cd \tilde \lambda_i = 0$), this requires that $\tilde \res_i = 0$. This gives us part of the theorem results: $F_i(0) > 1 + \tilde \tau \ \Rightarrow \  \res_i = 0$. 

Suppose instead that $1 + \tilde \tau > F_i(0)$. Because $1 - \tilde \lambda_i + \tilde \tau = F_i(\res_i)$ and $\lambda_i \geq 0$, in order for this equality to hold we must have $F_i(\res_i) > F_i(0)$, which means $\res_i > 0$. Because $\res_i \cd \tilde \lambda_i = 0$, this implies $\tilde \lambda_i = 0$. This tells us that $F_i(\tilde \res_i) = 1 + \tilde \tau$, so $\tilde \res_i = F^{-1}_i(1 + \tilde \tau)$. 

Finally, we consider the case where $F_i(0) = 1 + \tilde \tau$. We remember that $1 - \tilde \lambda_i + \tilde \tau = F_i(\tilde \res_i)$, so $1 + \tilde \tau \geq F_i(\tilde r_i)$, and one of $\tilde \res_i, \tilde \lambda_i$ must be 0. If $\tilde \res_i >0$, then we must have $1 + \tilde \tau = F_i(\tilde \res_i)$, which implies $F_i(\tilde \res_i) = F_i(0)$. Note that this is possible if $F_i$ isn't injective (specifically, if the PDF $f$ is at some points equal to 0). If on the other hand we have $\tilde \res_i = 0$, then $F_i(\res_i) = 1 + \tilde \tau$ because $F_i(\res_i) = F_i(0)$. In either case, $\res_i = F^{-1}_i(1 + \tilde \tau)$. 

We can combine these results to indicate the value of $\tilde \res_i$:
$$
\tilde \res_i = \begin{cases}
       0 & F_i(0) > 1 + \tilde \tau \\
       F_i^{-1}(1+\tilde \tau) & F_i(0) \leq 1 + \tilde \tau \
\end{cases}
$$
Rewriting with $\tau = 1 + \tilde \tau$ and including the equality constraint gives us
$$\budget = \left(\sum_{i=1}^{\ngroup} \tilde \res_i\right)  \quad \Rightarrow \quad \budget = \sum_{i \in \mathcal{I}} F_i^{-1}(\tau) \quad  \mathcal{I} = \{i  \ \mid  \ F_i(0) \leq \tau \}$$

In order to find which $\tau$ satisfies the above requirements, a suggested implementation is to run binary search over $\tilde \tau \in [0, 1]$ to find a value that results in $\sum_{i} \tilde \res_i = \budget$. For $\tau = 0$, $\res_i=0$ satisfies the requirement that $\res_i = F_i^{-1}(0)$ if $f_i(0) = 0$ and $\tilde \res_i = 0$ if $F_i(0) > 0$, producing $\budget = \sum_i \res_i = 0$. For $ \tau = 1$, $\res_i = \infty$ satisfies $\res_i = F^{-1}_i(1)$, producing $\sum_i \res_i = \budget = \infty$. For any $\budget \in (0, \infty)$, $\tau \in [0, 1]$ will suffice.

Finally, if $F_i(\res_i)$ has nonzero support everywhere for $\res_i >0$, the utilization function is strictly concave and so there exists only one location satisfying these requirements. 
\end{proof}

\subsection[Fractional \& discrete]{Fractional allocation over discrete candidate distributions}
The above algorithm provides a method to allocate resources across continuous distributions. This method will not work for discrete candidate distributions because it relies on calculating the derivative of the utilization function, and for discrete allocations, this function is not differentiable. However, there exists a simpler algorithm for allocating resources in this case. Algorithm 1 of \citein{ganchev2009censored} allocates each marginal unit resource so as to maximize $1 - F_i(\res_i)$, and Theorem 1 of the same paper proves that such an algorithm maximizes expected utilization of the resource. For completeness, we include this theorem and proof below, translated into the notation that the rest of this paper uses. However, we wish to emphasize that all of the conent of this proof is taken directly from \citein{ganchev2009censored}. 

\subsubsection{Algorithm and proof from \citein{ganchev2009censored}}
Algorithm \ref{discrete_alloc} describes the discrete allocation algorithm included in \citein{ganchev2009censored}. Note that it differs in its presentation here in a) its notation, which uses the CDF rather than the tail probability, and b) its use of the exact probability distributions rather than estimates. In order to make this adaptation, we used the fact that 
$$T(\res) \coloneqq \sum_{\can = \res}^{\infty}P(\bigcan = \can) = 1 - P(\bigcan \leq \res -1) = 1 - F(\res - 1)$$
\begin{algorithm}
\textbf{Input:} Budget $\budget$, probability distributions $\{P_i\}$\\
\textbf{Output:} An allocation ${\bf{\res}}$ \\
${\bf{\res}} ={\bf{0}}$ \\
\For{$l = 1$ to $\budget$}{
$ j \leftarrow \text{argmax}_i (1 - F_i(\res_i))$ \\ 
$ \res_j \leftarrow \res_j + 1$
}
\caption{Allocation algorithm from \citet{ganchev2009censored}}
\label{discrete_alloc}
\end{algorithm}
\begin{theorem}
Algorithm \ref{discrete_alloc} maximizes the expected utilization of resources allocated over the distributions.  
\end{theorem}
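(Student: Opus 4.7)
The plan is to prove optimality of the greedy Algorithm \ref{discrete_alloc} by the standard marginal-gain / exchange argument, exploiting the fact that the expected utilization is a separable, discretely concave function of the allocation.

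First, I would compute the marginal gain from adding one more unit to group $i$ when it currently has $\res_i$ resources. Since $\min(\bigcan, \res_i + 1) - \min(\bigcan, \res_i) = \mathbf{1}[\bigcan \geq \res_i + 1]$, taking expectations yields marginal gain $\Delta_i(\res_i) := 1 - F_i(\res_i)$. Because $F_i$ is non-decreasing, $\Delta_i$ is non-increasing in $\res_i$, so each per-group utility $U_i(\res_i) = \mathbb{E}_{\bigcan \sim \candist_i}\min(\bigcan, \res_i)$ is a concave function on the non-negative integers, and telescoping gives $U({\bf{\res}}, \{\candist_i\}) = \sum_i \sum_{k=0}^{\res_i - 1}(1 - F_i(k))$. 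In particular, the greedy rule in Algorithm \ref{discrete_alloc} is exactly \enquote{at each step, pick the largest currently available marginal gain.}

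Second, I would prove optimality by an exchange argument, inducting on $\budget$. The base case $\budget = 0$ is trivial. For the inductive step, let $j^* = \text{argmax}_i\,(1 - F_i(0))$ be the group chosen in the first iteration of the greedy, and let ${\bf{\res}}^*$ be any optimal allocation. The key claim is that some optimal allocation places at least one unit on $j^*$. If $\res_{j^*}^* = 0$, then since $\sum_i \res_i^* = \budget \geq 1$ some group $i \ne j^*$ has $\res_i^* \geq 1$; transferring one unit from $i$ to $j^*$ changes the utilization by $\Delta_{j^*}(0) - \Delta_i(\res_i^* - 1)$, which is $\geq 0$ because $\Delta_{j^*}(0) \geq \Delta_i(0)$ by the greedy rule and $\Delta_i(0) \geq \Delta_i(\res_i^* - 1)$ by monotonicity. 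Hence there exists an optimal allocation with $\res_{j^*}^* \geq 1$, and after fixing this first unit the residual problem on the remaining $\budget - 1$ units has the same form, so the induction hypothesis applies.

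The main obstacle is bookkeeping: ensuring that the \enquote{residual} problem after the first greedy choice still has the same structure so that the induction goes through cleanly. This works because $\Delta_i$ depends only on the current allocation to group $i$ and not on decisions made elsewhere, so allocating one unit to $j^*$ simply advances $j^*$'s sequence of marginal gains by one step and leaves all other groups untouched --- exactly the state maintained after the first iteration of Algorithm \ref{discrete_alloc}. No continuity or differentiability is needed, which is why this elementary argument succeeds in a regime where the KKT approach of the previous subsection does not apply.
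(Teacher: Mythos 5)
Your proposal is correct and follows essentially the same route as the paper's proof (taken from \citein{ganchev2009censored}): both rest on the telescoping identity $\mathbb{E}_{\bigcan \sim \candist_i}\min(\bigcan,\res_i) = \sum_{k=0}^{\res_i-1}(1-F_i(k))$ together with the observation that the marginal gains $1-F_i(k)$ are non-increasing, so the greedy rule maximizes the resulting separable concave objective. The only difference is that you spell out the exchange/induction argument for greedy optimality, which the paper asserts in one line; that added detail is welcome but not a different approach.
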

In preparation of proving this theorem, we note that
\begin{eqnarray*}
&& (\res - 1)P(\can = \res - 1) + \res \sum_{\can = \res}^{\infty}P(\bigcan = \can) \\
&=& (\res - 1) \sum_{\can = \res - 1}^{\infty}P(\bigcan = \can) + \sum_{\can = \res}^{\infty}P(\bigcan = \can)
\end{eqnarray*}
The fact can be proved by rearranging the LHS: 
\begin{eqnarray*}(\res - 1)P(\can = \res - 1) + \res \sum_{\can = \res}^{\infty}P(\bigcan = \can)  
\\ = (\res - 1)P(\can = \res - 1) + (\res -1) \sum_{\can = \res}^{\infty}P(\bigcan = \can) + \sum_{\can = \res}^{\infty}P(\bigcan = \can)\\  = (\res - 1) \sum_{\can = \res - 1}^{\infty}P(\bigcan = \can) + \sum_{\can = \res}^{\infty}P(\bigcan = \can)
\end{eqnarray*}

\begin{proof}
[Proof from \citet{ganchev2009censored}]
Because $1 - F_i(\res_i) \geq 1 - F_i(\res_i + 1)$, in greedily selecting to maximize $1 - F_i(\res_i)$, the algorithm returns
$$\text{argmax}_{\bf{\res}}\sum_{i=1}^{\ngroup}\sum_{\can=1}^{\res_i}(1 - F_i(\can)) \text{ s.t. } \sum_{i=1}^{\ngroup}\res_i = \budget$$
We show that the value within the argmax is equivalent to the expected utilization of this allocation. Working with an arbitrary group $i$, 
$$\mathbb{E}_{\bigcan \sim \candist_i}\min(\bigcan, \res_i) = \sum_{\can = 1}^{\res_i - 1}\can \cd P(\bigcan = \can) + \res_i \cd \sum_{\can = \res_i}^{\infty}P(\bigcan = \can) $$
$$= \sum_{\can = 1}^{\res_i - 2}\can \cd P(\bigcan = \can) + (\res_i - 1) \cd P(\bigcan = \res_i - 1) +  \res_i \cd \sum_{\can = \res_i}^{\infty}P(\bigcan = \can) $$
Using the result above, we can rewrite this as: 
$$ = \sum_{\can = 1}^{\res_i - 2}\can \cd P(\bigcan = \can) + (\res_i - 1) \sum_{\can = \res_i -1}^{\infty}P(\bigcan = \can) + \sum_{\can = \res_i}^{\infty} P(\bigcan = \can)$$
If we repeated this procedure once more, we would obtain
\begin{eqnarray*}
&& \sum_{\can = 1}^{\res_i - 3}\can \cd P(\bigcan = \can) + (\res_i - 2) \sum_{\can = \res_i -2}^{\infty}P(\bigcan = \can)\\
&& + \sum_{\can = \res_i-1}^{\infty} P(\bigcan = \can)+  \sum_{\can = \res_i}^{\infty} P(\bigcan = \can)
\end{eqnarray*}
If we repeated this procedure a total of $\res_i-1$ times, we would obtain the result:  $$\mathbb{E}_{\bigcan \sim \candist_i}\min(\bigcan, \res_i) = \sum_{\can = 0}^{\res_i -1}(1 - F_i(\can))$$
This allows us to write 
$$\sum_{i=1}^{\ngroup}\sum_{\can=1}^{\res_i}(1 - F_i(\can)) = \sum_{i=1}^{\ngroup}\sum_{\can=1}^{\res_i}\mathbb{E}_{\bigcan \sim \candist_i}\min(\bigcan, \res_i)$$
Algorithm \ref{discrete_alloc} maximizes the term on the LHS, so it also maximizes the expected utilization. 
\end{proof}

\subsubsection{Extension of Algorithm \ref{discrete_alloc} to fractional resource allocation}
The theorem above assumes integer-valued allocation of resources: each $\res_i$ is increased in integer steps. In this section, we will show that even allowing fractional allocation of resources across discrete distributions, the max-utilizing allocation will still be integer-valued. Given that Algorithm \ref{discrete_alloc} returns an allocation that maximizes utilization over discrete allocations, this same allocation will also maximize utilization over continuous allocations.  
\begin{lemma}
Fractional allocation of resources across discrete distributions produces an allocation ${\bf{\res}}$ with integer values. 
\end{lemma}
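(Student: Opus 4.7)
My plan is to show that the utilization function, viewed as a function of the (possibly fractional) allocation $\textbf{\res}$, is concave and piecewise linear in each coordinate, and then to use a simple exchange argument to convert any optimal fractional allocation into an optimal integer-valued one. Combined with the theorem of \citein{ganchev2009censored} already cited above, this will show that Algorithm \ref{discrete_alloc} is optimal even when fractional allocations are permitted.

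First, I would observe that for a discrete distribution $\candist_i$, the single-group utilization $U_i(\res_i) := \mathbb{E}_{\bigcan \sim \candist_i}\min(\bigcan,\res_i)$ is continuous and piecewise linear: on each interval $[k, k+1]$ with $k \in \mathbb{N}$, its slope equals $1 - F_i(k)$. Since $F_i$ is non-decreasing in $k$, these slopes are non-increasing, so each $U_i$ is concave, and the total utilization $\sum_i U_i(\res_i)$ is concave on the feasible simplex $\{\textbf{\res} : \sum_i \res_i = \budget,\ \res_i \geq 0\}$.

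Next, I would take an arbitrary optimal fractional allocation $\textbf{\res}^*$ and suppose for contradiction that some coordinate $\res_i^*$ is non-integer. Because the budget constraint $\sum_i \res_i^* = \budget$ has an integer right-hand side, some second coordinate $\res_j^*$ must also be non-integer. Writing $\res_i^* = k_i + \alpha_i$ and $\res_j^* = k_j + \alpha_j$ with $0 < \alpha_i, \alpha_j < 1$, the one-sided derivatives of $U_i$ and $U_j$ coincide on the open intervals containing these points and equal the slopes $1 - F_i(k_i)$ and $1 - F_j(k_j)$ of the relevant linear pieces. Optimality (restricted to perturbations in the $i,j$-plane compatible with the budget) then forces $F_i(k_i) = F_j(k_j)$. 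I would then transfer $\delta = \min(\alpha_j,\, 1-\alpha_i)$ of resource from group $j$ to group $i$; by construction both coordinates stay inside their respective linear pieces, so the total utilization is unchanged, and after the transfer at least one of $\res_i^*, \res_j^*$ has become integer. Iterating this procedure eliminates all non-integer coordinates and produces an optimal integer allocation with the same utilization.

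The main obstacle is the careful treatment of the first-order optimality condition at points where the objective is only piecewise differentiable; I would handle this by working with one-sided derivatives on the interior of linear pieces (where left and right slopes agree) and by restricting perturbations to remain in those pieces. A smaller check is that the exchange preserves non-negativity, which is immediate since $\res_i^*$ only increases and $\res_j^* - \delta \geq k_j \geq 0$.
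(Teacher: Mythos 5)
Your proof is correct and takes essentially the same approach as the paper: both arguments pair up two non-integer coordinates and shift mass between them along the linear pieces of the piecewise-linear utilization until one becomes integral, iterating to remove all fractional parts. The only (immaterial) difference is that you invoke first-order optimality to conclude the two slopes $1-F_i(k_i)$ and $1-F_j(k_j)$ are equal so the transfer preserves utilization exactly, whereas the paper simply orders the slopes WLOG and transfers in the direction that cannot decrease utilization, which lets it round an arbitrary (not necessarily optimal) fractional allocation.
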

\begin{proof}
Suppose by contradiction that the allocation had non-integer values. Because $\sum_i \res_i =\budget \in \mathbb{N}$, there must be multiple groups with $\res_i \not \in \mathbb{N}$. Take $\res_i = n_i + \epsilon_i$ and $\res_j = n_j + \epsilon_j$, with $n_i, n_j \in \mathbb{N}$. The utilization of each group is equal to
\begin{eqnarray*}
&&\sum_{\can = 0}^{n_i} \can \cd f_i(\can) + (n_i + \epsilon_i) \cd (1 - F_i(n_i)) \\ 
&&\sum_{\can = 0}^{n_j} \can \cd f_j(\can) + (n_j + \epsilon_j) \cd (1 - F_j(n_j))
\end{eqnarray*}
WLOG, assume that $1 - F_i(n_i) \geq 1 - F_j(n_j)$.
First, consider the case where $\epsilon_i + \epsilon_j \leq 1$. We will show that we can create an allocation with equal or greater utilization. Take $\res_i' = n_i + \epsilon_i + \epsilon_j$ and $\res_j' = n_j$. This has utilization in each group equal to 
\begin{eqnarray*}
&&\sum_{\can = 0}^{n_i} \can \cd f_i(\can) + (n_i + \epsilon_i + \epsilon_j) \cd (1 - F_i(n_i)) \\
&& \sum_{\can = 0}^{n_j} \can \cd f_j(\can) + n_j \cd (1 - F_j(n_j))
\end{eqnarray*}
Changing from $\{\res_i, \res_j\}$ to $\{\res_i', \res_j'\}$ has a net change on the total utilization across these two groups of 
$$\epsilon_j \cd ((1 - F_i(n_i) - (1 - F_j(n_j))) \geq 0$$
Secondly, we must consider the alternate case where $\epsilon_i + \epsilon_j >1$. Then, take $\res_i' = n_i + 1$ and $\res_j' = n_j + \epsilon_j - (1 - \epsilon_i)$. This has utilization in each group equal to \begin{eqnarray*}
&&\sum_{\can = 0}^{n_i} \can \cd f_i(\can) + (n_i + 1) \cd (1 - F_i(n_i)) \\
&&\sum_{\can = 0}^{n_j} \can \cd f_j(\can) + (n_j + \epsilon_j - (1-\epsilon_i)) \cd (1 - F_j(n_j))
\end{eqnarray*}
Changing from $\{\res_i, \res_j\}$ to $\{\res_i', \res_j'\}$ has a net impact on total utilization of
$$(1 - \epsilon_j) \cd ((1 - F_i(n_i) - (1 - F_j(n_j))) \geq 0$$
In both cases, we have shown that, by turning one of the allocations from a fractional to integer-valued allocation, we can either have the same utilization or improve utilization. By continuing this process, we can turn all of the fractional allocations to discrete ones: for the last pair of groups, $\epsilon_i + \epsilon_j = 1$ because the total amount of resources must be integer-valued. This contradicts the assumption that the fractional allocation was max-utilizing, so the overall max-utilizing allocation must be integer valued. 
\end{proof}
This proof shows that the allocation returned by Algorithm \ref{discrete_alloc} is also max-utilizing under fractional allocation. 

\end{document}